\newcommand*{\logeq}{%
	\mathrel{\ratio\colonsep\Longleftrightarrow}%
}
\newcommand{\sqsubsetdot}{\mathrel{\ooalign{$\sqsubset$\cr
  \hidewidth\hbox{$\cdot\mkern2mu$}\cr}}}
\newtheorem{lemma}{Lemma}[section]
\newtheorem{theorem}[lemma]{Theorem}
\theoremstyle{definition}
\newtheorem{definition}[lemma]{Definition}
\newtheorem{remark}[lemma]{Remark}
\newtheorem{hyp}[lemma]{Hypothesis}
\newcommand\restr[2]{{
  \left.\kern-\nulldelimiterspace 
  #1 
  \right|_{#2} 
  }}
\DeclareMathOperator{\subseq}{\preceq}
\DeclareMathOperator{\DFA}{DFA}
\DeclareMathOperator{\states}{\mathsf{states}}
\DeclareMathOperator{\bigO}{O}
\DeclareMathOperator{\bigo}{O}
\DeclareMathOperator{\npclass}{NP}
\DeclareMathOperator{\pclass}{P}
\DeclareMathOperator{\wclass}{W}
\DeclareMathOperator{\reg}{REG}
\DeclareMathOperator{\semi}{SLS}
\DeclareMathOperator{\emptyword}{\varepsilon}
\DeclareMathOperator{\REG}{REG}
\DeclareMathOperator{\poly}{poly}
\DeclareMathOperator{\vsn}{vsn}
\DeclareMathOperator{\interval}{interval}
\DeclareMathOperator{\lab}{label}
\DeclareMathOperator{\PatClass}{\mathcal{P}}
\newcommand{\simpleLSConstAlph}[1]{\PatClass^{k}_{\mathsf{s}, \mathsf{LC}}}
\DeclareMathOperator{\OV}{\textsf{OV}}
\newcommand{\gapsize}[1]{\mathsf{gapsize}(#1)}
\newcommand{\lang}[1]{\mathcal{L}(#1)}
\newcommand{\gapName}{\mathsf{gap}}
\newcommand{\gap}{\gapName}
\DeclareMathOperator{\size}{\mathsf{size}}
\newcommand{\matchProb}{\textsc{Match}}
\newcommand{\cliqueProb}{\textsc{Clique}}
\DeclareMathOperator{\ta}{\mathtt{a}}
\DeclareMathOperator{\tb}{\mathtt{b}}
\DeclareMathOperator{\tc}{\mathtt{c}}
\theoremstyle{plain}
\begin{document}

\title{Subsequences With Generalised Gap Constraints: Upper and Lower Complexity Bounds}

\author[1]{Florin Manea}
\author[1]{Jonas Richardsen}
\author[2]{Markus L.\ Schmid}

\affil[1]{Computer Science Department and CIDAS, Universität Göttingen, Germany, \texttt{j.richardsen@stud.uni-goettingen.de}, \texttt{florin.manea@cs.informatik.uni-goettingen.de}}
\affil[2]{Humboldt-Universit\"at zu Berlin, Germany, \texttt{MLSchmid@MLSchmid.de}}

\date{}

\maketitle

\begin{abstract}
For two strings $u, v$ over some alphabet $A$, we investigate the problem of embedding $u$ into $w$ as a subsequence under the presence of generalised gap constraints. A generalised gap constraint is a triple $(i, j, C_{i, j})$, where $1 \leq i < j \leq |u|$ and $C_{i, j} \subseteq A^*$. Embedding $u$ as a subsequence into $v$ such that $(i, j, C_{i, j})$ is satisfied means that if $u[i]$ and $u[j]$ are mapped to $v[k]$ and $v[\ell]$, respectively, then the induced gap $v[k + 1..\ell - 1]$ must be a string from $C_{i, j}$. This generalises the setting recently investigated in [Day et al., ISAAC 2022], where only gap constraints of the form $C_{i, i + 1}$ are considered, as well as the setting from [Kosche et al., RP 2022], where only gap constraints of the form $C_{1, |u|}$ are considered. 

We show that subsequence matching under generalised gap constraints is NP-hard, and we complement this general lower bound with a thorough (parameterised) complexity analysis. Moreover, we identify several efficiently solvable subclasses that result from restricting the interval structure induced by the generalised gap constraints.
\end{abstract}

\section{Introduction}

For a string $v = v_1 v_2 \ldots v_n$, where each $v_i$ is a single symbol from some alphabet $\Sigma$, any string $u = v_{i_1} v_{i_2} \ldots v_{i_k}$ with $k \leq n$ and $1 \leq i_1 < i_2 < \ldots < i_{k} \leq n$ is called a \emph{subsequence} (or \emph{scattered factor} or \emph{subword}) of $v$ (denoted by $u \subseq v$). This is formalised by the \emph{embedding} from the positions of $u$ to the positions of $v$, i.\,e., the increasing mapping $e : \{1, 2, \ldots, k\} \to \{1, 2, \ldots, n\}$ with $j \mapsto i_j$ (we use the notation $u \subseq_e v$ to denote that $u$ is a subsequence of $v$ via embedding~$e$). For example, the string $\ta \tb \ta \tc \tb \tb \ta$ has among its subsequences $\ta \ta \ta$, $\ta \tb \tc \ta$, $\tc \tb \ta$, and $\ta \tb \ta \tb \tb \ta$. With respect to $\ta \ta \ta$, there exists just one embedding, namely $1 \mapsto 1$, $2 \mapsto 3$, and $3 \mapsto 7$, but there are two embeddings for $\tc \tb \ta$. \par

This classical concept of subsequences is employed in many different areas of computer science: in formal languages and logics (e.\,g., piecewise testable languages~\cite{simonPhD,Simon72,KarandikarKS15,CSLKarandikarS,journals/lmcs/KarandikarS19,PraveenEtAl2024}, or subword order and downward closures~\cite{HalfonSZ17,KuskeZ19,Kuske20,Zetzsche16}), in combinatorics on words~\cite{RigoS15,FreydenbergerGK15,LeroyRS17a,Rigo19,Seki12,Mat04,Salomaa05,SchnoebelenVeron2023}, for modelling concurrency~\cite{Riddle1979a, Shaw1978, BussSoltys2014}, in database theory (especially event stream processing~\cite{ArtikisEtAl2017,GiatrakosEtAl2020,ZhangEtAl2014,KleestMeissnerEtAl2021,Kleest-MeissnerEtAl23,FrochauxKleestMeissner2023}). Moreover, many classical algorithmic problems are based on subsequences, e.\,g., {longest common subsequence} \cite{DBLP:journals/tcs/Baeza-Yates91} or {shortest common supersequence} \cite{Maier:1978} (see~\cite{AdamsonEtAl2023, FleischmannEtAl2023} and the survey \cite{surveyNCMA}, for recent results on string problems concerned with subsequences). Note that the longest common subsequence problem, in particular, has recently regained substantial interest in the context of fine-grained complexity (see~\cite{DBLP:conf/fsttcs/BringmannC18,BringmannK18,AbboudEtAl2015,AbboudEtAl2014}).\looseness=-1\par

In this paper, we are concerned with the following special setting of subsequences recently introduced in~\cite{DayEtAl2022}. If a string $u$ is a subsequence of a string $v$ via an embedding $e$, then this embedding $e$ also induces $|u|-1$ so-called \emph{gaps}, i.\,e., the (possibly empty) factors $v_{e(i) + 1} v_{e(i) + 2} \ldots v_{e(i + 1) - 1}$ of $v$ that lie strictly between the symbols where $u$ is mapped to. For example, $\ta \tc \tb \subseq_e \ta \tb \ta \tc \tb \tb \ta$ with $e$ being defined by $1 \mapsto 1$, $2 \mapsto 4$, and $3 \mapsto 6$ induces the gaps $\tb \ta$ and $\tb$. We can now restrict the subsequence relation by adding \emph{gap constraints} as follows. A string $u$ is accompanied by $|u|-1$ gap constraints $C_1, C_2, \ldots, C_{|u|-1} \subseteq \Sigma^*$, and $u$ is a valid subsequence of a string $v$ under these gap constraints, if $u \subseq_e v$ for an embedding $e$ that induces gaps from the gap constraints, i.\,e., the $i^{\text{th}}$ gap is in $C_i$. 

Such gap-constrained subsequences allow to model situations for which classical subsequences are not expressive enough. For example, if we model concurrency by shuffling together strings that represent threads on a single processor, then fairness properties of a scheduler usually imply that the gaps of these subsequences are not huge. Or assume that we compute an alignment between two strings by computing a long common subsequence. Then it is not desirable if roughly half of the positions of the common subsequence are mapped to the beginning of the strings, while the other half is mapped to the end of the strings, with a huge gap (say thousands of symbols) in between. In fact, an overall shorter common subsequence that does not contain such huge gaps seems to induce a more reasonable alignment (this setting is investigated in~\cite{AdamsonEtAl2023}). Another example is complex event processing: Assume that a log-file contains a sequence of events of the run of a large system. Then we might query this string for the situation that between some events of a job $A$ only events associated to a job $B$ appear (e.\,g., due to unknown side-effects this leads to a failure of job $A$). This can be modeled by embedding a string as a subsequence such that the gaps only contain symbols from a certain subset of the alphabet, i.\,e., the events associated to job $B$ (such subsequence queries are investigated in~\cite{KleestMeissnerEtAl2021,Kleest-MeissnerEtAl23,FrochauxKleestMeissner2023}).

In~\cite{DayEtAl2022}, two types of gap constraints are considered: Length constraints $C = \{w \in \Sigma^* \mid \ell \leq |w| \leq k\}$, and regular constraints where $C$ is just a regular language over $\Sigma^*$, as well as combinations of both. In a related paper, \cite{BoundedSubsequences}, the authors went in a slightly different direction, and were interested in subsequences appearing in bounded ranges, which is equivalent to constraining the length of the string occurring between the first and last symbol of the embedding, namely $v_{e(1) + 1} v_{e(i) + 2} \ldots v_{e(m) - 1}$. In this paper, we follow up on the work of~\cite{DayEtAl2022,BoundedSubsequences}, but significantly generalise the concept of gap constraints. Assume that $u \subseq_e v$. Instead of only considering the gaps given by the images of two consecutive positions of $u$, we consider each string $v_{e(i) + 1} v_{e(i) + 2} \ldots v_{e(j) - 1}$ of $v$ as a gap, where $i, j \in \{1, 2, \ldots, |u|\}$ with $i < j$ (note that these general gaps also might contain symbols from $v$ that correspond to images of $e$, namely $e(i + 1), e(i + 2), \ldots, e(j-1)$). For example, $\ta \tb \ta \tc \subseq_e \tb \ta \ta \tb \tb \tc \ta \tc \tc \ta \tb$ with $e$ defined by $1 \mapsto 2$, $2 \mapsto 5$, $3 \mapsto 7$ and $4 \mapsto 9$ induces the following gaps: The $(1, 2)$-gap $\ta \tb$, the $(2, 3)$-gap $\tc$, the $(3, 4)$-gap $\tc$, the $(1, 3)$-gap $\ta \tb \tb \tc$, the $(2, 4)$-gap $\tc \ta \tc$, and the $(1, 4)$-gap $\ta \tb \tb \tc \ta \tc$. In this more general setting, we can now add gap-constraints in an analogous way as before. For example, the gap constraint $C_{2, 4} = \{\ta, \tc\}^*$ for the $(2, 4)$-gap, the gap constraint $C_{1, 4} = \{w \in \Sigma^* \mid 3 \leq |w| \leq 5\}$ for the $(1, 4)$-gap and the gap constraint $C_{2, 3} = \{\tc^n \mid n \geq 1\}$ for the $(2, 3)$-gap. Under these gap-constraints, the embedding $e$ defined above is not valid: The gap constraints $C_{2, 4}$ and $C_{2, 3}$ are satisfied, but the $(1, 4)$-gap $\ta \tb \tb \tc \ta \tc$ is too long for gap constraint $C_{1, 4}$. However, changing $4 \mapsto 9$ into $4 \mapsto 8$ yields an embedding that satisfies all gap constraints. \looseness=-1

 \textbf{Our Contribution.} We provide an in-depth analysis of the complexity of the \emph{matching problem} associated with the setting explained above, i.\,e., for given strings $u, v$ and a set $\mathcal{C}$ of generalised gap-constraints for $u$, decide whether or not $u \subseq_e v$ for an embedding $e$ that satisfies all constraints in $\mathcal{C}$. We concentrate on two different kinds of constraints: semilinear constraints of the form $\{w \in \Sigma^* \mid |w| \in S\}$, where $S$ is a semilinear set, and regular constraints.\looseness=-1

In general, this matching problem is $\npclass$-complete for both types of constraints (demonstrating a stark contrast to the simpler setting of gap constraints investigated in~\cite{DayEtAl2022,BoundedSubsequences}), and this even holds for binary alphabets and if each semilinear constraint has constant size, and also if every regular constraint is represented by an automaton with a constant number of states. On the other hand, if the number of constraints is bounded by a constant, then the matching problem is solvable in polynomial-time, but, unfortunately, we obtain $\wclass[1]$-hardness even if the complete size $|u|$ is a parameter (also for both types of constraints).  An interesting difference in complexity between the two types of constraints is pointed out by the fact that for regular constraints the matching problem is fixed-parameter tractable if parameterised by $|u|$ and the maximum size of the regular constraints (measured in the size of a DFA), while for semilinear constraints this variant stays $\wclass[1]$-hard. 

We then show that structurally restricting the interval structure induced by the given constraints yields polynomial-time solvable subclasses. Moreover, if the interval structure is completely non-intersecting, then we obtain an interesting subcase for which the matching problem can be solved in time $O(n^\omega |\mathcal{C}|)$, where $O(n^\omega)$ is the time needed to multiply two  $n\times n$ Boolean matrices. We complement this result by showing that an algorithm with running time $\mathcal{O}(|w|^g|\mathcal{C}|^h)$ with $g + h < 3$ would refute the strong exponential time hypothesis. While this is not a tight lower bound, we wish to point out that, due to the form of our algorithm, which boils down to performing $O(|{\mathcal C}|)$ matrix multiplications, a polynomially stronger lower bound would have proven that matrix multiplication in quadratic time is not possible.\looseness=-1

\textbf{Related Work.} Our work extends~\cite{DayEtAl2022,BoundedSubsequences}. However, subsequences with various types of gap constraints have been considered before, mainly in the field of combinatorial pattern matching with biological motivations (see~\cite{BilleEtAl2012, LiW08,LiYWL12, IliopoulosEtAl2007} and~\cite{BaderEtAl2016,CaceresEtAl2020} for more practical papers).

\section{Preliminaries}

Let $\mathbb{N} = \{1, 2, \dots\}$, $\mathbb{N}_0 = \mathbb{N} \cup \{0\}$. For $m, n \in \mathbb{N}_0$ let $[m, n] = \{k \in \mathbb{N}_0 \mid m \leq k \leq n\} = \{m, \dots, n\}$ and $[n] = [1, n]$. For some alphabet $\Sigma$ and some length $n \in \mathbb{N}_0$ we define $\Sigma^n$ as the set of all words of length $n$ over $\Sigma$ (with $\Sigma^0$ only containing the empty word $\varepsilon$). Furthermore $\Sigma^* \colonequals \bigcup_{n \in \mathbb{N}_0} \Sigma^n$ is the set of all words over $\Sigma$. For some $w \in \Sigma^*$, $|w|$ is the length of $w$, $w[i]$ denotes the $i$-th character of $w$ and $w[i..j] \colonequals w[i] \dots w[j]$ is the substring of $w$ from the $i$-th to the $j$-th character (where $i, j \in [|w|], i \leq j$).

We use deterministic and nondeterministic finite automata (DFA and NFA) as commonly defined in the literature; as a particularity, for the sake of having succinct representations of automata, we allow  DFAs to be incomplete: given a state $q$ of a DFA and a letter $a$, the transition from $q$ with $a$ may be left undefined, which means that the computations of the DFA on the inputs which lead to the respective transition are not-accepting. For a DFA or NFA $A$, we denote by $\size(A)$ its total size, and by $\states(A)$ its number of states. Note that if $A$ is a DFA over alphabet $\Sigma$, then we have that $\size(A) = \bigO(\states(A) |\Sigma|)$.

A subset $L \subseteq \mathbb{N}$ is called \emph{linear}, if there are $m \in \mathbb{N}_0$ and $x_0 \in \mathbb{N}_0$, $x_1, \dots, x_m \in \mathbb{N}$, such that $ L = L(x_0; x_1, \dots, x_m) \colonequals \left\{ x_0 + \sum_{i=1}^m k_ix_i \,\middle\vert\, k_1, \dots, k_m \in \mathbb{N}_0\right\}\,.$
For $m = 0$, we write $L(x_0) = \{x_0\}$. We can assume without loss of generality that $x_i\neq x_j$ for $i\neq j, i,j\in [m]$. A set $S$ is \emph{semilinear}, if it is a finite union of linear sets (see also \cite{SemilinearSets}).

We assume that each integer involved in the representation of a linear set fits into constant memory (see our discussion about the computational model at the end of this section). Consequently, we measure the size of a linear set $L = L(x_0; x_1, \dots, x_m)$ as $\size(L) = m + 1$, and the size of a semilinear set $S = L_1 \cup L_2 \cup \ldots \cup L_k$ is measured as $\size(S) = \sum^k_{i = 1} \size(L_i)$. In other words, $\size(S)$ is the number of integers used for defining $S$.

\medskip

\noindent\textbf{Computational Model}:
For the complexity analysis of the algorithmic problems described in this paper we assume the \emph{unit-cost RAM model with logarithmic word size} (see \cite{cormen}). This means that for input size $N$, the memory words of the model can store $\log N$ bits. Thus, if we have input words of length $N$, they are over an alphabet which has at most $\sigma \leq N$ different characters, which we can represent using the \emph{integer alphabet} $\Sigma = [\sigma]$. As such, we can store each character within one word of the model. Then, it is possible to read, write and compare single characters in one unit time. 

\medskip

\noindent\textbf{Complexity Hypotheses}: Let us consider the \emph{Satisfiability problem for formulas in conjunctive normal form}, or \textsc{CNF-Sat} for short. Here, given a boolean formula $F$ in conjunctive normal form, i.e., $F = \{c_1, \dots, c_m\}$ and $c_i \subseteq \{v_1, \dots, v_n, \neg v_1, \dots, \neg v_n\}$ for variables $v_1, \dots, v_n$, it is to be determined whether $F$ is satisfiable. This problem was shown to be NP-hard \cite{CNFSAT}. By restricting $|c_i| \leq k$ for all $i \in [m]$ we obtain the problem of \textsc{$k$-CNF-Sat}. We will base our lower bound on the following algorithmic hypothesis:

\begin{hyp}[Strong Exponential Time Hypothesis (\textsf{SETH}) \cite{StrongETH}]
    \label{hyp_SETH}
    For any $\varepsilon > 0$, there exists a $k \in \mathbb{N}$, such that \textsc{$k$-CNF-Sat} cannot be solved in $\mathcal{O}(2^{n(1-\varepsilon)} \operatorname{poly}(m))$ time, where $\operatorname{poly}(n)$ is an arbitrary (but fixed) polynomial function.
\end{hyp}

The \emph{Clique problem}, \textsc{Clique}, asks, given a graph $G$ and a number $k \in \mathbb{N}$, whether $G$ has a $k$-clique. Hereby, a $k$-clique is a subset of $k$ pairwise adjacent vertices, i.e., there is an edge between any pair of vertices in the subset. Since \textsc{CNF-Sat} can be reduced to \textsc{Clique} \cite{KClique}, the latter is also NP-hard.

The \emph{$k$-Orthogonal Vectors problem}, $k$-$\OV$, receives as inputs $k$ sets $V_1, \dots, V_k$ each containing $n$ elements from $\{0, 1\}^d$ for some $d \in \omega(\log n)$, i.e., $d$-dimensional boolean vectors. The question is, whether one can select vectors $\vec{v}_i \in V_i$ for $i \in [k]$ such that the vectors are orthogonal: $\sum_{j = 1}^n \prod_{i = 1}^k \vec{v}_i[j] = 0$.
It is possible to show the following lemma (\cite{kOV,kOVBase}):

\begin{lemma}
    \label{lem_k_OV}
	$k$-$\OV$ cannot be solved in $n^{k-\varepsilon} \operatorname{poly}(d)$ time for any $\varepsilon > 0$, unless \textsf{SETH} fails.\looseness=-1
\end{lemma}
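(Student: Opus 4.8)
The plan is to establish the lemma via the standard ``split-and-list'' reduction from \textsc{CNF-Sat} to $k$-$\OV$, exploiting the fact that this reduction is completely \emph{oblivious} to the clause width: hence a sub-$n^{k}$ algorithm for $k$-$\OV$ (for the single fixed $k$ appearing in the statement) would collapse, for \emph{every} constant clause width $k'$, into an algorithm beating $2^{n}$ for \textsc{$k'$-CNF-Sat}, contradicting \textsf{SETH}.

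Concretely, I would assume towards a contradiction that for some $\varepsilon > 0$ there is an algorithm solving $k$-$\OV$ in time $n^{k-\varepsilon}\poly(d)$. Given a CNF formula $F = \{c_1, \dots, c_m\}$ over variables $v_1, \dots, v_n$, first partition the variable set into $k$ blocks $B_1, \dots, B_k$, each of size at most $\lceil n/k\rceil$. For every block $B_i$ and every assignment $\alpha$ of the variables in $B_i$, create a vector $\vec w_{i,\alpha} \in \{0,1\}^{m}$ with $\vec w_{i,\alpha}[j] = 0$ iff $\alpha$ sets at least one literal of $c_j$ to true (and $\vec w_{i,\alpha}[j] = 1$ otherwise), and let $V_i$ collect these vectors, so that $|V_i| =: N \le 2^{\lceil n/k\rceil}$. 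Constructing all the $V_i$ takes time $\bigO(kNm)$, which is dominated by the cost estimated below.

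The correctness observation is that full assignments of $v_1, \dots, v_n$ correspond bijectively to tuples $(\alpha_1, \dots, \alpha_k)$ of per-block assignments, and such a tuple yields an assignment satisfying $c_j$ exactly when some $\alpha_i$ makes a literal of $c_j$ true, i.e.\ exactly when $\prod_{i=1}^{k}\vec w_{i,\alpha_i}[j] = 0$; thus $F$ is satisfiable iff $V_1 \times \dots \times V_k$ contains an orthogonal tuple. To meet the dimension requirement $d \in \omega(\log N)$ from the definition of $k$-$\OV$, I would pad every vector with $\lceil(\log N)^2\rceil$ extra coordinates that are $0$ in all vectors; each such coordinate contributes $0$ to $\sum_j\prod_i \vec v_i[j]$, so orthogonality is preserved, and the resulting dimension $d$ is still polynomial in $n$ and $m$. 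Invoking the hypothetical $k$-$\OV$ algorithm then decides $F$ in time
\[
 N^{k-\varepsilon}\poly(d) \;\le\; 2^{\lceil n/k\rceil(k-\varepsilon)}\poly(n,m) \;=\; 2^{\,n(1-\varepsilon/k)}\cdot 2^{\bigO(1)}\cdot\poly(n,m) \;=\; 2^{\,n(1-\delta)}\poly(n,m),
\]
with $\delta \colonequals \varepsilon/k > 0$. Specialising to \textsc{$k'$-CNF-Sat} for an arbitrary constant $k'$ (where, after discarding variables occurring in no clause, $n = \bigO(m)$, so $\poly(n,m) = \poly(m)$), this contradicts \cref{hyp_SETH}.

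I do not expect a genuine obstacle here; the two points needing care are (i) reducing from \emph{unbounded}-width \textsc{CNF-Sat} rather than from a fixed \textsc{$k'$-CNF-Sat}, which is exactly what lets a single consequence refute \textsf{SETH} across all clause widths even though $k$ is fixed on the $\OV$ side, and (ii) respecting the technical condition $d \in \omega(\log N)$, which the harmless zero-padding handles. Non-divisibility of $n$ by $k$ only costs an additive $\bigO(1)$ in the exponent, which is absorbed into $\delta$.
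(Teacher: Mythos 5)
Your proposal is correct and is essentially the standard split-and-list argument of Williams that the paper itself relies on: the paper states this lemma with a citation to \cite{kOV,kOVBase} rather than proving it, and those references prove it exactly as you do (partition the variables into $k$ blocks, list all partial assignments as $0/1$-vectors indexed by clauses, and observe that an orthogonal $k$-tuple corresponds to a satisfying assignment, so an $n^{k-\varepsilon}\poly(d)$ OV-algorithm gives a $2^{n(1-\varepsilon/k)}\poly(m)$ CNF-Sat algorithm, refuting \textsf{SETH} for every clause width). Your handling of the two technical points (reducing from unbounded-width CNF-Sat and zero-padding to ensure $d\in\omega(\log N)$) is sound; the only cosmetic omission is that the sets $V_i$ may have slightly different sizes when $k\nmid n$, which is fixed by duplicating vectors to equalise them.
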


This lemma will later form the basis for the conditional lower bound in the case of non-intersecting constraints.

\section{Subsequences with Gap Constraints}\label{sec:gapConstraints} 

An \emph{embedding} is any function $e : [k] \to [\ell]$ for some $k, \ell \in \mathbb{N}$ with $k \leq \ell$, such that $e(1) < e(2) < \ldots < e(k)$ (note that this also implies that $1 \leq e(1)$ and $e(k) \leq \ell$). Let $\Sigma$ be some alphabet. For a string $v = v_1 v_2 \ldots v_n$, where $v_i \in \Sigma$ for every $i \in [n]$, any string $u = v_{i_1} v_{i_2} \ldots v_{i_k}$ with $k \leq n$ and $1 \leq i_1 < i_2 < \ldots < i_{k} \leq n$ is called a \emph{subsequence} (or, altternatively, \emph{scattered factor} or \emph{subword}) of $v$ (denoted by $u \subseq v$). Every embedding $e : [k] \to [|v|]$ with $k \leq |v|$ \emph{induces} the subsequence $u_e = v_{e(1)} v_{e(2)} \ldots v_{e(k)}$ of $v$. If $u$ is a subsequence of $v$ induced by an embedding $e$, then we denote this by $u \subseq_e v$; we also say that an embedding $e$ \emph{witnesses} $u \subseq v$ if $u \subseq_e v$. When embedding a substring $u[s..t]$ for some $s, t \in [|u|], s \leq t$, we use a partial embedding $e: [s, t] \to [n]$ and write $u[s..t] \preceq_e w$.\par
For example,
the string $\ta \tb \ta \tc \tb \tb \ta$ has among its subsequences $\ta \ta \ta$, $\ta \tb \tc \ta$, $\tc \tb \ta$, and $\ta \tb \ta \tb \tb \ta$. With respect to $\ta \ta \ta$, there exists just one embedding, namely $1 \mapsto 1$, $2 \mapsto 3$, and $3 \mapsto 7$, but there are two different embeddings for $\tc \tb \ta$. \par
Let $v \in \Sigma^*$ and let $e : [k] \to [|v|]$ be an embedding. For every $i, j \in [k]$ with $i < j$, the string $v$ and the embedding $e$ induces the $(i, j)$-\emph{gap}, which is the factor of $v$ that occurs strictly between the positions corresponding to the images of $i$ and $j$ under the embedding $e$, i.\,e., $\gap_{v, e}[i, j] = v[e(i)+1..e(j)-1]$. If $v$ and $e$ are clear from the context, we also drop this dependency in our notation, i.\,e., we also write $\gap[i, j]$.\par
As an example, consider $v = \ta \tb \tc \tb \tc \ta \tb \tc \ta \tb \ta \tc$ and $u = \ta \tc \ta \tb \ta$. There are several embeddings $e : [|u|] \to [|v|]$ that witness $u \subseq v$. Each such embedding also induces an $(i, j)$-gap for every $i, j \in [5]$ with $i < j$. For the embedding $e$ with $e(1) = 1$, $e(2) = 3$, $e(3) = 6$, $e(4) = 7$, $e(5) = 11$ (that satisfies $u \subseq_e v$), some of these gaps are illustrated below (note that $e$ is also indicated by the boxed symbols of $v$):
\begin{equation*}
v =  \boxed{\ta}\:\underbrace{\overbrace{\tb\:\boxed{\tc}\:\overbrace{\tb\:\tc}^{\gap[2, 3]}\:\boxed{\ta}}^{\gap[1, 4]}\:\boxed{\tb}\:\overbrace{\tc\:\ta\:\tb}^{\gap[4, 5]}}_{\gap[1, 5]}\:\boxed{\ta}\:\tc
\end{equation*}

A \emph{gap-constraint} for a string $u \in \Sigma^*$ is a triple $C = (i, j, L)$ with $1 \leq i < j \leq |u|$ and $L \subseteq \Sigma^*$. A gap-constraint $C = (i, j, L)$ is also called an \emph{$(i, j)$-gap-constraint}. The component $L$ is also called the \emph{gap-constraint language} of the gap-constraint $(i, j, L)$.
We say that a string $v$ and some embedding $e : [|u|] \to [|v|]$ satisfies the gap-constraint $C$ if and only if $\gap_{v, e}[i, j] \in L$. 

As an example, let us define some gap-constraints for the string $u = \ta \tc \ta \tb \ta$: $(1, 4, \Sigma^*)$, $(1, 5, \{w_1 \tc w_2 \tc w_3 \mid w_1, w_2, w_3  \in \Sigma^*\})$, $(2, 3, \{w \in \Sigma^* \mid |w| \geq 5\})$ and $(4, 5, \{w \in \Sigma^* \mid |w| \leq 4\})$. It can be easily verified that the gaps induced by the string $v$ and the embedding $e$ defined above (and illustrated by the figure) satisfy all of these gap constraints, except $(2, 3, \{w \in \Sigma^* \mid |w| \geq 5\})$ since $|\gap_{v, e}[2, 3]| = 2 < 5$. However, the embedding $e'$ defined by $e'(1) = 1$, $e'(2) = 3$, $e'(3) = 9$, $e'(4) = 10$ and $e'(5) = 11$  is such that $v$ and $e'$ satisfy all of the mentioned gap-constraints (in particular, note that $|\gap_{v, e'}[2, 3]| = 5$ and that $\gap_{v, e'}[4, 5] = \emptyword$).

A \emph{set of gap-constraints} for $u$ is a set $\mathcal{C}$ that, for every $i, j \in \mathbb{N}$ with $i < j$, may contain at most one $(i, j)$-gap-constraint for $u$. A string $v$ and some embedding $e : [|u|] \to [|v|]$ satisfy $\mathcal{C}$ if $v$ and $e$ satisfy every gap-constraint of $\mathcal{C}$.\par
For strings $u, v \in \Sigma^*$ with $|u| \leq |v|$, and a set $\mathcal{C}$ of gap-constraints for $u$, we say that $u$ is a \emph{$\mathcal{C}$-subsequence of $v$}, if $u \subseq_e v$ for some embedding $e$ such that $v$ and $e$ satisfy $\mathcal{C}$. We shall also write $u \subseq_{\mathcal{C}} v$ to denote that $u$ is a $\mathcal{C}$-subsequence of $v$.

\medskip

\noindent\textbf{The Matching Problem}: The central decision problem that we investigate in this work is the following matching problem, $\matchProb$, for subsequences with gap-constraints:
\begin{itemize}
	\item \emph{Input:} Two strings $w \in \Sigma^*$ (also called text), with $|w|=n$, and $p \in \Sigma^*$ (also called pattern), with $|p|=m\leq n$,  and a non-empty set $\mathcal{C}$ of gap-constraints.
	\item \emph{Question:} Is $p$ a $\mathcal{C}$-subsequence of $w$?
\end{itemize}

Obviously, for this matching problem it is vital how we represent gap-constraints $(i, j, L)$, especially the gap-constraint language $L$. Moreover, since every possible language membership problem ``$v \in L?$'' can be expressed as the matching problem instance $w = \# v \#$, $p = \#\#$ and $\mathcal{C} = (1, 2, L)$, where $\# \notin \Sigma$, we clearly should restrict our setting to constraints $(i, j, L)$ with sufficiently simple languages $L$. These issues are discussed next.

\medskip

\noindent\textbf{Types of Gap-Constraints}: A gap-constraint $C = (i, j, L)$ (for some string) is a

\begin{itemize}
\item \emph{regular constraint} if $L \in \REG$. We represent the gap-constraint language of a regular constraint by a deterministic finite automaton (for short, $\DFA$). In particular, $\size(C) = \size(L) = \size(A)$ and $\states(C) = \states(L) = \states(A)$.
\item \emph{semilinear length constraint} if there is a semi-linear set $S$, such that $L = \{w \in \Sigma^* \mid |w| \in S\}$. We represent the gap-constraint language of a semi-linear length constraint succinctly by representing the semilinear set $S$ in a concise way (i.\,e., as numbers in binary encoding). In particular, $\size(C) = \size(L) = \size(S)$.
\end{itemize}
For a set $\mathcal{C}$ of gap constraints, let $\size(\mathcal{C}) = \sum_{C \in \mathcal{C}} \size(C)$ and $\gapsize{\mathcal{C}} = \max\{\size(C) \mid C \in \mathcal{C}\}$. We have $\size({\mathcal C})\leq |{\mathcal C}|\gapsize{\mathcal C}$.

Obviously, $\{v \in \Sigma^* \mid |v| \in S\}$ is regular for any semilinear set $S$. However, due to our concise representation, transforming a simple length constraint into a semilinear length constraints, or transforming a semilinear length constraint into a $\DFA$ representation may cause an exponential size increase. 

By $\matchProb_{\reg}$ and $\matchProb_{\semi}$ we denote the problem $\matchProb$, where all gap constraints are regular constraints or semilinear length constraints, respectively.

For semilinear length constraints, we state the following helpful algorithmic observation.

\begin{lemma}\label{semilinearPreprocessLemma}
For a semilinear set $S$ and an $n \in \mathbb{N}$, we can compute in time $\bigo(n\size(S))$ a data structure that, for every $x \in [n]$, allows us to answer whether $x \in S$ in constant time. 
\end{lemma}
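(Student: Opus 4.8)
The plan is to build, for a semilinear set $S = L_1 \cup \dots \cup L_k$ with $L_t = L(x_0^{(t)}; x_1^{(t)}, \dots, x_{m_t}^{(t)})$, a Boolean array $B[0..n]$ such that $B[x] = 1$ iff $x \in S$, together with the obvious constant-time lookup. Since $\size(S) = \sum_{t=1}^k (m_t + 1)$, it suffices to fill in the contributions of each linear set $L_t$ in time $\bigo(n\,(m_t+1))$; summing over $t$ gives the claimed $\bigo(n\,\size(S))$ bound (initializing $B$ to all zeros costs $\bigo(n)$, which is absorbed since $\size(S) \geq 1$).

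For a single linear set $L = L(x_0; x_1, \dots, x_m)$, the key observation is that $L$ is exactly the set of integers reachable from $x_0$ by adding nonnegative multiples of the periods $x_1, \dots, x_m$. So I would run a dynamic-programming/BFS-style sweep: maintain the array $B$, set $B[x_0] = 1$ if $x_0 \le n$, and then for each period $x_i$ in turn, sweep $x$ from small to large over $[0,n]$ and set $B[x] \colonequals B[x] \vee B[x - x_i]$ whenever $x - x_i \ge 0$. After processing period $x_i$, the entries marked are precisely $\{x_0 + \sum_{\ell \le i} k_\ell x_\ell \mid k_\ell \in \mathbb{N}_0\} \cap [0,n]$; processing all $m$ periods this way yields exactly $L \cap [0,n]$. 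Each of the $m$ sweeps touches $\bigo(n)$ cells and does constant work per cell, for $\bigo(nm)$ per linear set, i.e.\ $\bigo(n(m+1))$ including the single write for $x_0$. One subtlety worth noting: if a period equals $0$ it can be discarded (it adds nothing), and, as remarked in the preliminaries, we may assume the periods are distinct, though distinctness is not actually needed for correctness or the running time here.

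The correctness argument is a straightforward induction on the number of periods already processed, using the order-of-sweep to guarantee that when we read $B[x - x_i]$ it already reflects all ways of reaching $x - x_i$ using period $x_i$ (as well as the earlier ones), which is exactly what we want since $k_i$ ranges over all of $\mathbb{N}_0$. I do not expect a genuine obstacle here; the only thing to be careful about is the left-to-right sweep direction (so that multiple uses of the same period $x_i$ are accounted for in one pass, analogously to the unbounded-knapsack recurrence) and the bookkeeping that the total work is $\sum_t \bigo(n(m_t+1)) = \bigo(n\,\size(S))$ rather than, say, $\bigo(n k \max_t m_t)$. Finally, the data structure answering ``$x \in S$?'' is just $B$ itself, with $\bigo(1)$ lookup, which matches the statement.
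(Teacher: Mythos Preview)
Your approach is correct and essentially the same as the paper's: both build the Boolean characteristic array of $S\cap[0,n]$ by dynamic programming over each linear set, achieving $\bigo(n\,\size(S))$ total work. The only difference is the loop order: the paper makes a single left-to-right pass over $[x_0,n]$ and, at each already-marked position, pushes forward by \emph{every} period $x_1,\dots,x_m$ at once, whereas you process the periods one at a time with an unbounded-knapsack-style sweep. Both orderings are standard and equally valid.

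One caution: your write-up speaks of a single array $B$ initialised once and then ``filling in the contributions of each linear set $L_t$'' into it. Taken literally, this is incorrect: once $B$ already records $L_1\cap[0,n]$, a sweep with a period $x_i^{(2)}$ of $L_2$ will propagate from those $L_1$-entries too, marking points of $L_1+x_i^{(2)}\mathbb{N}_0$ that need not lie in $S$. You must compute each $L_t\cap[0,n]$ in a \emph{fresh} auxiliary array and only then OR it into $B$ (this is exactly what the paper does), or otherwise ensure the sweeps for $L_t$ only propagate from $x_0^{(t)}$. With that clarification the argument is complete.
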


A similar result can be stated for regular constraints.
\begin{lemma}\label{regularPreprocessLemma}
For a regular language $L\subseteq \Sigma^*$, given by a DFA $A$, accepting $L$, and a word $w\in \Sigma^*$, of length $n$, we can compute in time $\bigo(n^2 \log \log n + \size(A))$ a data structure that, for every $i,j\in [n]$, allows us to answer whether $w[i..j] \in L$ in constant time. 
\end{lemma}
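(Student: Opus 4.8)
\textbf{Proof plan for \Cref{regularPreprocessLemma}.}
The goal is, given a DFA $A$ for $L$ and a word $w$ of length $n$, to build in $\bigo(n^2 \log\log n + \size(A))$ time a structure answering ``$w[i..j] \in L$?'' in $\bigo(1)$ time. The naive approach of running $A$ from every start position $i$ over the suffix $w[i..n]$ and tabulating which state is reached at each position $j$ would take $\bigo(n^2)$ time and $\bigo(n^2)$ space, but then to answer ``$w[i..j]\in L$?'' we would need to know, after reading $w[i..j]$ starting in the initial state $q_0$, whether we land in a final state; storing the reached state for every pair $(i,j)$ is $\Theta(n^2)$ words, which is fine, but the \emph{time} $\bigo(n^2)$ already exceeds what we may spend per pair only by a constant, so the bottleneck is really computing these $n^2$ transitions fast. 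The key idea is to avoid re-reading transitions one letter at a time: precompute, for each state $q$ and each position $i$, the state $\delta^*(q, w[i..n])$ is too much, but we only ever start in $q_0$. So define $r_i \colonequals \delta^*(q_0, w[1..i])$ (the run of $A$ on the prefixes of $w$), computable in $\bigo(n)$ total time after an $\bigo(\size(A))$ preprocessing to have $\delta$ available in $\bigo(1)$ per step. This handles prefixes $w[1..j]$ but not arbitrary substrings $w[i..j]$.

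The plan is therefore to use the standard trick of reducing substring-acceptance queries to a transition-monoid / jump-pointer computation. For each position $i \in [n]$ we need the function $f_i : Q \to Q \cup \{\bot\}$ with $f_i(q) = \delta^*(q, w[i..n])$ restricted along the way, but since queries fix $j$, what we actually want is the transition function of the block $w[i..j]$ applied to $q_0$. Concretely, I would compute a table $M[i][j] \in Q \cup \{\bot\}$ defined as $\delta^*(q_0, w[i..j])$ — no wait, the start state of a substring query is always $q_0$, so I claim it suffices to tabulate, for each $i$, the single run of $A$ starting in $q_0$ on $w[i..n]$: let $S[i][\ell] = \delta^*(q_0, w[i..i+\ell-1])$. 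Then ``$w[i..j] \in L$?'' is answered by checking whether $S[i][j-i+1]$ is a final state, in $\bigo(1)$. Building row $i$ of $S$ costs $\bigo(n)$ time (one transition per position, each $\bigo(1)$ after the $\bigo(\size(A))$ preprocessing of $\delta$), so all $n$ rows cost $\bigo(n^2)$ time and $\bigo(n^2)$ space. This already gives $\bigo(n^2 + \size(A))$, which is within the stated bound; the $\log\log n$ factor is the slack that lets one use, e.g., a slightly less careful array-indexing scheme or handle the integer-alphabet lookups, so I would not need the full $\bigo(n^2 \log\log n)$ budget, but matching the paper's statement only requires an upper bound, so $\bigo(n^2 + \size(A)) \subseteq \bigo(n^2\log\log n + \size(A))$ suffices.

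Wait — there is a subtlety I should address rather than gloss over: the $n^2 \log\log n$ in the statement strongly suggests the intended algorithm is \emph{not} the quadratic-time one above but something subtler, perhaps because the authors want the construction time to also bound a data structure of size $o(n^2)$, or because they reuse a known black-box result (e.g., a predecessor/least-common-ancestor style structure, or the result that the transition monoid of a DFA lets substring queries be answered after $\bigo(n\log\log n)$-type preprocessing on a suitably defined tree of prefixes). So in the writeup I would (i) first present the clean $\bigo(n^2 + \size(A))$ bound via the $S[i][\ell]$ table, which is clearly within the claimed bound and self-contained, and (ii) remark that the $\log\log n$ term is slack inherited from a more space-efficient variant that is not needed for our purposes. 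The main obstacle, then, is not the algorithm itself but making sure the bookkeeping is honest: that $\delta$ of the input DFA can be queried in $\bigo(1)$ after $\bigo(\size(A))$ preprocessing (true on the unit-cost RAM with the integer-alphabet assumption, since we can build a $|Q| \times |\Sigma|$ lookup table, and $\size(A) = \Theta(\states(A)\,|\Sigma|)$ covers exactly this), and that the incomplete-DFA convention (undefined transitions meaning rejection) is handled by propagating a sink symbol $\bot$ through $S$. I expect the only genuinely fiddly point to be confirming the model-of-computation details that make every per-step operation $\bigo(1)$; everything else is a direct dynamic program over positions.
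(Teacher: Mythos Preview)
Your high-level algorithm is the same as the paper's: for each start position $i$, simulate $A$ from its initial state along $w[i..n]$ and tabulate acceptance for every end position $j$. The issue is your claim that each transition lookup is $\bigo(1)$ after $\bigo(\size(A))$ preprocessing. You justify this by asserting $\size(A) = \Theta(\states(A)\,|\Sigma|)$, but the paper only states the upper bound $\size(A) = \bigo(\states(A)\,|\Sigma|)$, and in fact represents the (possibly incomplete) DFA as a list of transitions per state. For a sparse DFA the number of stored transitions can be far smaller than $|Q|\cdot|\Sigma|$, so allocating and initialising a full $|Q|\times|\Sigma|$ table is \emph{not} covered by the $\bigo(\size(A))$ budget; in the worst case you would pay $|Q|\cdot|\Sigma|$, which with $|Q|\le\size(A)$ and $|\Sigma|\le n$ could be as bad as $n\cdot\size(A)$.

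This is precisely why the $\log\log n$ factor is \emph{not} slack. The paper stores each state's outgoing transitions in a $y$-fast trie keyed by the letter, so that a single transition lookup costs $\bigo(\log\log|\Sigma|)\subseteq\bigo(\log\log n)$; the $n^2$ transitions needed for all $n$ runs then give the stated $\bigo(n^2\log\log n)$, and building the tries costs $\bigo(\size(A))$. So your writeup should either (i) use predecessor structures as the paper does and accept the $\log\log n$ factor, or (ii) explicitly invoke a technique that genuinely gives $\bigo(1)$ lookup within $\bigo(\size(A))$ preprocessing (e.g., the lazy/uninitialised-array trick on a $|Q|\times|\Sigma|$ table), rather than relying on the incorrect $\Theta$ claim.
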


\begin{remark}\label{sizeSemiLin}
For every instance $(p, w, \mathcal{C})$ of $\matchProb_{\semi}$, we assume that $\size(C) \leq |w|$ for every $C \in \mathcal{C}$. This is justified, since without changing the solvability of the $\matchProb_{\semi}$ instance, any semilinear constraint defined by some semilinear set $S$ can be replaced by a semilinear constraint defined by the semilinear set $S \cap \{0, 1, 2, \ldots, |w|\}$, which is represented by at most $|w|$ integers.\par
Moreover, we assume $\size(C) \leq |w|^2$ for every regular constraint $C$ for similar reasons. More precisely, a regular constraint defined by a DFA $M$ can be replaced by a constraint defined by a DFA $M'$ that accepts $\{w' \in L(M) \mid w' \text{ is a factor of $w$}\}$. The number of states and, in fact, the size of such a DFA $M'$ can be upper bounded by $|w|^2$. For instance, the DFA $M'$ can be the trie of all suffixes of $w$ (constructed as in \cite{DoriL06}), with the final states used to indicate which factors of $w$ are valid w.r.t. $C$.\par 
We emphasise that working under these assumptions allows us to focus on the actual computation done to match constrained subsequences rather than on how to deal with over-sized constraints.
\end{remark}

\medskip

\section{Complexity of the Matching Problem: Initial Results}\label{sec:complexity}

As parameters of the problem $\matchProb$, we consider the length $|p|$ of the pattern $p$ to be embedded as a subsequence, the number $|\mathcal{C}|$ of gap constraints, the gap description size $\gapsize{\mathcal{C}} = \max\{\size(C) \mid C \in \mathcal{C}\}$, and the alphabet size $|\Sigma|$. Recall that, by assumption, $\mathcal{C}$ is always non-empty, which means that neither $|\mathcal{C}|$ nor $\gapsize{\mathcal{C}}$ can be zero.

For any $\matchProb$-instance, we have that $|\mathcal{C}| \leq |p|^2$. Consequently, if $|p|$ is constant or considered a parameter, so is $|\mathcal{C}|$. This means that an upper bound with respect to parameter $|\mathcal{C}|$ also covers the upper bound with respect to parameter $|p|$, and a lower bound with respect to parameter $|p|$ also covers the lower bound with respect to parameter $|\mathcal{C}|$. Consequently, it is always enough to just consider at most one of these parameters.

For all possible parameter combinations, we can answer the respective complexity for $\matchProb_{\reg}$ and $\matchProb_{\semi}$ (both when the considered parameters are bounded by a constant, or treated as parameters in terms of parameterised complexity). 

From straightforward brute-force algorithms, we can conclude the following upper bounds (see the Appendix for details).

\begin{theorem}\label{UpperBoundsTheorem}
$\matchProb_{\reg}$ and $\matchProb_{\semi}$ can be solved in polynomial time for constant $|\mathcal{C}|$. Moreover, $\matchProb_{\reg}$ is fixed parameter tractable for the combined parameter $(|p|, \gapsize{\mathcal{C}})$.\looseness=-1
\end{theorem}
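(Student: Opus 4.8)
\emph{Polynomial time for constant $|\mathcal{C}|$.} The plan is to observe that if $|\mathcal{C}| = c$, then only the $2c$ positions of $p$ that appear as endpoints of some constraint in $\mathcal{C}$ actually ``interact'' with one another; all other positions of $p$ are constrained only by the implicit subsequence order. First I would enumerate all $O(n^{2c})$ ways of assigning the $\le 2c$ constrained positions of $p$ to positions of $w$. For each such partial assignment I check, in constant time per constraint using the data structures of Lemma~\ref{semilinearPreprocessLemma} (for $\matchProb_{\semi}$) and Lemma~\ref{regularPreprocessLemma} (for $\matchProb_{\reg}$) — built once in time $O(n\,\size(\mathcal C))$, resp.\ $O(n^2\log\log n + \size(\mathcal C))$ — whether every gap-constraint language contains the corresponding factor of $w$. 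It then remains to check that the partial embedding can be extended to a full embedding of $p$ into $w$: between two consecutive fixed positions $e(i)$ and $e(i')$ we must greedily embed the unconstrained infix $p[i{+}1..i'{-}1]$ into $w[e(i){+}1..e(i'){-}1]$, which is the classical subsequence test and runs in linear time; the infixes of $p$ before the first and after the last fixed position are handled the same way against the corresponding prefix/suffix of $w$. Accepting iff some assignment survives all checks gives total time $O(n^{2c}\cdot\poly(n)) + O(n^2\log\log n + \size(\mathcal C))$, which is polynomial for constant $c$. (One minor point: we must be careful that the ``general'' gaps may overlap fixed positions, but since all endpoints involved in constraints are among the enumerated positions, the induced gap $w[e(i){+}1..e(j){-}1]$ is fully determined by the assignment, so the check is still well-defined.)

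\emph{FPT for $\matchProb_{\reg}$ with parameter $(|p|,\gapsize{\mathcal C})$.} Here I would run a left-to-right dynamic program over positions of $w$. Since $|p| = m$ is a parameter, $|\mathcal C| \le m^2$ is too, so there are at most $m^2$ regular constraints, each a DFA with at most $s \colonequals \gapsize{\mathcal C}$ states. The idea is to process $w$ one symbol at a time and maintain, for the ``current'' candidate embedding prefix, the set of positions of $p$ already embedded together with, for each still-open constraint $(i,j,L)$ — i.e.\ one whose left endpoint $i$ has been embedded but whose right endpoint $j$ has not — the current state of the DFA for $L$ after reading the portion of the gap seen so far. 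A configuration is thus a pair $(i, \vec q)$ where $i\in[0,m]$ is the number of symbols of $p$ embedded and $\vec q$ records one DFA-state per open constraint; the number of open constraints is at most $\binom{m}{2}$ and each state ranges over $\le s$ values, so the number of configurations is bounded by $(m{+}1)\cdot s^{\binom m2}$, a function of the parameters only. Reading the next symbol $a$ of $w$, each configuration branches into (a) ``skip $a$'': advance every open DFA by $a$ (a computation may die if a transition is undefined, killing that branch); or (b) ``match $a$ to $p[i{+}1]$'' if $a = p[i{+}1]$: first advance every currently-open DFA by $a$; then, for constraints with right endpoint $i{+}1$, verify the DFA is now in an accepting state (else kill the branch) and close them; and open every constraint with left endpoint $i{+}1$, initialising its DFA to the start state. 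We accept iff, after scanning all of $w$, some reachable configuration has $i = m$ and no open constraints. Correctness is the standard ``the DFA state is a faithful summary of the gap read so far'' argument; the running time is $O(n)$ times the number of configurations times $O(m^2)$ work per transition, i.e.\ $f(m,s)\cdot\poly(n)$, which is FPT, after the $O(\size(\mathcal C))$ preprocessing to have the DFAs available.

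\emph{Main obstacle.} The routine bookkeeping is clear; the one genuine subtlety is that a general $(i,j)$-gap $w[e(i){+}1..e(j){-}1]$ may physically contain positions of $w$ that are themselves images of intermediate embedded positions $e(i{+}1),\dots,e(j{-}1)$, so the ``gap'' fed to the DFA for $(i,j,L)$ is not a contiguous block we can postpone — it must be accumulated symbol-by-symbol, including the matched symbols, which is exactly why the DFA-state-tracking formulation (rather than, say, interval-based preprocessing) is the right one for the FPT result. For the polynomial-time result the analogous subtlety is harmless precisely because fixing the $\le 2c$ endpoint positions pins down every relevant gap. I would therefore make sure the DP invariant explicitly states ``$\vec q$ is the tuple of DFA states obtained by feeding each open constraint's automaton the sequence of all symbols of $w$ strictly between its left endpoint's image and the current position, in order,'' and check the match/skip transitions preserve it; everything else is mechanical.
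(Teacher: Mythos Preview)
Your approach is essentially the paper's. The polynomial-time result for constant $|\mathcal{C}|$ is proved in the paper by exactly the enumeration-plus-greedy argument you describe for the semilinear case; for the regular case the paper actually derives it from the same product-NFA construction it uses for the FPT result, but your uniform treatment via Lemma~\ref{regularPreprocessLemma} is equally valid and arguably cleaner. The FPT argument is the paper's NFA construction (counter plus a tuple of active DFA states) recast as a dynamic program.

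There is, however, one concrete bug in your ``match'' transition. When you match $a = w[\ell]$ to $p[i{+}1]$ you write ``first advance every currently-open DFA by $a$; then, for constraints with right endpoint $i{+}1$, verify the DFA is now in an accepting state.'' But for a constraint $(i', i{+}1, L)$ the gap is $w[e(i'){+}1 .. e(i{+}1){-}1] = w[e(i'){+}1 .. \ell{-}1]$, which does \emph{not} include the symbol $w[\ell]$ being matched. Under your own invariant the DFA already holds the state after reading exactly that factor \emph{before} you advance; feeding it one more symbol tests membership of the wrong word. The fix is simply to reorder: for constraints closing at $i{+}1$ check acceptance first (without advancing), then advance the DFAs that remain open by $a$, then open the constraints with left endpoint $i{+}1$. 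This is a bookkeeping slip rather than a flaw in the method, but as written the algorithm is incorrect.
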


These upper bounds raise the question whether $\matchProb_{\reg}$ and $\matchProb_{\semi}$ are fixed parameter tractable for the single parameter $|p|$, or whether $\matchProb_{\semi}$ is at least also fixed parameter tractable for the combined parameter $(|p|, \gapsize{\mathcal{C}})$, as in the case of $\matchProb_{\reg}$. Both these questions can be answered in the negative by a reduction from the $\cliqueProb$ problem.\looseness=-1

For the $k$-$\cliqueProb$ problem, we get an undirected graph $G = (V, E)$ and a number $k \in [|V|]$, and we want to decide whether there is a clique of size at least $k$, i.\,e., a set $K \subseteq V$ with $|K| \geq k$ and, for every $u, v \in K$ with $u \neq v$, we have that $\{u, v\} \in E$. It is a well-known fact that $k$-$\cliqueProb$ is $\wclass[1]$-hard. We will now sketch a parameterised reduction from $k$-$\cliqueProb$ to $|p|$-$\matchProb_{\semi}$ and to $|p|$-$\matchProb_{\reg}$ (see the appendix for details).

Let $G = (V, E)$ with $|V| = n$ be a graph represented by its adjacency matrix $A = (a_{i,j})_{1 \leq i, j \leq n}$ (we assume that $a_{i,i} = 1$ for every $i \in [n]$), and let $k \in [|V|]$. It can be easily seen that $G$ has a $k$-clique, if the $k \times k$ matrix containing only 1's is a principal submatrix of $A$, i.e., a submatrix where the set of remaining rows is the same as the set of remaining columns. This can be described as embedding $ p =01^{k^2} 0$ as a subsequence into $w = 0 a_{1,1} a_{1,2} \cdots a_{1,n} a_{2,1} \cdots a_{2,n} \cdots a_{n,1} \cdots a_{n,n} 0$. However, the corresponding embedding $e$ must be such that the complete $i^{th}$ $(1^{k})$-block of $p$ is embedded in the same $a_{s_i, 1} a_{s_i, 2} \cdots a_{s_i, n}$ block of $w$, for some $s_i$. Furthermore, for every $i \in [k]$, the first $1$ of the $i^{th}$ $(1^{k})$-block must be mapped to the $(s_1)^{\text{th}}$ $1$ of $a_{s_i, 1} a_{s_i, 2} \cdots a_{s_i, n}$, the second $1$ of the $i^{th}$ $(1^{k})$-block must be mapped on the $(s_2)^{\text{th}}$ $1$ of $a_{s_i, 1} a_{s_i, 2} \cdots a_{s_i, n}$, and so on. In other words, $1 \leq s_1 < s_2 < \ldots < s_k \leq n$ are the rows and columns where we map the `all-$1$'-submatrix; thus, $\{v_{s_1}, v_{s_2}, \ldots, v_{s_k}\}$ is the clique. Obviously, we have to use the semilinear length constraints to achieve this synchronicity. \looseness=-1

We first force $e(1) = 1$ and $e(k^2 + 2) = n^2 + 2$ by constraint $(1, k^2 + 2, L(n^2))$. In the following, we use $(i, j)_k$ and $(i, j)_n$ to refer to the position of the entries in the $i$-th row and $j$-th column of the flattened matrices in $p$ or $w$ respectively (e.\,g. $w[(i, j)_n] = a_{ij}$). In order to force that $e((i, i)_k) = (s_i, s_i)_n$ for every $i \in [k]$ and some $s_i \in [n]$, we use constraints $(1, (i, i)_k, L(0; n + 1))$, $i \in [k]$ (i.\,e., the first $0$ of $p$ is mapped to the first $0$ of $w$, and then we allow only multiples of $n + 1$ between the images of $(i, i)_k$ and $(i+1, i+1)_k$). Next, we establish the synchronicity between the columns by requiring that the gap between $e((i, j)_k)$ and $e((i+1, j)_k)$ has a size that is one smaller than a multiple of $n$, which is done by constraints $((i, j)_k, (i+1, j)_k, L(n-1; n))$, $i \in [k-1]$, $j \in [k]$. Finally, the constraints $((i, 1)_k, (i, k)_k, \{0\} \cup [n-1])$, $i \in [k]$, force all $e((i, 1)_k), e((i, 2)_k), \ldots, e((i, k)_k)$ into the same block $a_{s_i, 1} a_{s_i, 2} \cdots a_{s_i, n}$. Note that in order to show the last step, we also have to argue with the previously defined constraints for synchronising the columns (more precisely, we have to show that the $e((i, 1)_k), \ldots, e((i, k)_k)$ cannot overlap from one row in the next one).

This is a valid reduction from $k$-$\cliqueProb$ to $|p|$-$\matchProb_{\semi}$ (note that $|p| = k^2 + 2$). We can strengthen the reduction in such a way that all constraints have even constant size (note that the constraints $((i, 1)_k, (i, k)_k, \{0\} \cup [n-1])$ are the only non-constant sized constraints, since $\{0\} \cup [n-1]$ is a semilinear set of size $n$). To this end, we observe that for fixed $s_1$ and $s_k$, all gap sizes $e((i, k)_k) - e((i, 1)_k)$ are the same, independent from $i$, namely $s_k - s_1$. Thus, we can turn the reduction into a Turing reduction by guessing this value $d = s_k - s_1$ and then replace each \emph{non-constant} $((i, 1)_k, (i, k)_k, \{0\} \cup [n-1])$ by $((i, 1)_k, (i, k)_k, L(d))$. 

In order to obtain a reduction to $\matchProb_{\reg}$, we can simply represent all semilinear constraints as regular constraints. Obviously, the corresponding DFAs are not of constant size anymore. In summary, this yields the following result.

\begin{theorem}\label{cliqueTheorem}
$\matchProb_{\semi}$ parameterised by $|p|$ is $\wclass[1]$-hard, even for constant $\gapsize{\mathcal{C}}$ and binary alphabet $\Sigma$, and $\matchProb_{\reg}$ parameterised by $|p|$ is $\wclass[1]$-hard, even for binary alphabet $\Sigma$.
\end{theorem}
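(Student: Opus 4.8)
The plan is to carry out in full detail the reduction that the excerpt has already sketched, namely a parameterised (Turing) reduction from $k$-$\cliqueProb$ to $|p|$-$\matchProb_{\semi}$ with constant $\gapsize{\mathcal{C}}$ and binary alphabet, and then derive the $\matchProb_{\reg}$ statement as a corollary by converting every semilinear length constraint into an equivalent DFA. Concretely, given $(G,k)$ with $|V|=n$ and adjacency matrix $A$ (with $a_{i,i}=1$), I set $p = 0\,1^{k^2}\,0$ and $w = 0\,a_{1,1}\cdots a_{1,n}\,a_{2,1}\cdots a_{2,n}\,\cdots\,a_{n,1}\cdots a_{n,n}\,0$, so $|p| = k^2+2$ depends only on $k$ and $|w| = n^2+2$. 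I introduce coordinate shorthands $(i,j)_k$ and $(i,j)_n$ for the flattened positions of matrix entries in $p$ and $w$, and I define the constraint set $\mathcal{C}$ exactly as listed: the anchoring constraint $(1,\,k^2+2,\,L(n^2))$; the diagonal-synchronisation constraints $(1,\,(i,i)_k,\,L(0;n+1))$ for $i\in[k]$; the column-synchronisation constraints $((i,j)_k,\,(i+1,j)_k,\,L(n-1;n))$ for $i\in[k-1]$, $j\in[k]$; and the row-confinement constraints $((i,1)_k,\,(i,k)_k,\,\{0\}\cup[n-1])$ for $i\in[k]$. For the Turing reduction to constant $\gapsize{\mathcal{C}}$, I additionally guess a value $d\in[0,n-1]$ and replace each row-confinement constraint by $((i,1)_k,\,(i,k)_k,\,L(d))$, producing one instance per value of $d$; the answer is ``yes'' iff at least one of these $n$ instances is a yes-instance.

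The correctness argument splits into the two standard directions. For the forward direction, assuming $G$ has a $k$-clique $\{v_{s_1},\dots,v_{s_k}\}$ with $1\le s_1<\dots<s_k\le n$, I exhibit the embedding $e$ that sends the first $0$ of $p$ to position $1$ of $w$, sends the $\ell$-th $1$ of the $i$-th $1^k$-block to position $(s_i,s_\ell)_n$, and sends the last $0$ of $p$ to position $n^2+2$; I then verify directly that each of the four families of constraints is satisfied (using $a_{s_i,s_\ell}=1$ for all $i,\ell$, which holds because the $s_i$ form a clique and $a_{i,i}=1$), and for the constant-size variant I take the guess $d = s_k - s_1$. For the backward direction I argue that any valid embedding $e$ must have this rigid ``block'' form: the anchor forces $e(1)=1$ and $e(k^2+2)=n^2+2$; the $L(0;n+1)$ constraints force each diagonal symbol $(i,i)_k$ to land at some $(s_i,s_i)_n$ with the $s_i$ strictly increasing and inside $[1,n]$; the $L(n-1;n)$ column constraints then propagate so that position $(i,j)_k$ lands at row $\ge$ something of column $s_j$; and the row-confinement constraints $\{0\}\cup[n-1]$ (resp.\ $L(d)$) force $e((i,1)_k),\dots,e((i,k)_k)$ to all lie within a single length-$n$ block $a_{s_i,1}\cdots a_{s_i,n}$, hence $e((i,j)_k)=(s_i,s_j)_n$; since every $1$ of $p$ is mapped onto a $1$ of $w$, this yields $a_{s_i,s_j}=1$ for all $i,j$, i.e.\ a $k$-clique. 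Finally I check the parameterised-reduction bookkeeping: $|p|=k^2+2$ is a function of the parameter $k$ only, the construction (for each guess) runs in polynomial time, and a Turing reduction with polynomially many calls preserves $\wclass[1]$-hardness.

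The $\matchProb_{\reg}$ part is then immediate: replace each semilinear length constraint $(i,j,\{w : |w|\in S\})$ by $(i,j,A_S)$ where $A_S$ is a DFA accepting $\{w\in\Sigma^* : |w|\in S\}$; such a DFA exists (it is essentially a ``counting'' automaton whose size is polynomial in $\max S$ and hence in $n$, using Remark~\ref{sizeSemiLin} to keep everything bounded by $|w|$), the alphabet stays binary, and the same instance-by-instance equivalence holds. This drops the ``constant $\gapsize{\mathcal{C}}$'' claim — as the excerpt notes, the DFAs are no longer of constant size — but retains $\wclass[1]$-hardness parameterised by $|p|$ over a binary alphabet.

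I expect the main obstacle to be the backward direction's rigidity argument, specifically proving that the $e((i,1)_k),\dots,e((i,k)_k)$ for consecutive $i$ cannot ``spill over'' from one row block into the next — i.e.\ showing that the column-synchronisation constraints $L(n-1;n)$ together with the diagonal constraints genuinely pin down that $e((i,j)_k)$ is in block $s_i$ and column $s_j$, rather than merely congruent to the right thing modulo $n$. The delicate point is an inductive/counting argument bounding the total displacement: once the endpoints of row $i$ are confined to a window of length $< n$ (by the row-confinement constraint) and the diagonal entry of row $i$ and row $i+1$ are separated by a multiple of $n+1$, the mod-$n$ column constraints force the window for row $i$ to be exactly the block $a_{s_i,1}\cdots a_{s_i,n}$, with no slack; writing this carefully, and handling the interaction of all four constraint families simultaneously, is where the real work lies. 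Everything else (the forward construction, the size/time bookkeeping, the DFA conversion) is routine.
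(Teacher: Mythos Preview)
Your proposal is correct and follows essentially the same approach as the paper's proof: the same encoding of the adjacency matrix into $p$ and $w$, the same four families of semilinear constraints (anchor, diagonal, column, row), the same Turing reduction over the guess $d=s_k-s_1$ to make $\gapsize{\mathcal{C}}$ constant, and the same DFA conversion for $\matchProb_{\reg}$. The backward-direction rigidity argument you flag as the main obstacle is indeed the crux, and the paper handles it in the order you anticipate: first the diagonal constraints pin $e((i,i)_k)=(s_i,s_i)_n$, then the column constraints yield $C^{(e)}_{ij}=s_j$ for all $i$, and finally the row-confinement bound $e((i,k)_k)-e((i,1)_k)\le n$ combined with the already-established column values gives the inequality $n\ge n\bigl(R^{(e)}_{ij'}-R^{(e)}_{ij}\bigr)+(s_{j'}-s_j)$ with $s_{j'}-s_j>0$, forcing $R^{(e)}_{ij}=R^{(e)}_{ij'}$.
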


The lower bound for $\matchProb_{\reg}$ is weaker, since the parameter $\gapsize{\mathcal{C}}$ is not constant in the reduction. At least for a reduction from $k$-$\cliqueProb$, this is to be expected, due to the fact that $\matchProb_{\reg}$ is fixed parameter tractable with respect to the combined parameter $(|p|, \gapsize{\mathcal{C}})$. So this leaves one relevant question open: Can $\matchProb_{\reg}$ be solved in polynomial time, if the parameter $\gapsize{\mathcal{C}}$ is bounded by a constant? We can answer this in the negative by a reduction from a variant of the SAT-problem, which we shall sketch next (see the Appendix for full details).

For the problem $1$-in-$3$-3SAT, we get a set $A = \{x_1, x_2, \ldots, x_n\}$ of variables and clauses $c_1, c_2, \ldots, c_m \subseteq A$ with $|c_i| = 3$ for every $1 \leq i \leq m$. The task is to decide whether there is a subset $B \subseteq A$ such that $|c_i \cap B| = 1$ for every $1 \leq i \leq m$. For the sake of concreteness, we also set $c_j = \{x_{\ell_j, 1}, x_{\ell_j, 2}, x_{\ell_j, 3}\}$ for every $j \in [m]$, and with $x_{\ell_j, 1} < x_{\ell_j, 2} < x_{\ell_j, 3}$ for some order `$<$' on $A$. \par
We transform such an $1$-in-$3$-3SAT-instance into two strings $u_A = (\tb \#)^n (\tb \#)^m$ and $v_A = (\tb^2 \#)^n (\tb^3 \#)^m$. For every $i \in [n]$, the $i^{\text{th}}$ $\tb$-factor of $u_A$ and the $i^{\text{th}}$ $\tb^2$-factor of $v_A$ are called \emph{$x_i$-blocks}. Analogously, we denote the last $m$ $\tb$-factors of $u_A$ and the last $m$ $\tb^3$-factors of $v_A$ as \emph{$c_j$-blocks} for $j \in [m]$. 

Obviously, if $u_A \subseq_e v_A$ for some embedding $e : [|u_A|] \to [|v_A|]$, then the single $\tb$ of $u_A$'s $x_i$-block is mapped to either the first or the second $\tb$ of $v_A$'s $x_i$-block, and the single $\tb$ of $u_A$'s $c_j$-block is mapped to either the first or the second or the third $\tb$ of $v_A$'s $c_j$-block.
Thus, the embedding $e$ can be interpreted as selecting a set $B \subseteq A$ (where mapping $u_A$'s $x_i$-block to the \emph{second} $\tb$ of $v_A$'s $x_i$-block is interpreted as $x_i \in B$), and selecting either the first or second or third element of  $c_j$ (depending on whether $u_A$'s $c_j$-block is mapped to the first, second or third $\tb$ of $v_A$'s $c_j$-block).
We can now introduce a set of regular gap constraints that enforce the necessary synchronicity between $B$ and the elements picked from the clauses: Assume that $x_i$ is the $p^{\text{th}}$ element of $c_j$. If $e$ maps $u_A$'s $x_i$-block to the second $\tb$ of $v_A$'s $x_i$-block, then $e$ must map $u_A$'s $c_j$-block to the $p^{\text{th}}$ $\tb$ of $v_A$'s $c_j$-block, and if $e$ maps $u_A$'s $x_i$-block to the first $\tb$ of $v_A$'s $x_i$-block, then $e$ must map $u_A$'s $c_j$-block to the $q^{\text{th}}$ $\tb$ of $v_A$'s $c_j$-block for some $q \in \{1, 2, 3\} \setminus \{p\}$. For example, if $x_{\ell_j, 2} = x_i$ for some $i \in [n]$ and $j \in [m]$, i.\,e., the second element of $c_j$ is $x_i$, then we add a regular gap constraint $(i', j', L_{i', j'})$, where $i'$ and $j'$ are the positions of $u_A$'s $x_i$-block and $u_A$'s $c_j$-block, and $L_{i', j'} = \{\tb w \#, \# w \# \tb, \tb w \# \tb \tb \mid w \in \{\tb, \#\}^*\}$. If $e$ maps $u_A$'s $x_i$-block to the second $\tb$ of $v_A$'s $x_i$-block, then the gap between positions $i'$ and $j'$ must start with $\#$; thus, due to the gap constraint, it must be of the form $\# w \# \tb$, which is only possible if $e$ maps $u_A$'s $c_j$-block to the second $\tb$ of $v_A$'s $c_i$-block. If $e$ maps $u_A$'s $x_i$-block to the first $\tb$ of $v_A$'s $x_i$-block, then the gap must be of the form $\tb w \#$ or $\tb w \# \tb \tb$, which means that $e$ must map $u_A$'s $c_j$-block to the first or third $\tb$ of $v_A$'s $c_i$-block. Similar gap constraints can be defined for the case that $x_i$ is the first or third element of $c_j$. Hence, we can define gap constraints such that there is an embedding $e : [|u_A|] \to [|v_A|]$ with $u_A \subseq_e v_A$ and $e$ satisfies all the gap constraints if and only if the $1$-in-$3$-3SAT-instance is positive. Independent of the actual $1$-in-$3$-3SAT-instance, the gap languages can be represented by DFAs with at most $8$ states. This shows the following result.

\begin{theorem}\label{SATTheorem}
$\matchProb_{\reg}$ is $\npclass$-complete, even for binary alphabet $\Sigma$ and with gap-constraints that can be represented by $\DFA$s with at most $8$ states.
\end{theorem}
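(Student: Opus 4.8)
Membership in $\npclass$ is clear: an embedding $e\colon[m]\to[n]$ is a polynomial-size certificate, $p\subseq_e w$ is checkable directly, and each regular constraint $(i,j,L)\in\mathcal{C}$ can be verified by running the given $\DFA$ for $L$ on the gap $w[e(i)+1..e(j)-1]$. For hardness I would give a polynomial-time many-one reduction from the $\npclass$-complete problem $1$-in-$3$-$3$SAT, along the lines sketched above. Given variables $x_1,\dots,x_n$ and clauses $c_1,\dots,c_m$ with $c_j=\{x_{\ell_j,1},x_{\ell_j,2},x_{\ell_j,3}\}$ and $x_{\ell_j,1}<x_{\ell_j,2}<x_{\ell_j,3}$, I set $\Sigma=\{\tb,\#\}$, $u_A=(\tb\#)^n(\tb\#)^m$ and $v_A=(\tb^2\#)^n(\tb^3\#)^m$. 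The first step is a structural normal form: since $\#$ maps only to $\#$ and both strings contain exactly $n+m$ occurrences of $\#$, in every embedding $e$ witnessing $u_A\subseq_e v_A$ the $k$-th $\#$ of $u_A$ is mapped to the $k$-th $\#$ of $v_A$; hence the single $\tb$ of the $i$-th block of $u_A$ is mapped inside the $i$-th block of $v_A$, i.e.\ onto its first or second $\tb$ when $i\le n$ (an $x_i$-block) and onto its first, second or third $\tb$ otherwise (a $c_j$-block). This lets me read off from $e$ a set $B\subseteq\{x_1,\dots,x_n\}$ --- with $x_i\in B$ iff the $x_i$-block uses its \emph{second} $\tb$ --- together with, for each $j$, an index $p^*_j\in\{1,2,3\}$ recording which $\tb$ of the $c_j$-block is used.

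Next I would add one regular gap-constraint per variable occurrence: if $x_i$ is the $p$-th element of $c_j$, I put an $(i',j')$-constraint (on the $\tb$-positions $i'<j'$ of these two blocks in $u_A$) whose language only inspects the first symbol of the gap $w[e(i')+1..e(j')-1]$ and its suffix of length at most three. By the block structure, that first symbol is $\#$ exactly when the $x_i$-block uses its second $\tb$ (i.e.\ $x_i\in B$), while the suffix is $\#$, $\#\tb$ or $\#\tb\tb$ according to whether $p^*_j$ is $1$, $2$ or $3$; one checks that the gap is always long enough to expose this suffix, which for $p\in\{1,2\}$ is automatic (the two blocks are then non-adjacent in $u_A$, so at least one full block of $v_A$ lies strictly between them), and for $p=3$ needs a small separate treatment of the unique borderline case $x_i=x_n\in c_1$. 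Choosing the language so that ``the first symbol is $\#$'' is paired exactly with ``the suffix forces the $p$-th $\tb$'' --- e.g.\ $\{\tb w\#,\ \# w\#\tb,\ \tb w\#\tb\tb\mid w\in\{\tb,\#\}^*\}$ when $p=2$, and the two analogous three-branch languages for $p=1$ and $p=3$ --- makes this constraint logically equivalent to $x_i\in B \iff p^*_j=p$. As each such language is a bounded prefix/suffix condition, a direct product construction yields a $\DFA$ with at most $8$ states, which I would exhibit explicitly.

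For correctness I argue both directions. If $B$ solves the $1$-in-$3$-$3$SAT instance, then $|c_j\cap B|=1$ for every $j$; mapping each $x_i$-block to its second $\tb$ iff $x_i\in B$ and each $c_j$-block to the $\tb$ whose index equals that of the unique element of $c_j\cap B$ produces an embedding $e$ with $u_A\subseq_e v_A$, and every gap-constraint holds because for the variable $x_i$ it is attached to (the $p$-th element of $c_j$) we have $x_i\in B\iff p=p^*_j$. Conversely, any embedding satisfying all constraints induces $B$ and $(p^*_j)_j$ as above, and the constraint attached to the $p$-th element $x_i$ of $c_j$ forces $x_i\in B\iff p=p^*_j$; since exactly one $p\in\{1,2,3\}$ equals $p^*_j$, exactly one element of $c_j$ lies in $B$, so $B$ is a solution. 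Since the reduction is computable in polynomial time, uses the binary alphabet $\{\tb,\#\}$, and employs only $\DFA$s of size $O(1)$, this together with the $\npclass$-membership argument proves the theorem.

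I expect the main obstacle to be the bookkeeping around the gap languages rather than any conceptual hurdle: one has to pin down the exact regular language for each of the three values of $p$, paying attention to the short-gap corner cases in which the ``first symbol'' part and the ``suffix'' part overlap, to verify that the gap induced by any admissible embedding is indeed long enough to be classified by its bounded prefix and suffix, and to confirm that the resulting automaton can always be realised with at most $8$ states. The structural normal-form step is, by contrast, a short counting argument.
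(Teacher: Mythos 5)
Your proposal follows essentially the same route as the paper's proof: the identical reduction from $1$-in-$3$-$3$SAT with $u_A=(\tb\#)^n(\tb\#)^m$ and $v_A=(\tb^2\#)^n(\tb^3\#)^m$, the separator-synchronisation argument, and the very same three-branch constant-size regular gap languages (your $p=2$ language is literally the paper's), together with the routine $\npclass$-membership check. The borderline case you flag ($x_i=x_n\in c_1$, where the two blocks are adjacent and the gap can be as short as $\#$ or $\#\tb\tb$, which falls outside the three-branch languages) is a real issue that the paper's own writeup silently passes over, but it is repaired by a trivial finite adjustment of that single constraint, so your plan is sound and matches the paper's argument.
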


\section{Complexity of the Matching Problem: A Finer Analysis}\label{sec:vsn-and-intervals}

\subparagraph*{Two representations of constraints.} We start by defining two (strongly related) natural representations of the set of constraints which is given as input to the matching problem. Intuitively, both these representations facilitate the understanding of a set of constraints. Then, we see how these representations can be used to approach the $\matchProb$ problem.

\subparagraph*{The Interval Structure of Sets of Constraints.} For a constraint $C = (a, b, L)$ we define $\interval(C) \colonequals [a, b-1]$. If we have another constraint $C' = (a', b', L')$ (with $a \neq a' $ or $b \neq b'$), we say that $C$ is contained in $C'$, written $C \sqsubset C'$, if $\interval(C) \subsetneq \interval(C')$. Because this order is derived from the inclusion order $\subsetneq$, it is also a strict partial order. We denote the corresponding covering relation with $\sqsubsetdot$. Furthermore, we say that $C$ and $C'$ intersect, if $\interval(C) \cap \interval(C') \neq \emptyset$, and they are not comparable w.r.t.\ $\sqsubset$, i.e., neither of them contains the other. Importantly, because we have $b \notin \interval(C)$, in the case $b = a'$, the constraints $C$ and $C'$ do not intersect. 

\subparagraph*{The Graph Structure of Sets of Constraints.} For a string $p$, of length $m$, and a set of constraints ${\mathcal C}$ on $p$, we define a graph $G_p=(V_p,E_p)$ as follows. The set of vertices $V_p$ of $G_p$ is the set of numbers $\{1,\ldots,m\}$, corresponding to the positions of $p$. We define the set of undirected edges $E_p$ as follows: $E_p=\{(a,b)\mid (a,b,L)\in {\mathcal C}\}\cup \{(i,i+1)\mid i\in [m-1]\}\cup\{(1,m)\}$. Note that in the case when we have a constraint $C=(i,i+1, L)$, for some $i\in [m-1]$ and $L$ (respectively, a constraint $C(1,m,L)$) we will still have a single edge connecting the nodes $i$ and $i+1$ (respectively, $1$ and $m$) as $E_p$ is constructed by a set union of two sets, so the elements in the intersection of the two sets are kept only once in the result. Moreover, we can define a labelling function on the edges of $G$ by the following three rules: $\lab((i,j))=C$, if there exists $C\in {\mathcal C}$ with $C=(i,j,L)$; $\lab((i,i+1))=(i,i+1,\Sigma^*)$, for $i\in [m-1]$, if there exists no $C\in {\mathcal C}$ such that $C=(i,i+1,L)$; and $\lab((1,m))=(1, m, \{w\in \Sigma^*||w|\geq m-2\})$, if there exists no $C\in {\mathcal C}$ such that $C=(1,m,L)$. Clearly, all respective labels can be expressed trivially both as regular languages or as semilinear sets. Intuitively, the edges of $G$ which correspond to constraints of ${\mathcal C}$ are labelled with the respective constraints. The other edges have trivial labels: the label of the edges of the form $(i,i+1)$ express that, in an embedding of $p$ in the string $v$ (as required in $\matchProb$), the embedding of position $i+1$ is to the right of the embedding of position $i$; the label of edge $(1,m)$ simply states that at least $m-2$ symbols should occur between the embedding of position $1$ and the embedding of position $m$, therefore allowing for the entire pattern to be embedded. 

Further, it is easy to note that this graph admits a Hamiltonian cycle, which traverses the vertices $1,2,\ldots,m$ in this order. In the following, we also define two-dimensional drawing of the graph $G_p$ as a {\em half-plane arc diagram}: the vertices $1,2,\ldots, m$ are represented as points on a horizontal line $\ell$, with the edges $(i,i+1)$, for $i\leq m-1$ being segments of unit-length on that line, spanning between the respective vertices $i$ and $i+1$; all other edges $(i,j)$ are drawn as semi-circles, whose diameter is equal to the length of $(j-i)$, drawn in the upper half-plane with respect to the line $\ell$. In the following, we will simply call this diagram arc-diagram, without explicitly recalling that all semicircles are drawn in the same half-plane with respect to $\ell$. In this diagram associated to $G_p$, we say that two edges cross if and only if they intersect in a point of the plane which is not a vertex of $G_p$. See also \cite{DjidjevV03} and the references therein for a discussion on this representation of graphs. 

In \autoref{fig_constraint_relations} we see an example: We have $p=\mathtt{xyzyx}$ and $\mathcal{C}=\{C=(1,3,L_1), C'=(1,4,L_1), C''=(3,5,L_2)\}$. Thus, $|p|=5$ and $G_p$ has the vertices $\{1,\ldots,5\}$ and the edges $(1,3),(1,4),(3,5)$ corresponding to the constraints of $\mathcal{C}$, as well as the edges $(1,2),$ $(2,3),$ $(3,4),$ $(4,5),$ $(1,5)$ (note that the edges are undirected, and the trivial labels are omitted for the sake of readability). The figure depicts the arc diagram associated to this graph (with the semi-circles flattened a bit, for space reasons). In the interval representation of these constraints, we can see that $C$ is contained in $C'$ (as $\interval(C)=[1, 2] \subsetneq \interval(C')=[1, 3]$). Furthermore, $C'$ and $C''$ intersect ($[1, 3] \cap [3, 4] \neq \emptyset$), while $C$ and $C''$ do not ($[1, 2] \cap [3, 4] = \emptyset$). Note that two constraints (such as $C$ and $C''$ in our example) might not intersect (according to the interval representation), although the edges that correspond to them in the graph representation share a common vertex; in particular, two constraints intersect if and only if the corresponding edges cross.

\begin{figure}[h]
	\centering
{
	\begin{tikzpicture}
		\node (P) {$p = {}$};
        \node (P1) [right of = P]{\texttt{1}};
        \node (P11) [below of = P1, node distance=10pt]{\texttt{x}};
		\node (P2) [right of = P1]{\texttt{2}};
        \node (P21) [right of = P11]{\texttt{y}};
		\node (P3) [right of = P2]{\texttt{3}};
        \node (P31) [right of = P21]{\texttt{z}};
		\node (P4) [right of = P3]{\texttt{4}};
        \node (P41) [right of = P31]{\texttt{y}};
		\node (P5) [right of = P4]{\texttt{5}};
        \node (P51) [right of = P41]{\texttt{x}};
				\draw (P1) to node[midway, below] {} (P2);
				\draw (P2) to node[midway, below] {} (P3);
				\draw (P3) to node[midway, below] {} (P4);
				\draw (P4) to node[midway, below] {} (P5);
		\draw (P1.north) to[bend left = 45] node[midway, below] {\scriptsize $C$} (P3.north);
		\draw (P1.north) to[bend left = 90] node[midway, below] {} (P5.north);
		\draw (P1.north) to[bend left = 45] node[midway, above] {\scriptsize $C'$} (P4.north);
        \draw (P3.north) to[bend left = 45] node[midway, above] {\scriptsize $C''$} (P5.north);
	\end{tikzpicture}
}
    \caption{Relations between constraints}\label{fig_constraint_relations}
\end{figure}
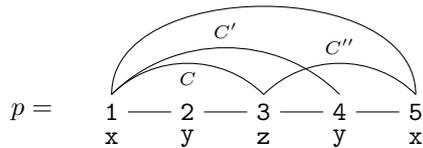

The two representations defined above are clearly very strongly related. However, the graph-representation allows us to define a natural structural parameter for subsequences with constraints, while the interval-representation will allow us to introduce a class of subsequences with constraints which can be matched efficiently.

In the following, by $\matchProb_{{\mathcal L}, {\mathcal G}}$ we denote the problem $\matchProb$, where all gap constraints are from the class of languages ${\mathcal L}$, with ${\mathcal L}\in \{\reg,\semi\}$, and the graphs corresponding to the input gap constraints are all from the class ${\mathcal G}$. 

\subparagraph{Vertex separation number and its relation to $\matchProb$.} Given a linear ordering $\sigma=(v_1,\ldots,v_m)$ of the vertices of a graph $G$ with $m$ vertices, the vertex separation number of $\sigma$ is the smallest number $s$ such that, for each vertex $v_i$ (with $i \in [m]$), at most $s$ vertices of $v_1,\ldots, v_{i-1}$ have some $v_j$, with $j\geq i$, as neighbour. The vertex separation number $\vsn(G)$ of $G$ is the minimum vertex separation number over all linear orderings of $G$. The vertex separation number was introduced in \cite{EllisST83} (see also \cite{EllisST94} and the references therein) and was shown (e.g., in \cite{Bodlaender98}) to be equal to the pathwidth of $G$. 

Let us briefly overview the problem of computing the vertex separation number of graphs. Firstly, we note that checking, given a graph $G$ with $n$ vertices and a number $k$, whether $\vsn(G)\leq k$ is $\npclass$-complete, as it is equivalent to checking whether the pathwidth of $G$ is at most $k$. We can show (see \cref{lem:svnham}) that this problems remains intractable even when we restrict it to the class of graphs with a Hamiltonian cycle, and this cycle is given as input as well. \looseness=-1

However, this problem is linear fixed parameter tractable w.r.t. the parameter $k$: deciding, for a given graph $G$ with $n$ vertices and a constant number $k$, if $\vsn(G)\leq k$ and, if so, computing a linear ordering of the vertices with vertex separation number at most $k$ can be solved in $O(n)$ time, where the constant hidden by the $O$-notation depends superexponentially on $k$. This follows from the results of \cite{Bodlaender98}, where the relation between pathwidth and vertex separation number is established, and \cite{Bodlaender96}, where it is shown that, for constant $k$, one can check in linear time $O(n)$ if a graph with $n$ vertices has pathwidth at most $k$, and, if so, produce a path decomposition of that graph of width at most $k$. 

For a constant $k$, let ${\mathcal V}_k$ be the class of all graphs $G$ with $\vsn(G)\leq k$. We can show the following meta-theorem.

\begin{theorem}\label{thm:constant-vsn}
Let $k\geq 1$ be a constant integer and let ${\mathcal L}\in\{\semi,\reg\}$. Then, $\matchProb_{{\mathcal L}, {\mathcal V}_k}$ can be solved in polynomial time: $O(m^2n^{k+1})$, in the case of $\semi$-constraints, and $O(m^2n^{k+1}+ m^2n^2\log \log n)$, in the case of $\reg$-constraints. Moreover, $\matchProb_{{\mathcal L}, {\mathcal V}_k}$ parameterised by $k$ is $\wclass[1]$-hard.
\end{theorem}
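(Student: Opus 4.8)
\textbf{Proof proposal for \autoref{thm:constant-vsn}.}

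The plan is to use the vertex separation number (equivalently, pathwidth) of the graph $G_p$ as the basis of a dynamic-programming algorithm over a path decomposition of width $k$, and to complement this with a parameterised reduction for the hardness part. For the algorithmic part, I would first invoke the linear-time algorithm of \cite{Bodlaender96,Bodlaender98} to compute, for the input graph $G_p$ (which, recall, has a Hamiltonian cycle $1,2,\ldots,m$, so the constraints are ``around'' that cycle), a linear ordering $\sigma=(v_1,\ldots,v_m)$ of the $m$ vertices witnessing $\vsn(G_p)\le k$. The key structural fact is that for each prefix $v_1,\ldots,v_{i-1}$ of this ordering, at most $k$ of these vertices have a neighbour among $v_i,\ldots,v_m$; call this set of ``active'' or ``boundary'' vertices $B_i$. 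The dynamic program processes the vertices in the order $\sigma$, and a DP state records, for each of the at most $k$ boundary vertices, which position of the text $w$ the corresponding pattern position is embedded into. Since each boundary vertex has one of $n$ possible images, there are at most $n^{k}$ such partial assignments at each step, and $m$ steps; together with the position $v_i$ being newly introduced (another factor of $n$), we get $O(m\cdot n^{k+1})$ states, with an extra factor $O(m)$ for iterating over the constraints/edges incident to the newly introduced vertex when checking consistency — hence $O(m^2 n^{k+1})$ overall.

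The transition works as follows: when we move from processing $v_1,\ldots,v_{i-1}$ to including $v_i$, we guess the image $e(v_i)\in[n]$, we check $w[e(v_i)]=p[v_i]$, and for every edge $(v_i,v_j)\in E_p$ with $v_j\in\{v_1,\ldots,v_{i-1}\}$ we verify that the gap between $e(v_i)$ and $e(v_j)$ satisfies the label $\lab((v_i,v_j))$; crucially, such $v_j$ must be in the boundary set $B_{i}$ (it is a vertex already processed but with a neighbour, namely $v_i$, not yet processed), so all the information needed for these checks is present in the DP state. Edges $(v_i,v_j)$ with $v_j$ not yet processed are deferred and will be checked when $v_j$ is introduced. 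After introducing $v_i$, we project the state onto $B_{i+1}$ (forgetting vertices no longer needed and adding $v_i$ if it still has unprocessed neighbours), again bounded by $n^k$. The Hamiltonian-cycle edges $(i,i+1)$ and $(1,m)$ are treated as ordinary labelled edges, so their satisfaction (in particular the strict monotonicity $e(i)<e(i+1)$ of the embedding and $|e(m)-e(1)|\ge m-2$) is enforced automatically. To evaluate each gap-constraint check in $O(1)$ time, I would preprocess: for $\semi$-constraints, \autoref{semilinearPreprocessLemma} applied to each of the $\le m^2$ constraints gives $O(m^2 n)$ preprocessing, dominated by the main term; for $\reg$-constraints, \autoref{regularPreprocessLemma} applied to each constraint gives $O(m^2(n^2\log\log n + \size(A)))$, and using the normalisation of \autoref{sizeSemiLin} ($\size(A)\le n^2$) this is $O(m^2 n^2\log\log n)$, explaining the extra additive term in the stated bound. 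The answer is ``yes'' iff some state surviving after all $m$ vertices are processed is reachable.

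For the $\wclass[1]$-hardness of $\matchProb_{{\mathcal L},{\mathcal V}_k}$ parameterised by $k$, I would revisit the reduction from $k$-$\cliqueProb$ used for \autoref{cliqueTheorem} and check that the graph $G_{p}$ produced there has bounded vertex separation number — or, if it does not directly, modify the reduction so that it does while still producing an instance with pattern length polynomially (indeed linearly) bounded in $k$ only as a parameter. The constraints there link positions $(1,k^2+2)$, $(1,(i,i)_k)$, $((i,j)_k,(i+1,j)_k)$, and $((i,1)_k,(i,k)_k)$; the point is that the ``width'' of the resulting interval/graph structure — how many of these arcs are simultaneously open as one sweeps across the positions of $p$ — is $O(k)$, not growing with $n$, so $G_p\in{\mathcal V}_{ck}$ for some constant $c$, and the parameter of the constructed $\matchProb$ instance (namely the vsn bound) is a function of $k$ alone. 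Since $k$-$\cliqueProb$ is $\wclass[1]$-hard, this transfers.

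The main obstacle I anticipate is the hardness direction: verifying (or re-engineering) that the clique-reduction instance genuinely has vertex separation number bounded by a function of $k$, and doing so cleanly enough that the arc diagram's crossing/nesting structure is transparently controlled. The algorithmic direction is mostly a careful but standard pathwidth-style DP; the one subtlety there is being precise about which endpoint of each constraint-edge is guaranteed to lie in the current boundary set $B_i$ (so that no constraint is ever ``lost''), which follows from the definition of $\vsn$ but should be spelled out.
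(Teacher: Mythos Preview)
Your proposal is correct and matches the paper's approach: the same DP over a $\vsn$-$k$ linear ordering with states indexed by the images of the (at most $k$) boundary vertices, the same preprocessing via \autoref{semilinearPreprocessLemma} and \autoref{regularPreprocessLemma}, and the same reuse of the clique reduction from \autoref{cliqueTheorem} for the $\wclass[1]$-hardness. For the hardness direction you are working harder than necessary: since $|p|=k^2+2$ in that reduction, one has $\vsn(G_p)\le k^2+1$ trivially (any graph on $N$ vertices has vertex separation number at most $N-1$), and this is already a function of the clique parameter $k$ alone --- so no structural analysis of how many arcs are simultaneously open is needed, and what you flag as the ``main obstacle'' in fact dissolves.
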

A full proof of this theorem is given in Appendix \ref{sec:appendixVSN}; here we only sketch it. The matching algorithm implements a dynamic programming approach. We choose an ordering of the vertices of the graph representing ${\mathcal C}$, with $\vsn$ at most $k$. These vertices are, in fact, positions of $p$, so we find, for $q$ from $1$ to $m$, embeddings for the first $q$ positions of this ordering in $w$, such that all the constraints involving only these positions are fulfilled. Given that the $\vsn$ of the respective ordering is bounded by $k$, we can compute efficiently the embeddings of the first $q$ positions by extending the embeddings for the first $q-1$ positions, as, when considering a new position of our ordering, and checking where it can be embedded, only $k$ of the previously embedded positions are relevant. As such, the embeddings of the first $q-1$ positions of the ordering, which are relevant for computing the embeddings of its first $q$ positions, can be represented using an $O(n^k)$ size data-structure, and processed in $O(n^k\poly(n,m))$ time. This leads to a polynomial time algorithm, with a precise runtime as stated above. The lower bound follows from the reduction showing Theorem \ref{cliqueTheorem}. \looseness=-1

\subparagraph{Non-intersecting constraints and $\matchProb$.}

We have shown that $\matchProb$ can be solved efficiently if the input gap constraints are represented by graphs with bounded $\vsn$. However, while this condition is sufficient to ensure that $\matchProb$ can be solved efficiently (as long as the constraint-languages are in $\pclass$), it is not necessary. We will exhibit in the following a class ${\mathcal H}$ of gap constraints, which contains graphs with arbitrarily large $\vsn$, and for which $\matchProb_{{\mathcal L},{\mathcal H}}$ can be solved in polynomial time, for ${\mathcal L}\in \{\reg,\semi\}$.

More precisely, in the following, we will consider the class ${\mathcal H}$ of non-intersecting gap constraints. That is, we consider $\matchProb$ where the input consists of two strings $v \in \Sigma^*$ and $p \in \Sigma^*$ and a non-empty set $\mathcal{C}$ of gap constraints, where for any $C,C'\in \mathcal{C}$ we have that $\interval(C)\cap \interval(C') \in \{\interval(C), \interval(C'), \emptyset\}$. It is immediate that ${\mathcal H}$, the class of non-intersecting gap constraints, can be described as the class of gap constraints which are represented by outerplanar graphs which have a Hamiltonian cycle: the arc diagram constructed for these gap constraints is already outerplanar: if $C=(a,b,L)$ and $C'=(a',b',L')$ are two non-intersecting constraints of some set of constraints ${\mathcal C}$, then, in the graph representation of this set of constraints based on arc diagrams, the edges $(a,b)$ and $(a',b')$ do not cross (although they might share a common vertex). 

Moreover, an outerplanar graph which admits a Hamiltonian cycle can be represented canonically as an arc diagram of a set of non-intersecting gap constraints. It is a folklore result that if an outerplanar graph has a Hamiltonian cycle then the outer face forms its unique Hamiltonian cycle. Moreover, every drawing of a graph in the plane may be deformed into an arc diagram without changing its number of crossings \cite{Nicholson68}, and, in the case of outerplanar graphs this means the following. For an outerplanar graph $G$, we start with a drawing of $G$ witnessing its outerplanarity. Assume that, after a potential renaming, there exists a traversal of the Hamiltonian cycle of the outerplanar graph (i.e., of its outer face) which consists of the vertices $1,2,\ldots,n$, in this order. We simply reposition these vertices, in the same order $1,2,\ldots,n$, on a horizontal line $\ell$, such that consecutive vertices on the line are connected by edges of length $1$, and then the edge $(1,n)$ is deformed so that it becomes a semicircle of diameter $n-1$ connecting the respective vertices, in the upper half-plane w.r.t. $\ell$. Further, each edge $(a,b)$ is deformed to become a semicircle above the line of the vertices, whose diameter equals the distance between vertex $a$ and vertex $b$. By the result of \cite{Nicholson68}, the edges of this graph do not cross in this representation, as the initial graph was outerplanar. But the resulting diagram is, clearly, the arc diagram corresponding to a set of non-intersecting gap constraints.

Based on the above, we can make a series of observations. Firstly, as one can recognize outerplanar graphs in linear time \cite{Wiegers86}, we can also decide in linear time whether a set of constraints is non-crossing. Secondly, according to \cite{CoudertHS07}, there are outerplanar graphs with arbitrarily large pathwidth, so with arbitrarily large $\vsn$. This means that there are sets of non-intersecting gap constraints whose corresponding graph representations have arbitrarily large $\vsn$. Thirdly, the number of constraints in a set of non-intersecting gap constraints is linear in the length of the string constrained by that set (as the number of edges in an outerplanar graph with $n$ vertices is at most $2n-3$). 

We can show the following theorem (here we just sketch the proof, and only in the case of $\semi$-constraints, as the $\reg$-constraints case is identical; for a full version see Appendix \ref{sec:NonIntersectUB}).
\begin{theorem}\label{thm:NonIntersectUB}
$\matchProb_{\semi, {\mathcal H}}$ can be solved in time $O(n^\omega K)$ and $\matchProb_{\reg, {\mathcal H}}$ can be solved in time $O(n^\omega K + n^2 K \log\log n)$, where $K$ is the number of constraints in the input set of constraints ${\mathcal C}$ and $O(n^\omega)$ is the time needed to multiply two boolean matrices of size $n\times n$.
\end{theorem}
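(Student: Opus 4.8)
The plan is to turn the non-intersecting structure into a bottom-up dynamic program over a laminar forest, using Boolean matrix multiplication to ``compose'' constraints. By the interval reformulation recorded just before the theorem, an instance $(p,w,\mathcal{C})$ of $\matchProb_{\mathcal{L},\mathcal{H}}$ (with $|p|=m$, $|w|=n$, $|\mathcal{C}|=K$) is one in which $\{\interval(C)\mid C\in\mathcal{C}\}$ is a laminar family on $[m]$: any two of these intervals are nested or ``non-overlapping'', where the latter still allows the underlying constraints to abut (share one endpoint). First I would add a root interval $[1,m-1]$, labelled by the $(1,m)$-constraint of $\mathcal{C}$ if present and by its trivial label $\{v\mid |v|\ge m-2\}$ otherwise (this label is satisfied by every embedding of $p$, hence changes nothing), and build the rooted tree $F$ on these $K+1$ intervals where the children of a node are the $\sqsubsetdot$-maximal intervals strictly contained in it; $F$ has $O(K)$ nodes and is processed from the leaves upward.

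For a node $\nu$ with constraint $(a,b,L)$ I would compute an $n\times n$ Boolean matrix $M_\nu$ with $M_\nu[k][\ell]=1$ iff $p[a..b]\preceq_e w$ for some partial embedding $e$ with $e(a)=k$, $e(b)=\ell$ satisfying every constraint of $\mathcal{C}$ whose interval lies inside $\interval(\nu)$, including the one at $\nu$ itself. The one place where non-intersection is genuinely used is the soundness of composing such matrices and discarding interior positions: by laminarity, a constraint with an endpoint strictly between $a$ and $b$ has both endpoints in $[a,b]$ and is thus already enforced in $M_\nu$, whereas every not-yet-processed constraint refers only to positions $\le a$ or $\ge b$. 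To obtain $M_\nu$ from the matrices of its children, whose spans are ordered and pairwise interval-disjoint inside $[a,b]$, I would chain them from left to right, interleaved with the Boolean matrices $T_{i,j}$ of the maximal \emph{trivial} pattern segments $p[i..j]$ between and around them (a trivial segment carries only path-edge labels $\Sigma^*$), using Boolean matrix products; one such product forces the shared pattern position to be mapped to a single common text position and projects it away, and when two spans abut there is no trivial segment between them and the two matrices are multiplied directly. Finally I set $M_\nu[k][\ell]:=N[k][\ell]\wedge[\,w[k+1..\ell-1]\in L\,]$, where $N$ is the chained product and the membership test is answered in $O(1)$ time after preprocessing via Lemma~\ref{semilinearPreprocessLemma} (for $\semi$) or Lemma~\ref{regularPreprocessLemma} (for $\reg$). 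The instance is positive iff the matrix of the root has a nonzero entry.

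For the running time, the key point for getting a bound in terms of $K$ rather than $m$ is that a trivial-segment matrix $T_{i,j}$ should \emph{not} be assembled via matrix products: after computing the next-occurrence tables of $w$ in $O(n|\Sigma|)$ time, one determines for each start position $k$ the least text position $\mathrm{fin}(i,j,k)$ at which some embedding of $p[i..j]$ with $e(i)=k$ can end, using $j-i$ greedy steps, and then, since the right endpoint of such an embedding may always be pushed further right, $T_{i,j}[k][\ell]=1$ iff $w[k]=p[i]$, $w[\ell]=p[j]$ and $\ell\ge\mathrm{fin}(i,j,k)$; thus each $T_{i,j}$ costs $O(n^2+n(j-i))$, and since $F$ uses $O(K)$ trivial segments whose lengths sum to $O(m)$, all of them together cost $O(n^2K)$. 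The number of Boolean $n\times n$ products is $\sum_\nu O(d_\nu)=O(|F|)=O(K)$, contributing $O(n^\omega K)$; the $O(K)$ final intersections with the constraint languages cost $O(n^2)$ each; and preprocessing the constraint languages costs $O(n\cdot\size(\mathcal{C}))=O(n^2K)$ for $\semi$ (using $\size(C)\le n$, cf.\ Remark~\ref{sizeSemiLin}) and $O(Kn^2\log\log n+\size(\mathcal{C}))=O(Kn^2\log\log n)$ for $\reg$. Since $\omega\ge 2$, everything except the last term is absorbed into $O(n^\omega K)$, which yields exactly the two stated bounds.

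I expect the main obstacle to be not a single hard step but the careful bookkeeping: making the laminar-forest construction and the interior-projection argument watertight (in particular the off-by-ones that decide which pattern position a shared constraint endpoint is, and which segments are genuinely trivial), and making sure that long stretches of unconstrained pattern positions are absorbed by the $T_{i,j}$-construction rather than handled by $\Theta(m)$ separate matrix products — it is precisely this last point that separates the claimed $O(n^\omega K)$ bound from the easy $O(n^\omega m)$ one.
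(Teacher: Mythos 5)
Your proposal follows essentially the same route as the paper's proof: the same Hasse-tree (laminar) decomposition rooted at an added trivial $(1,m)$-constraint, the same $n\times n$ partial-embedding matrices composed bottom-up by Boolean matrix products interleaved with greedily computed unconstrained-segment matrices, the same masking step using the $O(1)$-time membership structures of Lemmas~\ref{semilinearPreprocessLemma} and~\ref{regularPreprocessLemma}, and the same accounting of $O(K)$ multiplications yielding $O(n^\omega K)$ plus the preprocessing terms. The argument and complexity analysis are correct, so nothing further is needed.
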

As said, we sketch the algorithm solving $\matchProb_{{\semi}, {\mathcal H}}$ in the stated complexity. Assume that the input words are $w\in \Sigma ^n$ and $p\in \Sigma^m$, and ${\mathcal C}=(C_1,\ldots,C_K)$, with $C_i=(a_i,b_i,L_i)$ for $i\in [K]$. Firstly, we add a constraint $C=(1,m,L(m-2,1))$ to ${\mathcal C}$ if it does not contain any constraint having the first two components $(1,m)$. So, in the following, we will assume w.l.o.g. that such a constraint $(1,m,\cdot)$ always exists in ${\mathcal C}$. Moreover, the number of constraints in ${\mathcal C}$ is $O(m)$ (as the graph representation of ${\mathcal C}$ is outerplanar). 

As a first phase in our algorithm, we build the data structures from Lemma \ref{semilinearPreprocessLemma}. Hence, by Remark \ref{sizeSemiLin}, after an $O(n^2K)$-time preprocessing we can answer queries ``is $w[i..j]\in L_k$?'' in $O(1)$ time, for all $i,j\in [n], k\in [K]$.

After this, the algorithm proceeds as follows. Because the set of constraints ${\mathcal C}$ is non-intersecting, one can build in linear time the Hasse-diagram of the set of intervals associated with the set of constraints ${\mathcal C}$ (w.r.t. the interval-inclusion relation), and this diagram is a tree, whose root corresponds to the single constraint of the form $(1,m,\cdot)$. Further, the algorithm uses a dynamic programming strategy to find matches for the constraints of ${\mathcal C}$ in a bottom-up fashion with respect to the Hasse-diagram of this set. The algorithm maintains the matches for each constraint $C=(a,b,L)$ as a Boolean $n\times n$ matrix, where the element on position $(i,j)$ of that matrix is true if and only if there exists a way to embed $p[a..b]$ in $w[i..j]$, such that $p[a]$ is mapped to $w[i]$ and $p[b]$ to $w[j]$ in the respective embedding, and this embedding also fulfils $C$ and all the constraints occurring in the sub-tree of root $C$ in the Hasse-diagram. This matrix can be computed efficiently, by multiplying the matrices corresponding to the children of $C$ (and a series of matrices corresponding to the unconstrained parts of $p[a..b]$). As the number of nodes in this tree is $O(K)$,  the whole process of computing the respective matrices for all nodes of the tree requires $O(K)$ matrix multiplications, i.e., $O(n^\omega K)$ time in total. Finally, one needs to see if there is a match of $p[1..m]$ to some factor of $v$, which can be checked in $O(n^2)$ by simply searching in the matrix computed for the root of the diagram. 

The following lower bound is shown by a reduction from $3$-$\OV$ (see Appendix \ref{sec:NonIntersectLB}). \looseness=-1
\begin{theorem}\label{thm:NonIntersectLB}
For ${\mathcal L}\in\{\reg,\semi\}$, then $\matchProb_{{\mathcal L}, {\mathcal H}}$, cannot be solved in $\mathcal{O}(|w|^g|\mathcal{C}|^h)$ time with $g + h < 3$, unless \textsf{SETH} fails.
\end{theorem}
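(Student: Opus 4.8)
The plan is to reduce from $3$-$\OV$ and invoke \Cref{lem_k_OV}: if $\matchProb_{{\mathcal L},{\mathcal H}}$ could be solved in $\mathcal{O}(|w|^g|\mathcal{C}|^h)$ time with $g+h<3$, then, since our reduction will produce an instance with $|w| = \Theta(n\cdot\poly(d))$ and $|\mathcal{C}| = \Theta(n)$ (or vice versa), we would obtain an $n^{g+h}\poly(d) = n^{3-\delta}\poly(d)$ algorithm for $3$-$\OV$, contradicting \Cref{lem_k_OV} (hence \textsf{SETH}). The key is that the $\OV$ witness is a choice of one vector from each of the three sets $V_1, V_2, V_3$, so we want the pattern $p$ to have three ``selection blocks'', one per set, and the matching process to encode (i) choosing a vector $\vec v_i \in V_i$, and (ii) verifying orthogonality $\sum_{j=1}^d \prod_{i=1}^3 \vec v_i[j] = 0$. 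Crucially, the constraint graph must be non-intersecting (outerplanar with a Hamiltonian cycle), which is the real design restriction here: with non-intersecting intervals we essentially get a laminar family, so the verification of orthogonality coordinate-by-coordinate must be threaded through nested gaps rather than crossing ones.

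First I would fix the gadget layout. Let $p$ consist of a short prefix, then three blocks $B_1, B_2, B_3$ (for $V_1, V_2, V_3$) separated by delimiter symbols, then a short suffix; and let $w$ list, for each set $V_i$, an encoding of all $n$ vectors of $V_i$ concatenated (so selecting $\vec v_i$ amounts to choosing which of the $n$ sub-blocks of the $V_i$-part of $w$ the block $B_i$ of $p$ is embedded into — the standard ``$\OV$ as subsequence selection'' trick, as already used implicitly for $\cliqueProb$ earlier in the paper). The nontrivial part is enforcing that the coordinate information of the three chosen vectors is compared. I would encode each coordinate $j\in[d]$ as a position (or small run) inside each block, and use nested gap constraints whose languages (regular, or semilinear length constraints) check that it is \emph{not} the case that all three chosen vectors have a $1$ in coordinate $j$ — i.e., for each coordinate the gap induced between the corresponding positions in $B_1$ and $B_3$ (with $B_2$'s coordinate position lying inside that gap) is forced to contain a ``$0$-witness''. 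Because a constraint $(a,b,L)$ with $a$ in $B_1$ and $b$ in $B_3$ has $B_2$'s coordinate positions strictly inside its gap, these constraints are naturally nested (coordinate $j$'s interval containing coordinate $j$'s sub-intervals into $B_2$), so with careful ordering of the coordinate positions one keeps the whole family laminar/non-intersecting. The delimiter symbols and a few global length constraints pin down the block structure so that $B_i$ must land entirely within one $V_i$-sub-block of $w$, exactly as in the $\cliqueProb$ reduction sketch.

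The main obstacle I anticipate is simultaneously achieving three things: (a) keeping the constraint graph genuinely non-intersecting — no two constraint-intervals may partially overlap — while still comparing coordinate $j$ of all three selected vectors, which forces a somewhat delicate ordering of coordinate-positions across $B_1, B_2, B_3$ (one natural fix: order coordinates $1,2,\dots,d$ in $B_1$, then the three-way comparison for coordinate $j$ is a constraint from $B_1$'s $j$-position to $B_3$'s $j$-position, and nesting is obtained by interleaving so that coordinate $j$'s interval strictly contains coordinate $j'$'s for $j'$ ``closer to the middle'' — or, more cleanly, by making each coordinate comparison a separate short gadget placed one-after-another so the intervals are disjoint rather than nested); and (b) making sure $|\mathcal{C}| = \Theta(n^{o(1)}\cdot\text{something small})$ relative to $|w|$, or more precisely that the two size parameters split the $n^{3}$ barrier in a way that any $g+h<3$ algorithm beats \Cref{lem_k_OV}. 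For the latter I would aim for $|w| = \Theta(n\,d)$ (or $\Theta(nd^c)$) and $|\mathcal{C}| = \Theta(n)$: then an $\mathcal{O}(|w|^g|\mathcal{C}|^h)$ algorithm runs in $\mathcal{O}((nd^c)^g n^h) = \mathcal{O}(n^{g+h}\poly(d))$, and $g+h<3$ contradicts \Cref{lem_k_OV}. I would also double-check that all constraint languages are expressible both as small DFAs (for $\reg$) and as semilinear length sets (for $\semi$) — the orthogonality/coordinate-mismatch checks should be phrasable as ``the gap has a prescribed length pattern'' or ``the gap avoids/contains a fixed short word'', which both representations handle; getting a uniform gadget that works for both ${\mathcal L}\in\{\reg,\semi\}$ is the last thing to verify but should be routine once the $\reg$ version is in place.
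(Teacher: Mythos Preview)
Your plan correctly identifies $3$-$\OV$ as the reduction source, but the proposed gadget layout cannot work. With $p$ consisting of three blocks $B_1, B_2, B_3$ of size $O(d)$ each, you get $|p| = O(d)$ and hence $|\mathcal{C}| \leq \binom{|p|}{2} = O(d^2)$, which is independent of $n$ and contradicts your own target of $|\mathcal{C}| = \Theta(n)$. Worse, such a reduction is \emph{provably impossible} under SETH: combining it with the upper bound of Theorem~\ref{thm:NonIntersectUB} (time $O(|w|^\omega |\mathcal{C}|)$) would solve the produced instance in $O((nd)^\omega d^2) = O(n^\omega \poly(d))$ time, and since $\omega < 3$ this would give an $n^{3-\varepsilon}\poly(d)$ algorithm for $3$-$\OV$, refuting SETH. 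So any correct reduction must have $|p|$ (and hence $|\mathcal{C}|$) growing polynomially with $n$. There is also a structural problem with your three-block layout: the gap between a $B_1$-position and a $B_3$-position is just a factor of $w$ and carries no information about where $B_2$ was embedded, so a single constraint cannot test coordinate $j$ of all three chosen vectors; and for $\mathcal{L} = \semi$ a constraint sees only the gap \emph{length}, so ``the gap contains a $0$-witness'' is not expressible at all.

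The paper's construction addresses all of this by encoding one entire set --- all $n$ vectors of $A$ --- into the pattern, in fact \emph{twice}: once forward and once mirrored, on either side of a central separator $\S$, so that $|p|, |w|, |\mathcal{C}| = \Theta(nd)$. The sets $B$ and $C$ go into $w$ on the two sides of $\S$, together with $(n{-}1)$-fold zero-vector padding on each side. Constraints based purely on \emph{parity of gap length} (hence semilinear) force exactly one $\vec a_{i^*}$-block on each side to land in a genuine $B$- (resp.\ $C$-) block of $w$, with the same $i^*$ on both sides, while all other $A$-blocks are absorbed by the padding; further parity constraints between the $u$-th coordinate position of $\overline{\mathsf{C}}_p(\vec a_i)$ and the $u$-th coordinate position of $\mathsf{C}_p(\vec a_i)$ test orthogonality. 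The mirroring is precisely what delivers non-intersection: every such constraint connects two positions equidistant from $\S$, so the associated intervals are totally ordered by inclusion. Your ``reverse the coordinate order in $B_3$'' idea is a local glimpse of this trick, but the paper applies it at the scale of the whole pattern, which is what simultaneously makes $|\mathcal{C}|$ large enough and the constraint family laminar.
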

The reduction proving this hardness result works as follows. In $3$-$\OV$, we are given three sets $A = \{\vec{a}_1, \dots, \vec{a}_n\}$, $B = \{\vec{b}_1, \dots, \vec{b}_n\}$ and $C = \{\vec{c}_1, \dots, \vec{c}_n\}$ with elements from $\{0, 1\}^d$, and want to determine whether there exist $i^*, j^*, k^* \in [n]$, such that $\sum^j_{\ell=1}\vec{a}_{i^*}[\ell] \cdot \vec{b}_{j^*}[\ell] \cdot \vec{c}_{k^*}[\ell]=0$. This is achieved by encoding our input sets over a constant size alphabet, via two functions $\mathsf{C}_p$ and $\mathsf{C}_w$, into a pattern $p$ and a text $w$, respectively, as well as a set of constraints $\mathcal{C}$, such that the answer to the $3$-$\OV$ problem is positive if and only if $p$ is a $\mathcal{C}$-subsequence of $w$. Basically, the encoding of each $d$-dimensional vector of $A$ (respectively, $B$ and $C$) is done via $\mathsf{C}_p$ (respectively, $\mathsf{C}_w$), in such a way that (when no constraints are considered) $\mathsf{C}_p(\vec{v})$ is a subsequence of $\mathsf{C}_w(\vec{v}')$ for any $\vec{v}, \vec{v}' \in \{0, 1\}^d$. Further, $\overline{\mathsf{C}}_p$ and $\overline{\mathsf{C}}_w$ are mirrored versions of these encodings (both for bits and vectors), where the order of the characters in the output is inverted. We can then use the original encoding for one part of the pattern and the text and the mirrored encoding for the other part. Then, we encode the set $A$ in $p \colonequals \overline{\mathsf{C}}_p(\vec{a}_n)\, \dots\, \overline{\mathsf{C}}_p(\vec{a}_1)\, \S\, \mathsf{C}_p(\vec{a}_1)\, \dots\, \mathsf{C}_p(\vec{a}_n)$ and the sets $B$ and $C$ in $w \colonequals \overline{w}_0\, \#\, \overline{\mathsf{C}}_w(\vec{c}_n)\, \dots\, \overline{\mathsf{C}}_w(\vec{c}_1)\, \#\, \overline{w}_0\, \S\, w_0\, \#\, \mathsf{C}_w(\vec{b}_1)\, \dots\, \mathsf{C}_w(\vec{b}_n)\, \#\, w_0$ (where $\overline{x}$ denotes the mirror image of $x$, and $w_0$ is a suitably choosen padding). To finalise our construction, we can define constraints which ensure that an embedding of $p$ in $w$ is possible if and only if there exist some  $i^*, j^*, k^*$ such that $\mathsf{C}_p(a_{i^*})$ is embedded in $\mathsf{C}_w(b_{j^*})$, and $\overline{\mathsf{C}}_p(a_{i^*})$ in $\overline{\mathsf{C}}_p(c_{k^*})$, while all the other strings $\overline{\mathsf{C}}_p(a_{t})$ are embedded in the paddings. Moreover, additional constraints ensure that the simultaneous embedding of $\mathsf{C}_p(a_{i^*})$ in $\mathsf{C}_w(b_{j^*})$ and of $\overline{\mathsf{C}}_p(a_{i^*})$ in $\overline{\mathsf{C}}_p(c_{k^*})$ is only possible if and only if for each component $u\in [d]$ with $a_{i^*}[u]\neq 0$, we have that $b_{j^*}[u]=0$ or $c_{k^*}[u]=0$. \looseness=-1

We conclude by noting that, while this is not a tight lower bound with respect to the upper bound shown in \cref{thm:NonIntersectUB}, finding a polynomially stronger lower bound (i.e., replacing in the statement of Theorem \ref{thm:NonIntersectLB} the condition $g + h < 3$ with $g + h < 3+\delta$, for some $\delta>0$) would show that matrix multiplication in quadratic time is not possible, which in turn would solve a well-researched open problem. Indeed, the algorithm from \cref{thm:NonIntersectUB} consists in a reduction from $\matchProb_{{\mathcal L}, {\mathcal H}}$ to $O(|C|)$ instances of matrix multiplication, for quadratic matrices of size $|w|$, so a better lower bound would mean that at least one of these multiplications must take more than quadratic time. 


\appendix

\newpage
\section{Contents}

This appendix contains the missing full proofs from the paper, as well as several definitions and basic concepts. Admittedly, some of the proofs of this Appendix could have been presented in a more compact manner. Our feeling was that such a denser presentation might hinder the understanding of these proofs, so we preferred to give here full details (including figures and pseudo-codes). However, we believe that the main ideas behind our proofs are already presented in the main part of the paper, and this (long) Appendix is mostly needed to address the finer, but more tedious, technicalities. 

\section{Additional Definitions}

\medskip

\noindent\textbf{Strict Partial Orders}: Given a set $X$, we call $\mathcal{R} \subseteq X \times X$ a \emph{(homogeneous) relation on $X$}. Then, for any pair $(x, y) \in \mathcal{R}$, we say that $x$ \emph{is related to} $y$. Often, $\mathcal{R}$ is replaced with an operator, for example ``$<$'', and we write $x < y$ if $x$ is related to $y$ and $x \not< y$ otherwise. Two elements $x, y \in X$ are called \emph{comparable}, if $x < y$ or $y < x$. Now $<$ is a \emph{strict partial order}, if it fulfills the following three properties:
\begin{itemize}
	\item \emph{Irreflexivity:} $x \not< x$ for all $x \in X$
	\item \emph{Asymmetry:} $x < y \implies y \not< x$ for all $x, y \in X$
	\item \emph{Transitivity:} $x < y \land y < z \implies x < z$ for all $x, y, z \in X$
\end{itemize}
We then call $(X, <)$ a \emph{partially ordered set}, or \emph{poset} for short. The word ``partial'' indicates that not all pairs of elements $x, y \in X$ must be comparable.

For a strict partial order $<$ on $X$ and two elements $x, y \in X$, we say that y \emph{covers} x, if $x < y$ and there is no $z \in X$, such that $x < z < y$. In this case, we write $x \lessdot y$ and call $\lessdot$ the \emph{covering relation} of $(X, <)$. Now, if we assume that $X$ is finite, the covering relation can be used for a visual representation of the poset:

Note that for a directed graph $G$ with vertices $V$ and edges $E$, we can interpret $E \subseteq V \times V$ as a relation on $V$. Conversely, we can view $(X, \lessdot)$ as a directed graph, called the \emph{Hasse diagram} of the poset $(X, <)$. It can be seen as a intuitive visualization of $(X, <)$ due to the following property:

\begin{lemma}\label{lem_hasse}
	For any $x, y \in X$ with $x \neq y$ it is $x < y$ if and only if there is a path from $x$ to $y$ in the Hasse Diagram of $(X, <)$.
\end{lemma}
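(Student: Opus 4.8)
The plan is to prove the two implications of the equivalence separately, with the forward direction relying essentially on the finiteness of $X$.

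For the ``if'' direction, suppose there is a path $x = z_0 \lessdot z_1 \lessdot \dots \lessdot z_k = y$ in the Hasse diagram of $(X, <)$. Since $x \neq y$, this path is non-trivial, i.e.\ $k \geq 1$. Each edge $z_i \lessdot z_{i+1}$ is, by definition of the covering relation, a pair with $z_i < z_{i+1}$; hence an immediate induction on $k$ using transitivity of $<$ yields $x < y$.

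For the ``only if'' direction, I would argue by induction on the cardinality of the \emph{open interval} $I(x,y) \colonequals \{ z \in X \mid x < z < y \}$, which is a well-defined non-negative integer because $X$ is finite. So fix $x, y \in X$ with $x < y$. If $I(x,y) = \emptyset$, then there is no $z$ with $x < z < y$, which by definition means $x \lessdot y$; thus $x$ and $y$ are joined by an edge of the Hasse diagram and we are done. Otherwise, pick any $z \in I(x,y)$, so $x < z$ and $z < y$. By transitivity every element of $I(x,z)$ and every element of $I(z,y)$ also lies in $I(x,y)$, while $z \in I(x,y)$ but, by irreflexivity, $z \notin I(x,z)$ and $z \notin I(z,y)$; hence $|I(x,z)| < |I(x,y)|$ and $|I(z,y)| < |I(x,y)|$. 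Applying the induction hypothesis to the pairs $(x,z)$ and $(z,y)$ yields a path from $x$ to $z$ and a path from $z$ to $y$ in the Hasse diagram, and concatenating them produces the desired path from $x$ to $y$.

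The argument is entirely routine; the only point requiring any care — and the only place where the assumption that $X$ is finite enters — is ensuring that the induction is well-founded, i.e.\ that $|I(x,y)|$ is finite and strictly decreases in the recursive calls. (Without finiteness the statement genuinely fails: for $(\mathbb{Q}, <)$ the covering relation is empty, yet $<$ is non-trivial.) I do not anticipate any further obstacle.
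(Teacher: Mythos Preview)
Your proof is correct and follows essentially the same approach as the paper's: both directions are handled identically, with the ``only if'' direction obtained by recursively splitting on an intermediate element $z$ with $x < z < y$. The only difference is presentational --- you make the induction parameter (the size of the open interval $I(x,y)$) explicit, whereas the paper argues more informally that the recursion terminates because $X$ is finite and no element can repeat; your version is slightly more careful but not a different argument.
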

\begin{proof}
	If we have the path $x \to z_1 \to \dots \to z_n \to y$, then $x \lessdot z_1 \lessdot \dots \lessdot z_n \lessdot y$, by definition we have $x < z_1 < \dots < z_n < y$ and, by transitivity, $x < y$. Also, if $x < y$ then either $x \lessdot y$ or there exists a $z \in X$ with $x < z < y$ and we can recursively apply the same argument to $x < z$ and $z < y$. Since $X$ is finite and any element can only occur once (because of irreflexivity and transitivity), we will find a sequence $x \lessdot z_1 \lessdot \dots \lessdot z_n \lessdot y$ eventually.
\end{proof}

Because of this lemma and the irreflexivity of $<$ the graph must be acyclic and can thus be drawn so that all edges are pointing upwards. Then, the edges are usually drawn as plain line segments, because their direction is given from their orientation.

This paper addresses, in particular, two strict partial orders on $\mathcal{I}_k \colonequals \{[m, n] \mid m, n \in [k], m \leq n\}$: The \emph{inclusion order} $\subsetneq$, and the \emph{interval order} $<$, which is defined as
\[
	[m_1, n_1] < [m_2, n_2] \quad\logeq\quad n_1 < m_2
\]
As an example, \autoref{fig_hasse_example} shows Hasse diagrams for both the inclusion order (left) and the interval order (right) on $\mathcal{I}_3$.

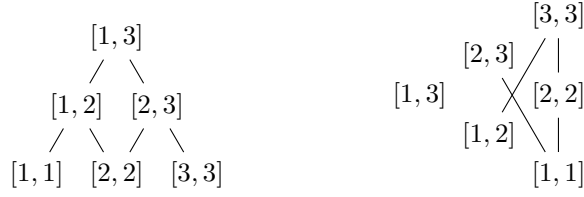
\begin{figure}[h]
	\centering
	\begin{tikzpicture}[node distance=3em]
		\node (13) {$[1, 3]$};
		\node (12) at ([shift=({240:3em})]13) {$[1, 2]$};
		\node (23) at ([shift=({300:3em})]13) {$[2, 3]$};
		\node (11) at ([shift=({240:3em})]12) {$[1, 1]$};
		\node (22) at ([shift=({300:3em})]12) {$[2, 2]$};
		\node (33) at ([shift=({300:3em})]23) {$[3, 3]$};

		\draw (13) -- (12);
		\draw (13) -- (23);
		\draw (12) -- (11);
		\draw (12) -- (22);
		\draw (23) -- (22);
		\draw (23) -- (33);
	\end{tikzpicture}
	\hspace{5em}
	\begin{tikzpicture}[node distance=3em]
		\node (13) {$[1, 3]$};
		\node (12) at ([shift=({330:3em})]13) {$[1, 2]$};
		\node (23) at ([shift=({30:3em})]13) {$[2, 3]$};
		\node (11) at ([shift=({330:3em})]12) {$[1, 1]$};
		\node (22) at ([shift=({30:3em})]12) {$[2, 2]$};
		\node (33) at ([shift=({30:3em})]23) {$[3, 3]$};

		\draw (11) -- (22);
		\draw (11) -- (23);
		\draw (12) -- (33);
		\draw (22) -- (33);
	\end{tikzpicture}
	\caption{Hasse diagrams for inclusion and interval order on $\mathcal{I}_3$}\label{fig_hasse_example}
\end{figure}

\section{Proof Details of Section~\ref{sec:gapConstraints}}

\subsection{Proof of Lemma~\ref{semilinearPreprocessLemma}}

\begin{proof}
Let $L = L(x_{0}; x_{1}, \ldots, x_{l})$ be some linear set. Then, for every $x \in [n]$, we have $x \in L$, iff $x = x_{0}$ or there exists $x' < x$, such that $x' \in L$ and $x - x' \in \{x_{1}, \dots, x_{l}\}$. Hence, we can construct in time $\bigo(n l)$ an $n$-dimensional Boolean vector $\tilde{L}$, where, for every $x \in [n]$, $\tilde{L}[x]$ is true if and only if $x \in L$, by the following algorithm.

\begin{algorithm}[ht]
	\caption{Compute $\tilde{L}$}
	\label{algo_preprocess}
	\begin{algorithmic}
		\Function{preprocess}{$L$}
		\State $\tilde{L} \gets [\texttt{false}, \dots, \texttt{false}]$
		\Comment{$n$ elements}
		\State $\tilde{L}[x_0] \gets \texttt{true}$
		\For{$i \gets x_0, \dots, n$}
		\If{$\tilde{L}[i]$}
		\For{$j \gets 1, \dots, l$}
		\If{$i + x_j \leq n$}
		\State $\tilde{L}[i + x_j] \gets \texttt{true}$
		\EndIf
		\EndFor
		\EndIf
		\EndFor
		\State \Return $\tilde{L}$
		\EndFunction
	\end{algorithmic}
\end{algorithm}

Now let $S = L_1 \cup L_2 \cup \ldots \cup L_k$. By the above algorithm, for all $i\in [k]$, we can compute the vector $\tilde{L}_{i}$ in time $\bigo(n l_i)$.  So, all vectors $\tilde{L}_{i}$, with $i\in [k]$ can be computed in $\bigo(n \sum^k_{i = 1} l_i)$. Then, in time $\bigo(n k)$ we can combine them (using $\lor$) to determine a vector $\tilde{S}$ such that, for every $x \in [n]$, $\tilde{S}[x]$ is true if and only if $x \in S$. In total, this requires time $\bigo(n \sum^k_{i = 1} l_i) = \bigo(n |S|)$. 
\end{proof}

\subsection{Proof of Lemma~\ref{regularPreprocessLemma}}

\begin{proof} We assume that $A$ is given as a list of states, and for each state $q$ of the DFA we are given the list (of size at most $|\Sigma|$) of transitions $(q,a,q')$ (where $q'$ is the target state of the transition from state $q$ with letter $a$), sorted w.r.t. the letter $a$ labelling the respective transition (if this list is not given sorted, we can actually sort it in linear time using radix sort). As $A$ is potentially incomplete, the list of transitions of some state may have less than $|\Sigma|$ transitions. This representation uses $O(\size(A))$ memory. However, with this representation, computing the state reached by $A$ after reading a word $v$ of length $\ell$ takes $O(\ell \log|\Sigma|)$ (as after reading each letter $v[i]$ of $v$, we find the transition that needs to be executed by binary searching it (i.e., searching for the transition labelled with $v[i]$) in the sorted list of transitions leaving the current state).  

We can improve our implementation of DFAs by storing, for each state $q$, the list of all transitions having the form $(q,a,q')$ (where $q'$ is the target state of the transition from state $q$ with letter $a$) as a $y$-fast tries \cite{Willard83}, where the main key of the elements stored in the trie are the letters labelling those transitions, which are drawn from the integer alphabet $\Sigma=[\sigma]$. In this setting, computing the state reached by $A$ after reading a word $v$ of length $\ell$ takes $O(\ell \log \log |\Sigma|)$ (again: after reading each letter $v[i]$ of $v$, we find the transition that needs to be executed by searching the transition labelled with $v[i]$ in the $y$-fast trie storing all the transitions leaving the current state).
 
We further define an $n\times n$ boolean matrix $M$, with all entries initialized with false. Then, we simply run the DFA $A$ on the strings $w[i..n]$, for $i\in [n]$, and set $M_{ij}$ to true if and only if the automaton $A$ is in a final state after reading $w[i..j]$. The matrix $M$ can then be used to answer in $O(1)$-time queries asking whether $w[i..j]$ is in $L$ or not: we answer yes if and only if $M_{ij}$ is true. Clearly, the time needed to compute $M$ is $O(n^2\log \log |\Sigma|)\subseteq O(n^2\log \log n)$ (as $\Sigma=[\sigma]$ is an integer alphabet, with $\sigma\leq n$), to which we need to add $O(\size(A))$ time for reading the input and storing the DFA $A$.  
\end{proof}

\subsection{On the Pathwidth of Hamiltonian Graphs}
\begin{lemma}\label{lem:svnham}
The problem of deciding whether the vertex separation number is less than or equal to a given integer $k$ for graphs which have a Hamiltonian cycle (also given as input) is $\npclass$-complete.
\end{lemma}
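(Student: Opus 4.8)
I want to show that deciding $\vsn(G) \le k$ is $\npclass$-complete when $G$ is promised to have a Hamiltonian cycle and such a cycle is supplied together with the input. Membership in $\npclass$ is clear: a linear ordering of the vertices is a polynomial-size certificate, and given an ordering one checks the defining condition of the vertex separation number in polynomial time. For hardness I plan a reduction from the general (unrestricted) problem ``$\vsn(G) \le k$?'', which is known to be $\npclass$-complete since $\vsn$ equals pathwidth. So, given an arbitrary graph $G = (V,E)$ with $|V| = n$ and a target $k$, I will build a graph $G'$, together with an explicit Hamiltonian cycle of $G'$, and a new bound $k'$, such that $\vsn(G) \le k$ iff $\vsn(G') \le k'$.

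\textbf{The gadget construction.} The natural idea is to augment $G$ with enough new structure to force the existence of a Hamiltonian cycle without substantially changing the vertex separation number. One clean way: take the standard fact that adding a universal vertex (a vertex adjacent to everything) raises $\vsn$ by exactly $1$; but a single universal vertex does not by itself create a Hamiltonian cycle unless $G$ is already reasonably connected. Instead I would attach to each vertex $v \in V$ a long private path (a ``pendant path'') $P_v$ of, say, $n+1$ fresh vertices, and then also add one global apex vertex $z$ adjacent to the far endpoint of every $P_v$ and to one designated vertex of $G$. Traversing $G$ is not needed for Hamiltonicity if the pendant paths and $z$ are wired so that a single cycle visits: $z \to$ (far end of $P_{v_1}$) $\to \cdots \to v_1 \to v_2 \to \cdots$; but we cannot rely on $G$ having a Hamiltonian path. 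The safer route is the classical trick used for pathwidth/treewidth-style restrictions: subdivide every edge of $G$ once, obtaining a bipartite graph with an ``edge vertex'' $x_e$ for each $e \in E$, then connect all the $x_e$ in a path and close it up through the original vertices, so that the Hamiltonian cycle is forced along the inserted path. I would pick whichever of these gadgets lets me prove the two inequalities $\vsn(G') \le \vsn(G) + c$ and $\vsn(G) \le \vsn(G') $ (up to the same additive constant $c$) cleanly, setting $k' = k + c$.

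\textbf{The two directions.} For the easy direction, from an ordering of $V$ witnessing $\vsn(G) \le k$, I insert the gadget vertices into the ordering at controlled places (each pendant path placed immediately after its anchor, each edge-vertex placed between the later of its two endpoints), and argue that at every prefix the number of ``active'' vertices grows by at most the constant $c$; this is a routine but slightly fussy bookkeeping argument. For the hard direction, from an ordering of $V(G')$ with separation number $\le k'$, I restrict to $V(G)$ and argue that every $G$-edge's separating behaviour is reflected, possibly with the gadget contributing at most $c$ extra active vertices, so $\vsn(G) \le k' - c = k$; here the key point is that the subdivision/pendant structure cannot \emph{hide} a large separator of $G$, because an edge $uv$ of $G$ forces its subdivision vertex $x_{uv}$ to be active across the cut whenever $u$ and $v$ straddle it.

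\textbf{Main obstacle.} The delicate part is guaranteeing simultaneously (i) that $G'$ genuinely has a Hamiltonian cycle \emph{regardless} of the structure of $G$ (so the reduction is valid from an arbitrary instance, not just Hamiltonian ones), (ii) that the cycle is easy to exhibit explicitly as part of the output, and (iii) that the gadget perturbs $\vsn$ by only an additive constant rather than, say, a multiplicative factor or an additive term depending on $n$. Reconciling (i) with (iii) is the real tension: the cheapest way to force Hamiltonicity (a universal apex plus spanning structure) tends to blow up $\vsn$, while gadgets that preserve $\vsn$ tightly tend not to force a Hamiltonian cycle on their own. I expect the resolution to be the edge-subdivision-plus-connecting-path construction, where the added path through the edge-vertices is what supplies the Hamiltonicity, and where one shows $\vsn$ of the subdivided graph differs from $\vsn(G)$ by at most $1$ or $2$; verifying that precise relationship for the subdivision is the technical heart of the proof.
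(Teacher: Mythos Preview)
Your proposal correctly identifies the overall reduction strategy (from the unrestricted $\vsn$ problem) and the membership argument, but it does not actually complete a reduction, and the obstacle you flag is real for the gadgets you suggest. Neither construction you sketch works as stated: for the pendant-paths-plus-apex idea you already note you cannot rely on a Hamiltonian path in $G$; for the edge-subdivision-plus-path idea, connecting the subdivision vertices in a path does not by itself yield a Hamiltonian cycle of the \emph{whole} graph, since each original vertex $v$ is adjacent only to subdivision vertices of edges incident to $v$, and for arbitrary $G$ (e.g.\ with isolated vertices or vertices of degree one) there is no way to thread the cycle through all original vertices. So the ``technical heart'' you defer is not just bookkeeping---it is the construction itself.

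The deeper issue is a hidden assumption: you insist that the gadget perturb $\vsn$ by an additive \emph{constant}, and therefore reject the ``universal apex'' idea because it ``tends to blow up $\vsn$''. But for a polynomial-time reduction it suffices that the shift be \emph{exactly computable} from the input, not that it be $O(1)$. The paper exploits precisely this: it adjoins to $G$ a complete graph $K_n$ on $n$ fresh vertices, making every new vertex adjacent to every vertex of $G\cup K_n$. Then a Hamiltonian cycle is immediate and independent of the structure of $G$---just alternate $(v_1,w_1,v_2,w_2,\ldots,v_n,w_n,v_1)$---and one proves $\vsn(G\oplus K_n)=\vsn(G)+n$, so the target becomes $k'=k+n$. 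This sidesteps your tension between (i) and (iii) entirely: you get Hamiltonicity for free at the cost of a large but perfectly controlled shift in $\vsn$. The nontrivial part of the paper's argument is the equality $\vsn(G\oplus K_n)=\vsn(G)+n$ (in particular the lower bound), which is established by analysing where the $K_n$-vertices can sit in an optimal ordering of $G\oplus K_n$.
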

\begin{proof}
Let $G=(V,E)$ be a graph with $V=\{v_1,\ldots,v_n\}$, and assume $\vsn(G)=k$. Moreover, let $\sigma=(v'_1,\ldots,v'_n)$ be an ordering of the vertices of $G$ such that the vertex separation number of the ordering $\sigma$ is $k$. 

Now, let $H=G \oplus K_n$ be the graph obtained by adding to $G$ the complete graph $K_n = (V_n = \{w_1, w_2, \ldots, w_n\}, E_n)$ and connecting all the vertices of $V_n$ to all the vertices of $G$. More precisely, we add the vertices $V_n$ to $V$ and edges connecting vertex $w_i$ to every vertex $x \in V\cup (W \setminus \{w_i\})$, for all $i\in [n]$. 

We will show that $\vsn(H)=n+\vsn(G)=n+k$. 

Intuitively, to compute the $\vsn$ of an ordering of the vertices (represented as a sequence of vertices) of some graph (in particular, of $H$ or $G$, in our case), we will traverse this sequence left to right with a pointer $p$. When $p$ reaches position $j$ of the sequence, we will see how many of the first $j-1$ vertices (called the left part of $\gamma$ w.r.t. the current position of the pointer) are connected to one of the last $n-j+1$ vertices (called the right part of $\gamma$ w.r.t. the current position of the pointer). 

So, let us first show that $H$ admits an ordering of its vertices with $\vsn$ equal to $n+k$. Indeed, this ordering is $\gamma=(w_1,\ldots,w_n,v'_1,\ldots v'_n)$. Clearly, for all $j\in [n]$, all vertices $w_1,\ldots,w_{j-1}$ are connected to all the nodes of the set $w_j,\ldots,w_n, v'_1,\ldots,v'_n$. Moreover, all the vertices $w_1,\ldots,w_{n}$ are connected to all vertices in $v'_1,\ldots,v'_n$. So, while traversing with the pointer $p$, left to right, the first $n+1$ positions of the sequence $\gamma$, the number of vertices from the left part w.r.t. the position of the pointer, which are connected to some vertex from the right part w.r.t. the position of the pointer, increases from $1$ to $n$. Consequently, while further traversing the positions of $\gamma$ the number of vertices from the left part w.r.t. the pointer, which are connected to nodes of the right part w.r.t. the pointer, stays always greater than or equal to $n$. In particular, when the pointer reaches $v'_j$ (that is, it already went over $w_1,\ldots,w_n,v'_1,\ldots v'_{j-1}$), the number of vertices from the left part w.r.t. the pointer, which are connected to nodes of the right part w.r.t. the pointer, is $n$ plus the number of vertices of $v'_1,\ldots v'_{j-1}$ which are connected to some vertex of $v'_j,\ldots,v'_n$. But, computed over $\ell\in[n]$, the maximum number of vertices of $v'_1,\ldots v'_{\ell-1}$ which are connected to vertices of $v'_{\ell},\ldots,v'_n$ is exactly $k$ (as the $\vsn$ of the ordering $\sigma$ is $k$). Therefore, the $\vsn$ of the ordering $\gamma$ is $n+k$. 

Now, assume, for the sake of a contradiction, that $H$ admits an ordering of its vertices which has $\vsn$ equal to $q<n+k$. We denote by $\alpha=(\alpha_1, v''_1, \alpha_2, v''_2, \ldots, \alpha_n,v''_n,\alpha_{n+1})$ the ordering of the vertices of $H$ with minimum $\vsn$, where $\alpha_i$ is a (potentially empty) sequence of vertices from the set $V_n$, for all $i\in [n]$. Again, we traverse this sequence with a pointer, and observe that the number of vertices from the left side w.r.t. the position of the pointer, which are connected to the nodes from the right side, is at least $n$. Let $h$ be such that $\alpha_{\ell}$ is empty for all $\ell>h$ and $\alpha_h$ is non-empty. Then, while traversing $\alpha$, until reaching the position of $v''_{h+1}$, the number of vertices from the left side w.r.t. the position of the pointer, which are connected to the nodes from the right side, grows from $1$ to $n+h$ (the latter value being reached when the left part consists of $(\alpha_1, v''_1, \alpha_2, v''_2, \ldots, \alpha_h,v''_h)$ and the right part of $(v''_{h+1},\ldots,v''_n)$. As the ordering $\gamma$ of the vertices of $H$ has $\vsn$ equal to $n+k$, and $\alpha$ is the ordering with minimum $\vsn$, it immediately follows that $h<k$. While continuing our traversal of $\alpha$, with the pointer moving over $(v''_{h+1},\ldots,v''_n)$, when reaching $v''_\ell$ the number of vertices from the left side w.r.t. the position of the pointer (i.e., $(\alpha_1, v''_1, \alpha_2, v''_2, \ldots, \alpha_h,v''_h,\ldots,v''_\ell)$), which are connected to the nodes from the right side (i.e., $(v''_{\ell+1},\ldots,v''_n)$), equals $n$ plus the number of vertices from $(v''_1,\ldots,v''_\ell)$ which are connected to vertices from $(v''_{\ell+1},\ldots, v''_n)$. Let $n+t $ be the maximum, over all $\ell \in \{h,h+1,\ldots,n\} $, number of vertices from the left side w.r.t. the position of the pointer (i.e., $(\alpha_1, v''_1, \alpha_2, v''_2, \ldots, \alpha_h,v''_h,\ldots,v''_\ell)$), which are connected to some nodes from the right side (i.e., $(v''_{\ell+1},\ldots,v''_n)$). Clearly, $n+t=q $, i.e., $n+t$ is the $\vsn$ of $\alpha$. Also, we get that $t\geq h$ and $t<k$ (as the $\vsn$ of $\alpha$ is $n+t$ and this is smaller than $n+k$, which is the $\vsn$ of $\gamma$). Further, we show that the ordering $\pi=(v''_1,\ldots,v''_n)$ of the vertices of $G$ has $\vsn$ is equal to $t$. Indeed, when moving the pointer over the first $h+1$ vertices in $\pi$, the number of vertices from the left side w.r.t. the position of the pointer (i.e., $(v''_1,\ldots,v''_\ell)$, where $\ell\leq h$), which are connected to nodes from the right side (i.e., $(v''_{\ell+1},\ldots,v''_n)$) stays at most $h$. Then, from the analysis of $\alpha$ done above, it follows that, when going with the pointer over the positions $\ell >h$, the maximum number of vertices from the left side w.r.t. the position of the pointer (i.e., $(v''_1,\ldots,v''_\ell)$), which are connected to nodes from the right side (i.e., $(v''_{\ell+1},\ldots,v''_n)$) is upper bounded by $t$ (and this value is actually reached for some position $\ell$). So, the $\vsn$ of $\pi$ is equal to $t<k$. But this is a contradiction with the assumption that $\vsn(G)=k$. 

In conclusion $\vsn(H)=n+k$. 

Now, coming to the statement of our problem, we proceed as follows. The following problem is known to be NP-complete \cite{Bodlaender98}: given a graph $G$ and an integer $k$, decide whether $\vsn(G)\leq k$.  We can now immediately reduce this problem to the following problem: given a Hamiltonian graph $G$, a Hamiltonian cycle of $G$, and an integer $k$, decide whether $\vsn(G)\leq k$. Indeed, for an input of the first problem (a graph $G$ and an integer $k$) we construct the input of the second problem as follows: the graph $G\oplus K_n$ (where $n$ is the number of vertices of $G$), the Hamiltonian cycle $(v_1,w_1,v_2,w_2,\ldots,v_n,w_n, v_1)$ (where $\{v_1,\ldots,v_n\}$ are the vertices of $G$ and $\{w_1,\ldots,w_n\}$ are the vertices of $K_n$), and the integer $n+k$. The proof above shows the correctness of this reduction ($\vsn(G)\leq k$ if and only if $\vsn(G\oplus K_n)\leq n+k$), which can clearly be performed in polynomial time. Therefore the second problem is also $\npclass$-complete, which completes the proof of our statement.\ 

\end{proof}

\section{Proof of Theorem~\ref{UpperBoundsTheorem}}

The theorem follows from the following two lemmas. 

\begin{lemma}\label{regConstkXPLemma}
$\matchProb$ with regular constraints can be solved in time $\bigO(|v| |\Sigma| |p| \gapsize{\mathcal{C}}^{|\mathcal{C}|})$.
\end{lemma}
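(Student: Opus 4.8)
The plan is to solve $\matchProb$ with regular constraints by a dynamic programming over prefixes of the pattern $p$, where the state records, for each position of $p$ already embedded that is still "open" (i.e.\ is the left endpoint $i$ of some constraint $(i,j,L)$ whose right endpoint $j$ has not yet been processed), the current state of the DFA for that constraint as it scans the gap. Concretely, write $m=|p|$ and $n=|v|$, and let $\mathcal{C}$ contain the constraints $C_1,\dots,C_t$ with $t=|\mathcal{C}|$, where $C_r=(i_r,j_r,L_r)$ and $A_r$ is the DFA for $L_r$. For $k\in[m]$, say a constraint $C_r$ is \emph{active at step $k$} if $i_r < k \le j_r$. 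First I would observe that at any step $k$ at most $t$ constraints are active, so a "configuration" is a choice, for each active constraint $C_r$, of a state of $A_r$, giving at most $\prod_r\states(A_r)\le \gapsize{\mathcal{C}}^{|\mathcal{C}|}$ possibilities (using $\states(A_r)\le\size(A_r)\le\gapsize{\mathcal{C}}$).

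The DP table is $D[k][c][q]$, ranging over $k\in[m]$, positions $c\in[n]$, and configurations $q$ of the constraints active at step $k$; it is \texttt{true} iff there is a partial embedding $e:[k]\to[n]$ with $e(k)=c$, $p[1..k]\preceq_e v$, such that every constraint $C_r$ with $j_r\le k$ is satisfied, and for every constraint $C_r$ active at step $k$, the DFA $A_r$ started in its initial state and run on the gap $v[e(i_r)+1..c-1]$ reaches the state recorded by $q$ for $C_r$. The transition from step $k$ to step $k{+}1$ fixes the next embedded position $c'>c$ with $v[c']=p[k{+}1]$; for each constraint active at both steps we feed the letters $v[c+1],v[c+2],\dots,v[c'-1]$ into its DFA (these are the symbols of the newly extended gap), newly-opened constraints (those with $i_r=k{+}1$) get initial states, and each constraint $C_r$ with $j_r=k{+}1$ is checked for acceptance: we additionally feed $v[c+1..c'-1]$ and require the resulting state to be final, else the transition is forbidden. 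The answer is \texttt{yes} iff $D[m][c][q_{\mathrm{final}}]$ is \texttt{true} for some $c$, where $q_{\mathrm{final}}$ is the (empty, or trivially-satisfied) configuration at step $m$.

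For the running time: there are $O(m)$ values of $k$, at most $n$ values of $c$, and at most $\gapsize{\mathcal{C}}^{|\mathcal{C}|}$ configurations, so $O(m\, n\, \gapsize{\mathcal{C}}^{|\mathcal{C}|})$ table entries. Computing one transition requires, for each of the $n$ candidate values $c'$, advancing each of the $O(|\mathcal{C}|)$ active DFAs across the gap. Naively this costs $O(n\,|\mathcal{C}|)$ letters read per entry, which is more than we want; instead I would precompute, in an $O(n\,|\Sigma|\,|\mathcal{C}|)$-time preprocessing, for each constraint's DFA and each pair of positions the transition induced — or, more simply, sweep $c'$ from $c{+}1$ upward while maintaining the current state of each active DFA incrementally, paying $O(|\mathcal{C}|\,|\Sigma|)$ per new letter for the DFA lookups (using the sorted transition lists as in Lemma~\ref{regularPreprocessLemma}) and charging it once to the pair $(c,c')$. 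Summing, the DFA-advancing work over a fixed $k$ and configuration $q$ is $O(n\,|\Sigma|\,|\mathcal{C}|)$, which over all $k$ and $q$ gives $O(m\,n\,|\Sigma|\,|\mathcal{C}|\,\gapsize{\mathcal{C}}^{|\mathcal{C}|})$; absorbing $|\mathcal{C}|\le|p|^2$ into the stated bound (or, being slightly more careful about the bookkeeping) yields $\bigO(|v|\,|\Sigma|\,|p|\,\gapsize{\mathcal{C}}^{|\mathcal{C}|})$.

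The main obstacle, and the point needing the most care, is making the per-step gap-scanning cost collapse to the claimed bound rather than an extra factor of $n$ or $|\mathcal{C}|$: one must argue that advancing all active DFAs across all candidate next-positions can be amortised so that each text symbol is processed $O(|\Sigma|)$ times per active constraint, and then check that $|\mathcal{C}|$ and the constraint-checking steps (which fire $O(|p|)$ times in total, once per constraint) do not contribute more than a polynomial factor already dominated by $|p|$ and $|v|$. The correctness argument is the routine induction on $k$ showing the table entries capture exactly the described partial embeddings.
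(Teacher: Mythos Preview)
Your core idea is exactly the paper's: the relevant state is the pair (counter $k$, tuple $q$ of DFA states for the currently active constraints), giving $O(|p|\,\gapsize{\mathcal{C}}^{|\mathcal{C}|})$ combinations. Where you diverge is in making $c=e(k)$ a third index of the DP table, and this is precisely what breaks your running-time analysis.

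Your claim that ``the DFA-advancing work over a fixed $k$ and configuration $q$ is $O(n\,|\Sigma|\,|\mathcal{C}|)$'' does not hold. For fixed $(k,q)$ there are up to $n$ starting positions $c$, and for \emph{each} of them you sweep $c'$ from $c{+}1$ to $n$, paying $\Theta(1)$ per step just to advance the DFAs by one letter. That is $\Theta(n^2)$ pairs $(c,c')$, not $\Theta(n)$; different starting points $c$ yield different DFA trajectories from the same $q$, so these sweeps cannot be merged. You end up with an unavoidable extra factor of $|v|$, and the hand-wave at the end (``absorbing $|\mathcal{C}|\le|p|^2$'' or ``being slightly more careful'') does not remove it. You correctly flagged this as ``the main obstacle,'' but the obstacle is real and the argument as written does not overcome it.

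The paper's resolution is to drop the $c$-index entirely: it builds a single NFA whose states are exactly your pairs $(k,q)$, advances it one text symbol at a time (nondeterministically choosing at each step whether to match $p[k{+}1]$), and then checks $v\in\lang{A}$ in time $O(|v|\cdot|A|)$. Equivalently in DP language: redefine the table so that $c$ is the \emph{current text position being scanned} rather than $e(k)$, and let the DFA configurations record the states after reading up to position $c$; then every table entry depends only on entries at position $c{-}1$, each text symbol is read once per reachable $(k,q)$, and the stated bound follows immediately.
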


\begin{proof}
For every $(i, j, L_{i, j}) \in \mathcal{C}$, let $A_{i, j}$ be the DFA that represents $L_{i, j}$, i.\,e., $\lang{A_{i, j}} = L_{i, j}$.\par
We transform the DFAs $A_{i, j}$ for every $(i, j, L_{i, j}) \in \mathcal{C}$ into an NFA $A$ as follows. The NFA $A$ accepts all strings $v$ with $p \subseq_e v$ as follows. We maintain a counter $i \in \{0, 1, \ldots, |p|\}$ initialised with $0$. Whenever the next input symbol equals $p[i + 1]$, $A$ can guess to either increment the counter to $i+1$ or to leave it unchanged, and $A$ accepts if the counter is $|p|$ after the input is completely read. Moreover, for every $(i, j, L_{i, j}) \in \mathcal{C}$, as soon as $A$ guessed that it just read symbol $p[i]$ (i.\,e., it increments the counter from $i-1$ to $i$), it starts simulating $A_{i, j}$ and it will change the counter to $j$ only if $A_{i, j}$ is in a final state at this moment (and rejects, if this is not the case). In this way, $A$ accepts a string $v$ with $p \subseq_e v$, and for every $(i, j, L_{i, j}) \in \mathcal{C}$, it also checks whether $\gap_{v, e}[i, j] \in \lang{A_{i, j}}$. Hence, $A$ accepts exactly the strings $v$ such that $p \subseq_e v$, and $v$ and $e$ satisfy all gap constraints of $\mathcal{C}$. \par
In order to estimate $A$'s size, we observe that its states must store the counter $i$ and a state of all $A_{j, j'}$ such that $(i, j, L_{i, j}) \in \mathcal{C}$ and $j \leq i < j'$. This means that its number of states can be bounded by $\bigO(|p| \prod_{(i, j, L_{i, j}) \in \mathcal{C}} |A_{i, j}|) = \bigO(|p| \gapsize{\mathcal{C}}^{|\mathcal{C}|})$, and, since the out-degree for each state is $\bigO(|\Sigma|)$, the total size of $A$ is bounded by $\bigO(|\Sigma| |p| \gapsize{\mathcal{C}}^{|\mathcal{C}|})$.\par
Hence, we can check whether there is an embedding $e$ with $p \subseq_{e} v$ such that $v$ and $e$ satisfy $\mathcal{C}$ by checking whether or not $v \in \lang{A}$. The latter can be checked in time $\bigO(|v| |A|) = \bigO(|v| |\Sigma| |p| \gapsize{\mathcal{C}}^{|\mathcal{C}|})$.
\end{proof}

\begin{lemma}\label{semilConstkXPLemma}
$\matchProb$ with semilinear length constraints can be solved in time $\bigO(|v|^{(2|\mathcal{C}| + 1)} + |v| |\mathcal{C}| \gapsize{\mathcal{C}})$.
\end{lemma}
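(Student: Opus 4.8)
The plan is to brute-force the images of the constrained positions of $p$ and to fill in the unconstrained parts of $p$ greedily. Write $n=|v|$, $m=|p|$, and let $P=\bigcup_{(i,j,L)\in\mathcal{C}}\{i,j\}$ be the set of \emph{constrained positions}; note $|P|\le\min\{2|\mathcal{C}|,m\}$. First I would apply \cref{semilinearPreprocessLemma} once per constraint, which in total costs $\bigO(n\,\size(\mathcal{C}))\subseteq\bigO(n|\mathcal{C}|\gapsize{\mathcal{C}})$ and afterwards lets us decide ``$\ell\in S$'' in $\bigO(1)$ time for every semilinear set $S$ appearing in a constraint of $\mathcal{C}$ and every $\ell\in[n]$.

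Then I would enumerate all functions $g\colon P\to[n]$; there are at most $n^{|P|}\le n^{2|\mathcal{C}|}$ of them. For a fixed $g$, discard it unless $g$ is strictly increasing with $v[g(q)]=p[q]$ for all $q\in P$; then, for each constraint $(i,j,L)\in\mathcal{C}$ with $L=\{w:|w|\in S\}$, test whether $g(j)-g(i)-1\in S$ in constant time. Finally, writing $P=\{q_1<\dots<q_t\}$ and setting $q_0=0$, $q_{t+1}=m+1$, $g(q_0)=0$, $g(q_{t+1})=n+1$, verify for every $s\in\{0,\dots,t\}$ that the (possibly empty) factor $p[q_s+1..q_{s+1}-1]$ is a subsequence of $v[g(q_s)+1..g(q_{s+1})-1]$, by a greedy left-to-right scan of the latter. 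Accept iff some $g$ survives all these tests.

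Correctness rests on a gluing claim: an embedding $e\colon[m]\to[n]$ with $p\subseq_e v$ that satisfies $\mathcal{C}$ and agrees with $g$ on $P$ exists iff $g$ passes all tests. The ``only if'' direction is immediate by restriction, since $|\gap_{v,e}[i,j]|=e(j)-e(i)-1=g(j)-g(i)-1$ and $e$ maps $p[q_s+1..q_{s+1}-1]$ strictly between $g(q_s)$ and $g(q_{s+1})$. For the ``if'' direction, one combines $g$ with the $t+1$ greedy witness embeddings, whose images lie in pairwise disjoint windows of $v$ in increasing order, into a single valid embedding; the constraint tests depend on $g$ alone, so they remain satisfied. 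The degenerate cases $q_{s+1}=q_s+1$ and $g(q_{s+1})=g(q_s)+1$ force the relevant factors to be empty and are handled uniformly by the scan. Since $g$ ranges over \emph{all} placements of $P$, this is a correct decision procedure.

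For the running time, each candidate $g$ is processed in $\bigO(n+|\mathcal{C}|)$ time: the $t+1$ scans touch pairwise disjoint windows of $v$ and thus cost $\bigO(n)$ overall, plus $\bigO(|\mathcal{C}|)$ for the constraint tests and $\bigO(|P|)$ for the monotonicity and symbol tests. Hence the search runs in $\bigO(n^{2|\mathcal{C}|}(n+|\mathcal{C}|))$. A short case distinction closes the bound: if $|\mathcal{C}|\le n$, the bracket is $\bigO(n)$; if $|\mathcal{C}|>n$, then $|\mathcal{C}|\ge m+1$, so $n^{|P|}|\mathcal{C}|\le n^{m}\cdot n^{2}\le n^{2|\mathcal{C}|+1}$. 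Adding the preprocessing yields the claimed $\bigO(n^{2|\mathcal{C}|+1}+n|\mathcal{C}|\gapsize{\mathcal{C}})$. I expect the only mildly delicate parts to be a clean formulation of the gluing claim (with its empty-factor edge cases) and this last bit of bookkeeping that absorbs the stray $|\mathcal{C}|$ factor into $n^{2|\mathcal{C}|+1}$; the rest is routine.
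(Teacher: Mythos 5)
Your proposal is correct and follows essentially the same route as the paper: preprocess each semilinear set via \cref{semilinearPreprocessLemma}, brute-force the images of the constraint endpoints (the paper enumerates $O(n^{2|\mathcal{C}|})$ endpoint tuples, you enumerate functions on the set of distinct constrained positions, a harmless refinement), and greedily extend to the unconstrained segments. The only cosmetic difference is your explicit bookkeeping absorbing the per-candidate $O(|\mathcal{C}|)$ constraint checks into $n^{2|\mathcal{C}|+1}$, which the paper folds into the enumeration.
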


\begin{proof}
Let $(v, p, \mathcal{C})$ be an instance of $\matchProb$ with $\mathcal{C} = \{C_1, C_2, \ldots, C_k\}$ and $C_\ell = (i_\ell, j_\ell, S_\ell)$ for every $\ell \in [k]$, and with $n = |v|$ and $m = |p|$. By Lemma~\ref{semilinearPreprocessLemma}, we can compute in time $\bigo(\sum^k_{\ell = 1}(n|S_\ell|)) = \bigo(n k \gapsize{\mathcal{C}})$ a data structure that, for every $\ell \in [k]$ and $x \in [n]$, allows us to answer whether $x \in S_\ell$ in constant time. \par
We enumerate every tuple
\begin{equation*}
((i'_{1}, j'_1), (i'_{2}, j'_2), \ldots, (i'_{k}, j'_k)) \in ([n] \times [n])^k\,,
\end{equation*}
such that, for every $\ell \in [k]$, $i'_{\ell} < j'_\ell$ and $(j'_\ell - i'_\ell - 1) \in S_{\ell}$. Note that there are $\bigO(n^{2k})$ such tuples and we can easily enumerate them with the help of the data structure from Lemma~\ref{semilinearPreprocessLemma}. \par
We note that each such tuple induces a partial mapping $e : [m] \to [n]$ by $i_\ell \mapsto i'_\ell$ and $j_\ell \mapsto j'_\ell$ for every $\ell \in [k]$. We will now check whether this partial mapping can be extended to an embedding $e' : [m] \to [n]$ that witnesses $p \subseq_{e'} v$ and that satisfies all gap-constraints.\par
Let $q_1, q_2, \ldots, q_{k'}$ be the elements of $[m]$ for which $e$ is defined. For every $j \in [k']$, we check whether $p[q_j] = v[e(q_j)]$ and, if yes, we set $e'(q_j) := e(q_j)$. Then, if possible, for every $j \in [k' - 1]$, we map the positions $q_j + 1, q_j + 2, \ldots, q_{j+1} - 1$ to positions in $\{e(q_j) + 1, \ldots, e(q_{j + 1}) - 1\}$ such that $e'(q_j + 1) < e'(q_j + 2) < \ldots < e'(q_{j+1} - 1)$ and, for every $r \in \{q_j + 1, q_j + 2, \ldots, q_{j+1} - 1\}$, $p[r] = v[e'(r)]$. Analogously, we map the positions $1, 2, \ldots, q_1 - 1$ to positions in $\{1, 2, \ldots, e(q_1) - 1\}$ such that $e'(1) < e'(2) < \ldots < e'(q_1- 1)$ and, for every $r \in \{1, 2, \ldots, q_1 - 1\}$, $p[r] = v[e'(r)]$, and we map the positions $q_{k'} + 1, q_{k'} + 2, \ldots, m$ to positions in $\{e'(q_{k'}) + 1, e'(q_{k'}) + 2, \ldots, n\}$ such that $e'(q_{k'} + 1) < e'(q_{k'} + 2) < \ldots < e'(m)$ and, for every $r \in \{q_{k'} + 1, q_{k'} + 2, \ldots, m\}$, $p[r] = v[e'(r)]$. All these mappings can be constructed greedily.\par
For a single tuple, this can be done in time $\bigO(m + n)$. Hence, the total running time is $\bigO(n k \gapsize{\mathcal{C}} + n^{2k + 1})$.
\end{proof}

\subsection{Proof of Theorem~\ref{cliqueTheorem}}

\begin{proof}
For the $k$-$\cliqueProb$ problem, we get an undirected graph $G = (V, E)$ and a number $k \in [|V|]$, and we want to decide whether there is a clique of size at least $k$, i.\,e., a set $K \subseteq V$ with $|K| \geq k$ and, for every $u, v \in K$ with $u \neq v$, we have that $\{u, v\} \in E$. It is a well-known fact that $k$-$\cliqueProb$ is $\wclass[1]$-hard. We will now define a parameterised reduction from $k$-$\cliqueProb$ to $\matchProb_{\semi}$ parameterised by $|p|$.

Let $(G, k)$ be an instance of $k$-$\cliqueProb$ and the graph $G$ consisting of vertices $v_1, \dots, v_n$ and represented by its adjacency matrix $A = (a_{ij})_{1 \leq i, j \leq n}$, where $a_{ij} = 1$ if $\{v_i, v_j\} \in E$ and $a_{ij} = 0$ otherwise. We can w.l.o.g.\ assume that each vertex of $G$ has a loop. Therefore $A$ is a symmetric $\{0, 1\}$-matrix with $1$'s on the diagonal. Now for $K \subseteq [n]$ with $|K| = k$ the vertices $\{v_i \mid i \in K\}$ are a $k$-clique of the graph if and only if
\[
    a_{ij} = 1 \qquad \forall\, i, j \in K
\]
Note that because of the assumed loops it is not necessary to demand $i \neq j$. It follows that $G$ has a $k$-clique, if and only if the $k \times k$ matrix containing only 1's is a principal submatrix of $A$, i.e., a submatrix where the set of remaining rows is the same as the set of remaining columns. By flattening out the matrices into strings, the problem of finding a principal submatrix can be framed as a problem of finding a subsequence, with constraints that ensure that such a subsequence actually corresponds to a principal submatrix. Therefore, we define the pattern 
\[
    p = 0\underbrace{1111 \cdots 1}_{k^2}0
\]
and the text
\[
    w = 0 a_{11} a_{12} \cdots a_{1n} a_{21} \cdots a_{2n} \cdots a_{n1} \cdots a_{nn} 0
\]

The extra zeros mark the start and end of the flattened matrices. Adding the constraint
\begin{equation}
    \label{eq_constraint_anchor}
    (1, k^2 + 2, L(n^2))
\end{equation}
ensures that the zeros of the pattern actually are embedded into the first and last character of the text, since these are the only positions that are far enough apart from each other. 

For more intuitive naming of the positions, let us now define 
\[
    (i, j)_m \colonequals 1 + m(i-1) + j
\]
for $m \in \mathbb{N}, i, j \in [m]$. Then $(i, j)_k$ and $(i, j)_n$ can be used to refer to the position of the entries in the $i$-th row and $j$-th column of the flattened matrices in $p$ or $w$ respectively, for example $w[(i, j)_n] = a_{ij}$ (recall the preceding $0$ in $p$ and $w$, which explains the preceding ``$1 +$'' in the definition of $(i, j)_m$). 
We can now formalize the reasoning from above:

\medskip

\noindent\emph{Claim 1}: $G$ has a $k$-clique if and only if for some $1 \leq s_1 < \dots < s_k \leq n$ the function $e: [k^2 + 2] \to [n^2 + 2]$ with $e(1) = 1$, $e(k^2 + 2) = n^2+2$ and
    \[
        e((i, j)_k) = (s_i, s_j)_n \qquad \forall\, i, j \in [k]
    \]
    is an embedding of $p$ into $w$.

\medskip

\noindent\emph{Proof of Claim $1$}: $e$ is strictly increasing, therefore, it is a valid embedding if and only if 
    \[
        a_{s_is_j} = w[(s_i, s_j)_n] = p[(i, j)_k] = 1 \qquad \forall\, i, j \in [k]
    \]
    This means that $v_{s_1}, \dots, v_{s_k}$ are pairwise adjacent and form a $k$-clique of $G$.  \qed(\emph{Claim 1})

\medskip

For an embedding $e$ of $p$ into $w$ that satisfies constraint (\ref{eq_constraint_anchor}) and thus has $1 = e(1) < e((i, j)_k) < e(k^2 + 2) = n^2 + 2$ for $i, j \in [k]$ we define $R^{(e)}_{ij}, C^{(e)}_{ij} \in [n]$ as
\[
    (R^{(e)}_{ij}, C^{(e)}_{ij})_n = e((i, j)_k) 
\]
i.e., the row and column of $A$ where the element from the $i$-th row and $j$-th column of the principal submatrix is embedded.

We now have to introduce additional constraints, such that, for any embedding $e$ that satisfies these constraints, there are $s_1, \dots, s_k \in [n]$ with $R^{(e)}_{ij} = s_i$ and $C^{(e)}_{ij} = s_j$ for all $i, j \in [k]$. Let us first make sure that the elements on the diagonal are also embedded into the diagonal, i.e., that the submatrix we find is indeed principal. For this, we make use of the fact that the first character of the pattern is embedded in the first position of the text. Thus, if we introduce the constraints 
\begin{equation}
    \label{eq_constraint_diagonal}
    (1, (i, i)_k, L(0; n+1)) \qquad \text{for $i \in [k]$}
\end{equation}
it must follow that 
\[
    (R^{(e)}_{ii}, C^{(e)}_{ii})_n = e((i, i)_k) = e(1) + m_i n = 1 + m_i n = (m_i+1, m_i+1)_n
\]
for some $m_i \in \mathbb{N}_0$. We get $s_i \colonequals R^{(e)}_{ii} = C^{(e)}_{ii} = m_i + 1$, i.e., the $i$-th diagonal element of $p$ gets matched to the $s_i$-th diagonal element of $w$.

Now, we will enforce that all elements from the same column are also embedded into the same column. For this, we can use that $(r, c)_n \equiv c + 1 \mod n$. Then, adding the constraints
\begin{equation}
    \label{eq_constraint_columns}
    ((i, j)_k, (i+1, j)_k, L(n-1; n)) \qquad \text{for $i \in [k-1], j \in [k]$}
\end{equation}
ensures that for any $j \in [k]$ we have $C^{(e)}_{1j} = C^{(e)}_{2j} = \ldots = C^{(e)}_{kj}$. Lastly, we make sure that elements from the same row all are embedded into the same row by adding the constraints
\begin{equation}
    \label{eq_constraint_row}
    ((i, 1)_k, (i, k)_k, \{0\} \cup [n-1]) \qquad \text{for $i \in [k]$}
\end{equation}
Note that $[n] = \bigcup_{i=1}^n L(i) = \{1, 2, \ldots, n\}$ is a semilinear set.

Now, for any $i, j, j' \in [k]$ with $j < j'$ it is $R^{(e)}_{ij} \leq R^{(e)}_{ij'}$ and 
\begin{align*}
    n &\geq e((i, k)_k) - e(i, 1)_k) \\ 
    &\geq e((i, j')_k) - e((i, j)_k) \\ 
    &= (R^{(e)}_{ij}, C^{(e)}_{ij})_n - (R^{(e)}_{ij'}, C^{(e)}_{ij'})_n\\
    &= (1 + n(R^{(e)}_{ij'} - 1) + C^{(e)}_{ij'} - (1 + n(R^{(e)}_{ij} - 1) + C^{(e)}_{ij} \\ 
    &= n(R^{(e)}_{ij'} - R^{(e)}_{ij}) + \underbrace{s_{j'} - s_j}_{> 0}
\end{align*}

Therefore $R^{(e)}_{ij} = R^{(e)}_{ij'}$ and the elements must all be embedded into the same row (namely $s_i$). 

Now, any embedding $e$ that embeds $p$ into $w$ and satisfies the constraints (\ref{eq_constraint_anchor}),~(\ref{eq_constraint_diagonal}),~(\ref{eq_constraint_columns})~and~(\ref{eq_constraint_row}) fulfills all the properties defined by \emph{Claim $1$} above. Likewise, it can easily be verified that every embedding $e$ with these properties does also satisfy the given constraints. 

We observe that the alphabet of this reduction is constant, $\Sigma = \{0, 1\}$. Moreover, $|p| = \bigO(k^2)$, which means that it is a parameterised reduction from $k$-$\cliqueProb$ to $|p|$-$\matchProb_{\semi}$. This proves that $\matchProb_{\semi}$ parameterised by $|p|$ is $\wclass[1]$-hard, even for constant $|\Sigma|$. However, the constraints $((i, 1)_k, (i, k)_k, \{0\} \cup [n-1])$, $i \in [k]$, have not constant size (although all other constraints have indeed constant size). We will therefore next  improve the reduction to a parameterised Turing reduction with only constraints of constant size.  

Let us call the $\matchProb_{\semi}$-instance described above the \emph{original instance}, whereas, for every $d \in \{0\} \cup [n-1]$, we denote the above $\matchProb_{\semi}$-instance with every constraint $((i, 1)_k, (i, k)_k, \{0\} \cup [n-1])$ replaced by constraint $((i, 1)_k, (i, k)_k, L(d))$ as the \emph{$d$-variant} of the instance. We claim that the original instance is positive if and only if there is some $d \in \{0\} \cup [n-1]$ such that the $d$-variant is a positive instance. 

If the original instance is positive, then there are $1 \leq s_1 < s_2 < \ldots < s_k \leq n$ such that $e: [k^2 + 2] \to [n^2 + 2]$ with $e(1) = 1$, $e(k^2 + 2) = n^2+2$ and $e((i, j)_k) = (s_i, s_j)_n$ for every $i, j \in [k]$ is an embedding of $p$ into $w$ that satisfies all the constraints. However, for every $i \in [k]$, we have that $e((i, k)_k) - e((i, 1)_k) = s_k - s_1$, which means that the embedding also satisfies all constraints $((i, 1)_k, (i, k)_k, L(d))$ with $d = s_k - s_1 - 1$. Hence, the $d$-variant is positive for $d = s_k - s_1 - 1$.

On the other hand, assume that some $d$-variant is satisfied. Hence, there are $1 \leq s_1 < s_2 < \ldots < s_k \leq n$ such that $e: [k^2 + 2] \to [n^2 + 2]$ with $e(1) = 1$, $e(k^2 + 2) = n^2+2$ and $e((i, j)_k) = (s_i, s_j)_n$ for every $i, j \in [k]$ is an embedding of $p$ into $w$ that satisfies all the constraints of the instance, including the constraints $((i, 1)_k, (i, k)_k, L(d))$, $i \in [k]$. Since $d \in \{0\} \cup [n-1]$, this means that all the constraints $((i, 1)_k, (i, k)_k, \{0\} \cup [n-1])$, $i \in [k]$, are satisfied as well. Thus, the original instance is positive.

Consequently, constructing all the $d$-variants of the original instance describes a parameterised Turing reduction from $\cliqueProb$ to $\matchProb_{\semi}$. Moreover, the alphabet is constant and every semilinear length constraint has constant size. This proves that $\matchProb_{\semi}$ parameterised by $|p|$ is $\wclass[1]$-hard, even for constant $\gapsize{\mathcal{C}}$ and constant $|\Sigma|$.

Next, we adapt the original reduction from above (without the adaption that yields a Turing reduction with only constant size constraints) into to a reduction from $k$-$\cliqueProb$ to $|p|$-$\matchProb_{\reg}$. This can be easily done by representing every semilinear length constraint as a regular constraint. More precisely, $\{v \in \Sigma^* \mid |v| \in L(n^2)\}$ can be represented by a DFA with $\bigo(n^2)$ states, and $\{v \in \Sigma^* \mid |v| \in L(0; n)\}$, $\{v \in \Sigma^* \mid |v| \in L(n-1; n-1)\}$ and $\{v \in \Sigma^* \mid |v| \in \{0\} \cup [n-1]\}$ can each be represented by a DFA with $\bigO(n)$ states.
 This describes a parameterised reduction from $k$-$\cliqueProb$ to $|p|$-$\matchProb_{\semi}$. The alphabet is still constant, but the constraints do not have constant size anymore. This proves that $\matchProb_{\reg}$ parameterised by $|p|$ is $\wclass[1]$-hard, even for constant $|\Sigma|$.

\end{proof}

\subsection{Proof of Theorem~\ref{SATTheorem}}

\begin{proof}
We reduce from monotone $1$-in-$3$-3SAT: Let $A = \{x_1, x_2, \ldots, x_n\}$ be a finite set and let $c_1, c_2, \ldots, c_m \subseteq A$ with $|c_i| = 3$ for every $1 \leq i \leq m$. Decide whether there is a subset $B \subseteq A$ such that $|c_i \cap B| = 1$ for every $1 \leq i \leq m$.\par
Let now $A$ and $c_1, c_2, \ldots, c_m$ be such an instance. For the sake of concreteness, we also set $c_j = \{x_{\ell_j, 1}, x_{\ell_j, 2}, x_{\ell_j, 3}\}$ for every $j \in [m]$, and with $x_{\ell_j, 1} < x_{\ell_j, 2} < x_{\ell_j, 3}$ for some order `$<$' on $A$. \par
We transform this instance into two strings $u_A$ and $v_A$ over the alphabet $\{\tb, \#\}$ as follows: 

\begin{align*}
&u_A& &=& &\overbrace{\tb \# \tb \# \ldots \tb \#}^{n \text{ times}}& &\overbrace{\tb \# \tb \# \ldots \tb \#}^{m \text{ times}}\,,& \\
&v_A& &=& &\underbrace{\tb^2 \# \tb^2 \# \ldots \tb^2 \#}_{n \text{ times}}& &\underbrace{\tb^3 \# \tb^3 \# \ldots \tb^3 \#}_{m \text{ times}}\,.& 
\end{align*}

For every $i \in [n + m]$, the position of $u_A$ and $v_A$ with the $i^{\text{th}}$ occurrence of $\#$ is called the \emph{$i^{\text{th}}$ separator} of $u_A$ and $v_A$, respectively. For every $i \in [n]$, we will call the $i^{\text{th}}$ occurrence of $\tb$ of $u_A$ the \emph{$x_i$-block} of $u_A$, and for every $j \in [m]$, we will call the $(n + j)^{\text{th}}$ occurrence of $\tb$ of $u_A$ the \emph{$c_j$-block} of $u_A$. We also denote the $\tb^2$ and the $\tb^3$ factors of $v_A$ as the $x_i$-blocks of $v_A$ and as the $c_j$-blocks of $v_A$ in an analogous way.\par
For every $i \in [n]$, we denote the position of $u_A$ that corresponds to the $x_i$-block of $u_A$ by $\pi_u(x_i)$, and for every $j \in [n]$, we denote the position of $u_A$ that corresponds to the $c_j$-block of $u_A$ by $\pi_u(c_j)$. We use analogous terminology for the positions in $x_i$-blocks or $c_j$-blocks of $v_A$, i.\,e., we use $\pi_v(x_i, p)$ for $i \in [n]$ and $p \in [2]$ to denote the $p^{\text{th}}$ position of the $x_i$-block of $v_A$, and we use $\pi_v(c_j, p)$ for $j \in [m]$ and $p \in [3]$ to denote the $p^{\text{th}}$ position of the $c_j$-block of $v_A$.\par
If $u_A \subseq_e v_A$ for some embedding $e : [|u_A|] \to [|v_A|]$, then $e$ must be such that every separator of $u_A$ is mapped to its corresponding separator of $v_A$. In particular, this implies that every $x_i$-block of $u_A$ is embedded as a subsequence into the $x_i$-block of $v_A$, and every $c_j$-block of $u_A$ is embedded as a subsequence into the $c_j$-block of $v_A$.\par
For every $i \in [n]$, there are exactly two different ways of how the $x_i$-block of $u_A$ can be embedded as a subsequence into the $x_i$-block of $v_A$: position $\pi_u(x_i)$ is mapped to either position $\pi_v(x_i, 1)$ of $v_A$ (i.\,e., the left occurrence of $\tb$ of the $x_i$-block of $v_A$) or to position $\pi_v(x_i, 2)$ of $v_A$ (i.\,e., the right occurrence of $\tb$ of the $x_i$-block of $v_A$). We will interpret the second way of how the $x_i$-block is mapped as ``selecting $x_i$ into the solution set $B$''. Let us make this precise by the following table (by $(\ast_1)$ we denote the kind of mapping that we interpret as selecting $x_i$ and by $(\ast_0)$ we denote the other one):\smallskip

\begin{tabular}{c|cccc|cccc}
$x_i$-block in  & \multicolumn{4}{c|}{$(\ast_0)$} & \multicolumn{4}{c}{$(\ast_1)$} \\\hline
$u_A$ & $\#$ & $\tb$ &               & $\#$ & $\#$ &               & $\tb$ & $\#$ \\\hline
$v_A$ & $\#$ & $\tb$ & $\tb$ & $\#$ & $\#$ & $\tb$ & $\tb$ & $\#$
\end{tabular}\smallskip

We conclude that we can therefore interpret an embedding $e$ with $u_A \subseq_e v_A$ as a subset $B_e \subseteq A$ that contains exactly those $x_i$ such that the $x_i$-block of $u_A$ is embedded according to $(\ast_1)$. The next goal is to add gap-constraints that force this set $B_e$ to be a solution set, i.\,e., $|B_e \cap c_j| = 1$ for every $j \in [m]$. For this, the $c_j$-blocks will be important.\par
For every $j \in [m]$, there are exactly three different ways of how the $c_j$-block of $u_A$ can be embedded as a subsequence into the $c_j$-block of $v_A$ (recall that every $c_j$-block of $u_A$ is embedded as a subsequence into the $c_j$-block of $v_A$), depending on whether $\pi_u(c_j)$ is mapped to the first or the second or the third occurrence of $\tb$ of $v_A$'s $c_j$-block. Let us make this again precise with a table (we denoted the three possible embeddings by $(\dagger_1)$, $(\dagger_2)$ and $(\dagger_3)$, respectively):\smallskip

\begin{tabular}{c|ccccc|ccccc|ccccc}
$c_j$-block in & \multicolumn{5}{c|}{$(\dagger_1)$} & \multicolumn{5}{c|}{$(\dagger_2)$} & \multicolumn{5}{c}{$(\dagger_3)$}\\\hline
$u_{A}$ & $\#$ & $\tb$ &         &          & $\#$ &  $\#$  &         & $\tb$ &          & $\#$ & $\#$ &          &         & $\tb$ & $\#$           \\\hline
$v_{A}$ & $\#$ & $\tb$ & $\tb$ & $\tb$ & $\#$ & $\#$ & $\tb$ & $\tb$ & $\tb$ & $\#$ & $\#$ & $\tb$ & $\tb$ & $\tb$ & $\#$
\end{tabular}\smallskip

Let $p \in [3]$ and let $e$ be an embedding with $u_A \subseq_e v_A$. If $e$ embeds the $c_j$-block of $u_A$ according to $(\dagger_{p})$, then we interpret this as the situation that the element $x_{\ell_j, p}$ of $c_j$ (i.\,e., the $p^{\text{th}}$ element of $c_j$) is the unique element in $B \cap c_j$. Consequently, $e$ can also be interpreted as selecting exactly one element from each $c_j$ in this way. What we need to achieve next is a kind of synchronisation in the sense that $x_i \in B$ (meaning that the $x_i$-block is embedded according to $(\ast_1)$) if and only if all $c_j$-blocks with $x_{\ell_j, p} = x_i$ are also embedded according to $(\dagger_p)$. More precisely, we will next add gap-constraints such that the following property holds. 

\begin{quote}
Property $(\ddagger)$: Let $e$ be an embedding with $u_A \subseq_e v_A$ that satisfies the gap-constraints, let $i \in [n]$, $j \in [m]$ and $p \in [3]$ such that $x_i = x_{\ell_j, p}$.
\begin{itemize}
\item The $x_i$-block of $u_A$ is embedded according to $(\ast_1)$ if and only if the $c_j$-block of $u_A$ is embedded according to $(\dagger_p)$.
\item The $x_i$-block of $u_A$ is embedded according to $(\ast_0)$ if and only if the $c_j$-block of $u_A$ is embedded according to $(\dagger_{q})$ with $q \in [3] \setminus \{p\}$.
\end{itemize}
\end{quote}

Before defining the gap-constraints, we first observe that with such gap-constrains (i.\,e., gap-constraints that yield property $(\ddagger)$), we have that $A$ and $c_1, c_2, \ldots, c_m$ is a positive $1$-in-$3$-3SAT-instance if and only if $u_A \subseq_{\mathcal{C}} v_A$, where $\mathcal{C}$ is the set of all the gap-constraints (yet to be defined). Indeed, if $u_A \subseq_{\mathcal{C}} v_A$, then the embeddings of the $x_i$-blocks determine some $B_e \subseteq A$, and property $(\ddagger)$ directly implies that $|B \cap c_j| = 1$ for every $j \in [m]$ (note also that the three ways of embedding $c_j$-blocks $(\dagger_1)$, $(\dagger_2)$ and $(\dagger_3)$ are obviously mutually exclusive). Furthermore, if $B$ is a solution set for the $1$-in-$3$-3SAT-instance $A$ and $c_1, c_2, \ldots, c_m$, then the mapping $e$ induced by embedding every $x_i$-block according to $(\ast_1)$ or $(\ast_0)$ depending on whether or not $x_i \in B$, and embedding every $c_j$-block according to $(\dagger_1)$ or $(\dagger_2)$ or $(\dagger_3)$ depending on whether $x_{\ell_{j, 1}} \in B$ or $x_{\ell_{j, 2}} \in B$ or $x_{\ell_{j, 2}} \in B$ (since $B$ is a solution set, exactly one of these situations arises) will satisfy all gap-constraints and $u_A \subseq_{e} v_A$.\par

We now define the gap-constraints. To this end, let $i \in [n]$ and $j \in [m]$ be such that $x_i \in c_j$. There are three possible cases.
\begin{align*}
&\text{Case }x_{\ell_j, 1} = x_i:& &C_{\pi_u(x_i),\pi_u(c_j)} =& &\{\# w \#,& &\tb w \# \tb,& &\tb w \# \tb \tb& &\mid w \in \{\tb, \#\}^*\}\,.&\\
&\text{Case }x_{\ell_j, 2} = x_i:& &C_{\pi_u(x_i),\pi_u(c_j)} =& &\{\tb w \#,& &\# w \# \tb,& &\tb w \# \tb \tb& &\mid w \in \{\tb, \#\}^*\}\,.&\\
&\text{Case }x_{\ell_j, 2} = x_i:& &C_{\pi_u(x_i),\pi_u(c_j)} =& &\{\tb w \#,& &\tb w \# \tb,& &\# w \# \tb \tb& &\mid w \in \{\tb, \#\}^*\}\,.&
\end{align*}

We will now prove that these gap-constraints imply the property $(\ddagger)$ mentioned above. To this end, let $e$ be an embedding with $u_A \subseq_e v_A$ that satisfies the gap-constraints, let $i \in [n]$, $j \in [m]$ and $p \in [3]$ such that $x_i = x_{\ell_j, p}$. There are again three cases to consider:
\begin{itemize}
\item Case $x_i = x_{\ell_j, 1}$: Let us first assume that the $x_i$-block of $u_A$ is embedded according to $(\ast_1)$. This means that the $(\pi_u(x_i), \pi_u(c_j))$-gap must start with symbol $\#$, which according to gap-constraint $C_{\pi_u(x_i),\pi_u(c_j)}$, means that the $(\pi_u(x_i),\pi_u(c_j))$-gap has the form $\# w \#$ for some $w \in \{\tb, \#\}^*$. This is only possible, if $e$ maps position $\pi_u(c_j)$ of $u_A$ (i.\,e., the $c_j$-block of $u_A$) to position $\pi_v(c_j, 1)$ of $v_A$ (i.\,e., the first position of the $c_j$-block of $v_A$). This directly implies that the $c_j$-block of $u_A$ is embedded according to $(\dagger_1)$.\par
Let us next assume that the $c_j$-block of $u_A$ is embedded according to $(\dagger_1)$. This means that $e$ maps position $\pi_u(c_j)$ of $u_A$ to position $\pi_v(c_j, 1)$ of $v_A$. Thus, the $(\pi_u(x_i),\pi_u(c_j))$-gap must end with an occurrence of $\#$, which, according to gap-constraint $C_{\pi_u(x_i),\pi_u(c_j)}$, means that the $(\pi_u(x_i),\pi_u(c_j))$-gap has the form $\# w \#$ for some $w \in \{\tb, \#\}^*$. Consequently, the $x_i$-block of $u_A$ is embedded according to $(\ast_1)$.\par
Next, assume that the $x_i$-block of $u_A$ is embedded according to $(\ast_0)$. This means that the $(\pi_u(x_i), \pi_u(c_j))$-gap must start with symbol $\tb$, which according to gap-constraint $C_{\pi_u(x_i),\pi_u(c_j)}$, means that the $(\pi_u(x_i),\pi_u(c_j))$-gap has the form $\tb w \# \tb$ or $\tb w \# \tb \tb$ for some $w \in \{\tb, \#\}^*$. The first case is only possible, if $e$ maps position $\pi_u(c_j)$ of $u_A$ to position $\pi_v(c_j, 2)$ of $v_A$ (i.\,e., the second position of the $c_j$-block of $v_A$), and the second case is only possible, if $e$ maps position $\pi_u(c_j)$ of $u_A$ to position $\pi_v(c_j, 3)$ of $v_A$ (i.\,e., the third position of the $c_j$-block of $v_A$). Consequently, the $c_j$-block of $u_A$ is embedded according to $(\dagger_2)$ or $(\dagger_3)$, respectively.\par
Finally, let us assume that the $c_j$-block of $u_A$ is embedded according to either $(\dagger_2)$ or $(\dagger_3)$. This means that $e$ maps position $\pi_u(c_j)$ of $u_A$ to either position $\pi_v(c_j, 2)$ or $\pi_v(c_j, 3)$ of $v_A$. In the first case, the $(\pi_u(x_i),\pi_u(c_j))$-gap must end with $\# \tb$, which, according to gap-constraint $C_{\pi_u(x_i),\pi_u(c_j)}$, means that the $(\pi_u(x_i),\pi_u(c_j))$-gap has the form $\tb w \# \tb$ for some $w \in \{\tb, \#\}^*$. Hence, the $x_i$-block of $u_A$ must be  embedded according to $(\ast_0)$. In the second case, the $(\pi_u(x_i),\pi_u(c_j))$-gap must end with $\# \tb \tb$, which, according to gap-constraint $C_{\pi_u(x_i),\pi_u(c_j)}$, means that the $(\pi_u(x_i),\pi_u(c_j))$-gap has the form $\tb w \# \tb \tb$ for some $w \in \{\tb, \#\}^*$. Again, this implies that the $x_i$-block of $u_A$ must be embedded according to $(\ast_0)$. 
\item Case $x_i = x_{\ell_j, 2}$: This case can be handled analogously, so we can be a bit more concise. If the $x_i$-block of $u_A$ is embedded according to $(\ast_1)$, then, due to gap-constraint $C_{\pi_u(x_i),\pi_u(c_j)}$, the $(\pi_u(x_i),\pi_u(c_j))$-gap must have the form $\# w \# \tb$ for some $w \in \{\tb, \#\}^*$. This is only possible, if $e$ maps position $\pi_u(c_j)$ of $u_A$ to position $\pi_v(c_j, 2)$ of $v_A$, which implies that the $c_j$-block of $u_A$ is embedded according to $(\dagger_2)$. On the other hand, if the $c_j$-block of $u_A$ is embedded according to $(\dagger_2)$, then the $(\pi_u(x_i),\pi_u(c_j))$-gap must end with $\# \tb$, thus, the gap-constraint $C_{\pi_u(x_i),\pi_u(c_j)}$ implies that the $x_i$-block of $u_A$ can only be embedded according to $(\ast_1)$.\par
If the $x_i$-block of $u_A$ is embedded according to $(\ast_0)$, then the $(\pi_u(x_i),\pi_u(c_j))$-gap must have the form $\tb w \#$ or $\tb w \# \tb \tb$ for some $w \in \{\tb, \#\}^*$, which is only possible if the $c_j$-block of $u_A$ is embedded according to $(\dagger_1)$ or $(\dagger_3)$, respectively. If the $c_j$-block of $u_A$ is embedded according to either $(\dagger_1)$ or $(\dagger_3)$, then the $(\pi_u(x_i),\pi_u(c_j))$-gap must end with $\#$ or $\# \tb \tb$. Again the gap-constraint $C_{\pi_u(x_i),\pi_u(c_j)}$ directly implies that the $(\pi_u(x_i),\pi_u(c_j))$-gap has the form $\tb w \#$ or $\tb w \# \tb \tb$ for some $w \in \{\tb, \#\}^*$, which, in both cases, means that the $x_i$-block of $u_A$ must be  embedded according to $(\ast_0)$. 
\item Case $x_i = x_{\ell_j, 3}$: This case can be shown analogously.
\end{itemize}

Finally, it can be easily verified that the gap-constraints are regular languages that can be represented by $\DFA$s with at most $8$ states.

\end{proof}

\section{Proof of \cref{thm:constant-vsn}}\label{sec:appendixVSN}

\begin{proof}
We will start by showing the lower bound. This follows immediately by observing that the reduction supporting the claim in Theorem \ref{cliqueTheorem} is a parameterized reduction, mapping the parameter $k$ of the  $k-\cliqueProb$ to the parameter $k$ for the  $\matchProb_{{\mathcal L}, {\mathcal V}_k}$ problem. Indeed, the pattern $p$ constructed in the respective reduction has length $2+k^2$, so the $\vsn$ of its graph representation to the gap constraints constructed for $p$ is upper bounded by $1+k^2$. In fact, as an additional observation, the graph representation of the respective set of gap constraints contains a $(k \times k)$-grid as a minor. Therefore, the pathwidth of that graph and, as such, its $\vsn$ is also lower bounded by $k$. 

For the upper bound, we proceed as follows.

Consider an input for the $\matchProb_{{\mathcal L}, {\mathcal V}_k}$ problem: two strings $w, p \in \Sigma^*$, with $|w|=n$ and $|p|=m\leq n$, and a non-empty set $\mathcal{C}$ of gap-constraints such that the graph representation of ${\mathcal C}$ has $\vsn$ at most $k$. The case $m=1$ is trivial, so we will consider $m\geq 2$. Assume ${\mathcal C}=\{C_1,\ldots, C_s\}$, where $C_i=(a_i,b_i,L_i)$, for $i\in [s]$. Set $G_{\mathcal C}=([m], E_{\mathcal C})$ be the graph representing ${\mathcal C}$ (as introduced in Section \ref{sec:vsn-and-intervals}); it is immediate that, since $m\geq 2$, we have that $\vsn(G_{\mathcal C})\geq 1$. Let ${\mathcal D}=\{C_1,\ldots, C_s\}$, where $C_i=(a_i,b_i,L_i)$, for $i\in [s]$, are the labels of the edges of $G_{\mathcal C}$; that is ${\mathcal D}$ extends ${\mathcal C}$ with the trivial constraints added during the construction of the graph $G_{\mathcal C}$. 

We begin with a preprocessing phase. This is in turn also split into two sub-phases. In the first one, we only analyse the set of constraints ${\mathcal C}$ and the corresponding graph $G_{\mathcal C}$. 

We first compute an ordering $\sigma=(i_1,\ldots,i_m)$ of the vertices of $G_{\mathcal C}$ (i.e., of $[m]$), which has $\vsn$ at most $k$. By the results from \cite{Bodlaender96,Bodlaender98}, we get that $\sigma$ can be computed in linear time $O(m)$, where the constant hidden by the $O$-notation is depending superexponentially on $k$. 

Now, by simply traversing the ordering $\sigma$ and analysing the graph, we can construct, for each $j\in [m]$, the set $W_j=\{i_\ell \mid \ell \leq j,$ and there exists $k> j$ and an edge $(i_\ell,i_ k)\in E_C\}$. The size of $W_j$ is upper bounded by $k=\vsn(G_{\mathcal C})$, and can be computed trivially in $O(m^3)$ (more efficient methods exist, but there are other parts of this algorithm which are more time consuming, so we will not insist on them). It is important to note here that, for $j\in [2,m]$, $W_{j}$ can be obtained from $W_{j-1}$ by potentially removing some of the elements of $W_j$ and potentially adding $i_j$ to the resulting set; moreover, $W_1=\{i_1\}$.

In the second sub-phase, we preprocess the word $w$. In the case of $\semi$-constraints, by Lemma \ref{semilinearPreprocessLemma}, we can build data structures allowing us to decide whether $x \in L_j$ for some $x \in [n]$ in constant time, after an $O(n\size(L_j))\subset O(n^2)$ time processing (by Remark \ref{sizeSemiLin}), for $j\in [s]$. We need to do this  process for each of the constraints $C_1,\ldots, C_s$, so the overall time is $O(n^2m^2)$, as $s\leq m^2$. 

In the case of $\reg$-constraints, by Lemma \ref{regularPreprocessLemma}, we can build data structures allowing us to decide whether $w[i..j] \in L_h$ for all $i,j\in [n]$ in constant time and $h\in [s]$, after time $O(n^2\log \log \log n + \size(L_h))$. We need to do this  process for each of the constraints $C_1,\ldots, C_s$, so the overall time is $O(n^2s \log \log n +\size({\mathcal C}))\subseteq O(n^2m^2 \log \log n)$ (as $\size({\mathcal C})\leq m^2n^2)$ by Remark \ref{sizeSemiLin} and the fact that $|{\mathcal C}|\leq m^2$). 

In conclusion, no matter in which case we are (that is, $\matchProb_{{\reg}, {\mathcal V}_k}$ or $\matchProb_{{\semi}, {\mathcal V}_k}$), after a polynomial time preprocessing we can answer queries ``is $w[i..j]\in L_h$?'' in $O(1)$ time, for all $i,j\in [n], h\in [s]$.

Once the preprocessing phase is completed, we can proceed with the main matching algorithm.

This will proceed in $m+1$ phases. 

In phase $1$, we construct the set $U_1=\{(1,i)\mid w[i]=p[i_1]\}$ (all embeddings of $p[i_1]$ in $w$); note that in this phase, no constraints need to be verified, but node $i_1$ is connected to at least another vertex, as $m\geq 2$). 

We assume that after phase $q$, the set $U_q$ contains all tuples $(q,j_1,\ldots, j_{t_q})$, such that:
\begin{itemize}
\item $j_1<j_2<\ldots<j_{t_q}$;
\item there exists an embedding of $p[i_1], p[i_2], \ldots, p[i_q]$ in $w$ which verifies all constraints of ${\mathcal D}$ of the form $(a,b,L)$ with $a,b\in \{i_1,\ldots, i_{q}\}$, $a<b$, and, moreover, if $W_{q}=\{a_1,\ldots,a_{t_q}\}$ then $p[a_h]$ is embedded in the symbol of $w$ on the position $j_h$, for $h\in [t_q]$.
\end{itemize}
Basically, $U_q$ keeps track, for all correct embeddings of $p[i_1], p[i_2], \ldots, p[i_q]$ in $w$, of the positions of $w$ where the symbols of $p$, which are affected by constraints which were not already satisfied, are embedded. $U_1$ fulfils these conditions.
 
Further, we proceed by dynamic programming. For $q\geq 1$, in phase $q+1$ we construct the set $U_{q+1}$ based on set $U_q$ as follows. Let $W_{q}=\{a_1,\ldots,a_{t_q}\}$ and $W_{q+1}=\{b_1,\ldots,b_{t_{q+1}}\}$ (and assume that $a_i<a_j$ and $b_i<b_j$, for $i<j$). 

For each $(q,j_1,\ldots, j_{t_q})\in U_q$, there exists an embedding $f$ of $p[i_1], p[i_2], \ldots, p[i_q]$ in $w$ which verifies all constraints of ${\mathcal D}$ of the form $(a,b,L)$ with $a,b\in \{i_1,\ldots, i_{q}\}$, and, moreover, if $W_{q}=\{a_1,\ldots,a_{t_q}\}$ then $p[a_h]$ is embedded in the symbol of $w$ on the position $j_h$, for $h\in [t_q]$. Now, for each $j$ such that $w[j]=p[i_{q+1}]$, we check whether there exists any constraint $(a_\ell,i_{q+1},L)$ (respectively, $(i_{q+1},a_\ell,L)$)  such that $j_\ell < j$ and $w[j_\ell..j]\notin L$ (respectively, $j_\ell > j$ and $w[j..j_\ell]\notin L$). If this is not the case, then we decide that $f$ can be extended by embedding $p[i_{q+1}]$ into $w[j]$. Therefore, we add the tuple $(q+1,j'_1,\ldots, j'_{t_{q+1}})$ to the set $U_{q+1}$ where $w[j'_\ell]$ is the embedding of $p[b_\ell]$ under $f$ (note that the tuple $(q+1,j'_1,\ldots, j'_{t_{q+1}})$ can be computed in $O(|W_{q}|)$ time, given the way $W_{q+1}$ is obtained from $W_{q}$, already mentioned above). To avoid inserting duplicate elements into $U_{q+1}$, we maintain a characteristic array for this set; that is, we maintain a boolean $t_{q+1}$-dimensional array $M_{q+1}$, where $M_{q+1}[j'_1,\ldots, j'_{t_{q+1}}]$ is true if and only if $(q+1,j'_1,\ldots, j'_{t_{q+1}})$ was already inserted in $U_{q+1}$. 

The construction of $U_{q+1}$ is correct (in the sense that it fulfils the desired properties) as every correct embedding of of $p[i_1], p[i_2], \ldots, p[i_{q+1}]$ in $w$ must extend a correct embedding of $p[i_1], p[i_2], \ldots, p[i_q]$ in $w$, and we do such an extension only if all constraints affecting $i_{q+1}$ are fulfilled. In particular, note here that the constraints of the form $(i,i+1,\cdot)$ enforce that no two positions of $p$ are mapped to the same position of $w$. As far as the complexity is concerned, this extension can be clearly done in $O(n^{k+1}m)$ time, as $|U_q|\leq n^{k}$, and for each $j\in [n]$ (in the worst case), we need to check at most $k\leq m$ constraints and construct a tuple of size at most $k+1\leq m+1$; maintaining the array $M_{q+1}$ can be done in $O(n^k)$ time.  

In the end of our algorithm, we simply need to check whether $U_m$ contains any element. If yes, then we conclude that $p$ is a ${\mathcal C}$-subsequence of $w$.

The overall complexity of this algorithm is $O(n^{k+1}m^2 )$ to which the preprocessing time is added. The statement of the theorem follows.
\end{proof}

\section{Non-Intersecting Constraints: Upper Bound (Proof of \cref{thm:NonIntersectUB}) } \label{sec:NonIntersectUB}

In this section, we develop a recursive algorithm for solving $\matchProb_{{\mathcal L},{\mathcal H}}$ for ${\mathcal L}\in \{\semi,\reg\}$, i.e., the variant of $\matchProb$ for pairwise non-intersecting semilinear or regular constraints constraints. 

In this problem, let $w \in \Sigma^n$ and $p \in \Sigma^m$ (with $m\leq n$) be the input strings and $\mathcal{C} = \{C_1, \dots, C_K\}$ with $C_k = (a_k, b_k, L_k)$ for $k \in [K]$ be the set of constraints. Recall, that we want to check whether $p$ is a ${\mathcal C}$-subsequence of $w$.

\subsection{Preprocessing Constraints}

In the case of $\semi$-constraints, by Lemma \ref{semilinearPreprocessLemma}, we can build data structures allowing us to decide whether $x \in L_k$ for some $x \in [n]$ in constant time, after an $O(n\size(L_k))$ time processing. We need to do this  process for each of the constraints $C_1,\ldots, C_K$, so the overall time is $O(n\size({\mathcal C}))$. By Remark \ref{sizeSemiLin}, which states that we work under the assumption that $\size(|L_k|)\in O(n)$, building all these data structures takes at most $O(n^2K)$ time.

In the case of $\reg$-constraints, by Lemma \ref{regularPreprocessLemma}, we can build data structures allowing us to decide whether $w[i..j] \in L_k$ for all $i,j\in [n]$ in constant time, after an $O(n^2\log \log n + \size(L_k))$. We need to do this  process for each of the constraints $C_1,\ldots, C_K$, so the overall time is $O(n^2K \log \log n +\size({\mathcal C}))$. By Remark \ref{sizeSemiLin}, which states that we work under the assumption that $\size(|L_k|)\in O(n^2)$, building all these data structures takes at most $O(n^2K\log \log n)$ time.

In conclusion, no matter in which case we are (that is, $\matchProb_{{\reg},{\mathcal H}}$ or $\matchProb_{{\semi},{\mathcal H}}$), after an $O(n^2K\log \log n)$-time preprocessing we can answer queries "is $w[i..j]\in L_k$?" in $O(1)$ time, for all $i,j\in [n], k\in [K]$.

For simplicity, in the following we only deal with $\semi$-constraints, as the $\reg$-constraints can be treated similarly. Firstly, in the next two subsections we define the basic ingredients of our dynamic programming machinery, and then we will explain how they are used to get the algorithm supporting the statement of Theorem \ref{thm:NonIntersectUB}.

\subsection{Embedding Substrings}

In order to solve the given problem, we will deconstruct it into smaller problems (as customary for recursive algorithms). For this, we consider the embedding of a substring $p[s..t]$ for $s, t \in [m]$ with $s \leq t$. We denote the set of relevant constraints in the interval $[s, t]$ with
\[
	\restr{\mathcal{C}}{[s, t]} \colonequals \{C_k \in \mathcal{C} \mid s \leq a_k < b_k \leq t\}
\]
and $\mathcal{C}^{(k)} \colonequals \restr{\mathcal{C}}{[a_k, b_k]}$.
Then, we can ask whether (and where) $p[s..t]$ can be embedded into $w$:

\begin{definition}
	For $s, t \in [m]$ with $s \leq t$ and $\mathcal{C'} \subset \restr{\mathcal{C}}{[s, t]}$ we define the boolean $n \times n$-matrix $P(s, t, \mathcal{C}')$: The entry $(P(s, t, \mathcal{C}'))_{ij}$ is true, if and only if there is a partial embedding $e: [s, t] \to [n]$ with $i = e(s)$, $j = e(t)$ and $p[s..t] \preceq_e w$ (see \autoref{fig_embedding_substrings}) that satisfies all the constraints in $\mathcal{C}'$. Here, $P(s, t, \mathcal{C}')$ is called a \emph{partial embeddings matrix}.
\end{definition}

\begin{figure}[ht]
	\centering
	\begin{tikzpicture}
		\node (P) {$p = {}$};
		\node (W) [below of = P] {$w = $};
		\node (W1) [right= of W] {\texttt{z}};
		\node (W2) [right of = W1] {\texttt{x}};
		\node (W3) [right of = W2] {\texttt{y}};
		\node (W4) [right of = W3] {\texttt{x}};
		\node (W5) [right of = W4] {$\dots$};
		\node (W6) [right of = W5] {\texttt{x}};
		\node (W7) [right of = W6] {\texttt{z}};
		\node (W8) [right of = W7] {\texttt{y}};
		\node (W9) [right of = W8] {\texttt{x}};
		\node (P1) [above of = W3]{\texttt{x}};
		\node (P2) [right of = P1]{\texttt{y}};
		\node (P3) [right of = P2]{$\dots$};
		\node (P4) [right of = P3]{\texttt{z}};
		\node (P5) [right of = P4]{\texttt{x}};

		\node (S) [above of = P2] {\scriptsize{$s$}};
		\node (T) [above of = P4] {\scriptsize{$t$}};
		\node (I) [below of = W3] {\scriptsize{$i$}};
		\node (J) [below of = W7] {\scriptsize{$j$}};

		\draw[->] (S) -- (P2);
		\draw[->] (T) -- (P4);
		\draw[->] (I) -- (W3);
		\draw[->] (J) -- (W7);

		\fill[color=red!20] (W3.north) -- (W7.north) -- (P4.south) -- (P2.south);
		\draw (P2.south) -- (W3.north);
		\draw (P4.south) -- (W7.north);
		\path (P3) -- node[midway] {$e$} (W5);
	\end{tikzpicture}
	\caption{Embedding $p[s..t]$ into $w[i..j]$}\label{fig_embedding_substrings}
\end{figure}
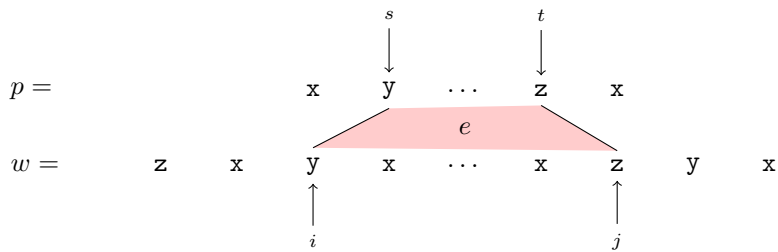

Let us first consider partial embeddings matrices without constraints:
\[
	B^{(s, t)} \colonequals P(s, t, \emptyset)
\]
If we find one unconstrained partial embedding $e: [s, t] \to [n]$, it is very easy to construct other partial embeddings by increasing $e(t)$ (or decreasing $e(s)$) whilst keeping everything else the same. We will use this to efficiently calculate $B^{(s, t)}$. 

Hereby, we fix $i \in [n]$ and then construct the smallest possible embedding $e$, with respect to $e(t)$, starting at position $i$ (or at some position after $i$). The value of $e(t)$ is then
\[
	bound_i(s, t) \colonequals \min \{j \in [i, n] \mid p[s..t] \preceq w[i..j]\}
\]
with $\min\{\} \colonequals \infty$. The value of $bound_i(s, t)$ can be efficiently computed using the following properties:
\begin{lemma}\label{lem_bound}
	For $i \in [n], s, t \in [m], s < t$, it holds:
	\begin{enumerate}
		\item $bound_i(s, s) = \min\{j \in [n] \mid j \geq i, w[j] = p[s]\}$
		\item $bound_i(s, t) = \min\{j \in [n] \mid j > bound_i(s, t-1), w[j] = p[t]\}$
	\end{enumerate}
\end{lemma}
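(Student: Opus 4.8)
Lemma~\ref{lem_bound}: for $i\in[n]$, $s,t\in[m]$ with $s<t$, (1) $bound_i(s,s)=\min\{j\in[n]\mid j\ge i,\ w[j]=p[s]\}$, and (2) $bound_i(s,t)=\min\{j\in[n]\mid j>bound_i(s,t-1),\ w[j]=p[t]\}$.

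This is a routine lemma about greedy subsequence embedding, so let me sketch the proof plan.

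First, let me think about what $bound_i(s,t)$ is: the smallest $j\ge i$ such that $p[s..t]\preceq w[i..j]$. Part (1) is essentially immediate from the definition: $p[s..s]=p[s]$ is a single character, so $p[s]\preceq w[i..j]$ iff $w[i..j]$ contains at least one occurrence of $p[s]$, and the smallest such $j$ is exactly the first position $\ge i$ carrying the symbol $p[s]$. (If there's no such position, both sides are $\infty$.) I'd write one or two sentences to this effect.

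The heart is part (2). Here I would use the standard "greedy leftmost embedding" argument. Let $j^* := bound_i(s,t-1)$ and $j' := \min\{j\in[n]\mid j>j^*,\ w[j]=p[t]\}$; I want to show $bound_i(s,t)=j'$.

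For the inequality $bound_i(s,t)\le j'$: if $j'<\infty$, then there is a partial embedding of $p[s..t-1]$ into $w[i..j^*]$ with the last symbol mapped to position $j^*$ (by minimality of $j^*$, pick any witnessing embedding and, if necessary, argue its last coordinate can be taken to be $j^*$ — actually, by definition $bound_i(s,t-1)$ is realized, so there is an embedding of $p[s..t-1]$ into $w[i..j^*]$; whether its last coordinate is exactly $j^*$ doesn't matter, only that it embeds into $w[i..j^*]$). Extending that embedding by mapping position $t$ to $j'$ (legal since $j'>j^*$ and $w[j']=p[t]$) shows $p[s..t]\preceq w[i..j']$, hence $bound_i(s,t)\le j'$.

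For the reverse inequality $bound_i(s,t)\ge j'$ — this is the only place needing a genuine (if small) argument, and it's where I'd be most careful, so this is the "main obstacle," such as it is. Suppose $p[s..t]\preceq_e w[i..j]$ for some embedding $e$ with $e(s)\ge i$ and $e(t)=j$. Then restricting $e$ to $[s,t-1]$ gives an embedding of $p[s..t-1]$ into $w[i..e(t-1)]$, so by definition $bound_i(s,t-1)\le e(t-1)$, i.e.\ $j^*\le e(t-1)<e(t)=j$; thus $j>j^*$ and $w[j]=p[t]$, so $j$ is a candidate in the set defining $j'$, giving $j\ge j'$. Taking the minimum over all such $j$ yields $bound_i(s,t)\ge j'$. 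The point to state cleanly is that the prefix of a valid embedding is a valid embedding of the prefix, and $bound_i(s,t-1)$ is a lower bound on the last coordinate of any embedding of $p[s..t-1]$ starting at or after $i$ — which is exactly the minimality in its definition. The $\infty$ cases (when either side is $\infty$) should be dispatched with a one-line remark: if $j'=\infty$ there is no position $>j^*$ carrying $p[t]$, so no embedding of $p[s..t]$ starting at/after $i$ exists either (by the argument just given), so both sides are $\infty$; and if $j^*=\infty$ then $p[s..t-1]$ does not embed in $w[i..n]$, a fortiori neither does $p[s..t]$, so again both are $\infty$ (and the set in (2) is empty by convention, since no $j>\infty$).

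Combining the two inequalities gives $bound_i(s,t)=j'$, which is the claim. I don't anticipate any real difficulty; the whole proof is three short paragraphs of the kind above, with the only subtlety being the careful bookkeeping around the $\infty$ conventions and the "prefix of an embedding is an embedding" observation.
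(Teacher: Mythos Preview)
Your proposal is correct and follows essentially the same greedy argument as the paper: extend a witnessing embedding of $p[s..t-1]$ to get the upper bound, and restrict an arbitrary embedding of $p[s..t]$ to its prefix to get the lower bound via the minimality of $bound_i(s,t-1)$. The paper phrases the lower-bound direction as a proof by contradiction with a small case split on whether $e(t)>bound_i(s,t-1)$ or $e(t)\le bound_i(s,t-1)$, whereas your direct inequality chain $j\ge e(t)>e(t-1)\ge bound_i(s,t-1)$ avoids that split, but this is a cosmetic difference only.
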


\begin{proof}~
	\begin{enumerate}
		\item Let $j \colonequals \min\{j \in [n] \mid j \geq i, w[j] = p[s]\}$. We have $p[s] \preceq_e w[i..j]$ with $e(s) = j$ and $p[s]$ cannot be a subsequence of $w[i..j-1]$ since the character $p[s]$ is not contained in $w[i..j-1]$.
		\item Let $j \colonequals \min\{j \in [n] \mid j > bound_i(s, t-1), w[j] = p[t]\}$ and $j' \colonequals bound_i(s, t-1)$. By definition we have an embedding $e'$ with $p[s..t-1] \preceq_{e'} w[i..j']$. It is $i \leq e'(s) < \dots < e'(t-1) \leq j' < j$. Then $e: [s, t] \to [n]$ with $e(u) = \begin{cases} e'(u) & u < t \\ j & u = t \end{cases}$ is an embedding of $p[s..t]$ into $w[i..j]$.

		      Let us now assume that $j$ is not minimal, i.e., there exists an embedding $e$ with $p[s, t] \preceq_e w[i..j-1]$. $e$ must satisfy $w[e(t)] = p(t)$ with $e(t) < j$. If $e(t) > j'$, this contradicts the minimality of $j$. If $e(t) \leq j'$ then $e(t-1) < e(t) \leq j'$ and thus $e|_{[s, t-1]}$ is an embedding of $p[s, t-1]$ into $w[i..j'-1]$. This contradicts the minimality of $bound_i(s, t-1)$. \qedhere
	\end{enumerate}
\end{proof}

\autoref{algo_bound} uses these properties to calculate $bound_i(s, t)$. To reduce the number of parameters to the function and increase readability, some variables (in this case $w$ and $p$) are injected.

\begin{algorithm}[ht]
	\caption{Calculating $bound_i(s, t)$}\label{algo_bound}
	\begin{algorithmic}
		\Function{bound}{$i$, $s$, $t$}
		\State $j \gets i-1$
		\State $t' \gets s$
		\While{$t' \leq t$}
		\Repeat
		\State $j \gets j + 1$
		\If{$j > n$}
		\State \Return $\infty$
		\EndIf
		\Until{$w[j] = p[t']$}
		\Comment{After this loop exits, it is $j = bound_i(s, t')$}
		\State $t' \gets t' + 1$
		\EndWhile
		\State \Return $j$
		\EndFunction
	\end{algorithmic}
\end{algorithm}

The algorithm requires $\mathcal{O}(n)$ time and, naturally, it closely resembles the algorithm that was sketched out in the introduction, which also tries to find an (unconstrained) subsequence as early as possible (i.e., is greedy). Note that this algorithm implicitly constructs the sequence $bound_i(s, s), \dots, bound_i(s, t)$, which corresponds to the values $e(s), \dots, e(t)$ of the embedding $e$ that the other algorithm would have found.

Now with $bound_i(s, t)$ calculated, it is possible to compute $B^{(s, t)}$:

\begin{lemma}\label{lem_B}
	For $s, t \in [m], s \leq t$ and $i, j \in [n]$ we have
	\[
		B^{(s, t)}_{ij} \iff w[i] = p[s] \;\land\; bound_i(s, t) \leq j \;\land\; w[j] = p[t]
	\]
\end{lemma}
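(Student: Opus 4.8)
The plan is to prove the two implications of the stated equivalence separately, relying only on the definition of $bound_i(s,t)$ together with the observation that $bound_i(s,t)\le n<\infty$ whenever the middle conjunct on the right-hand side holds. Recall that, by definition of $B^{(s,t)}=P(s,t,\emptyset)$, the entry $B^{(s,t)}_{ij}$ is true iff there is an embedding $e\colon[s,t]\to[n]$ with $e(s)=i$, $e(t)=j$ and $p[s..t]\preceq_e w$; such an $e$ forces $i\le j$, and since $i=e(s)<e(s+1)<\cdots<e(t)=j$ it also witnesses $p[s..t]\preceq w[i..j]$. I would assume $s<t$ throughout; when $s=t$, the entry $B^{(s,t)}_{ij}$ is simply true iff $i=j$ and $w[i]=p[s]$, which is immediate, so nothing more is needed there.

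For the direction ($\Rightarrow$): starting from a witnessing embedding $e$ for $B^{(s,t)}_{ij}$, the position $e(s)=i$ carries the letter $p[s]$, so $w[i]=p[s]$, and symmetrically $w[j]=p[t]$. Moreover $e$ restricted to $[s,t]$ witnesses $p[s..t]\preceq w[i..j]$, and since $i\le j\le n$ this places $j$ in the set $\{j'\in[i,n]\mid p[s..t]\preceq w[i..j']\}$ whose minimum is $bound_i(s,t)$; hence $bound_i(s,t)\le j$. This yields all three conjuncts.

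For the direction ($\Leftarrow$): assuming $w[i]=p[s]$, $bound_i(s,t)\le j$ and $w[j]=p[t]$, the middle assumption gives $bound_i(s,t)\le j\le n$, so this value is finite and $p[s..t]\preceq w[i..bound_i(s,t)]$; as $w[i..bound_i(s,t)]$ is a prefix of $w[i..j]$, also $p[s..t]\preceq w[i..j]$. I would fix an embedding $\hat e\colon[s,t]\to[n]$ realising this, so $i\le\hat e(s)<\hat e(s+1)<\cdots<\hat e(t)\le j$, and since $s<t$ this forces $i\le\hat e(s)<\hat e(t)\le j$, hence $i<j$. Then set $e(s)\colonequals i$, $e(t)\colonequals j$, and $e(r)\colonequals\hat e(r)$ for $s<r<t$. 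Each position maps to the correct letter ($w[i]=p[s]$, $w[j]=p[t]$, $w[\hat e(r)]=p[r]$), and a short case check — for $t=s+1$ only $e(s)=i<j=e(t)$ is needed, while for $t\ge s+2$ one uses $e(s)=i\le\hat e(s)<\hat e(s+1)=e(s+1)$ and $e(t-1)=\hat e(t-1)<\hat e(t)\le j=e(t)$, with the remaining strict inequalities inherited from $\hat e$ — shows $e$ is strictly increasing. Therefore $p[s..t]\preceq_e w$ with $e(s)=i$ and $e(t)=j$, i.e.\ $B^{(s,t)}_{ij}$ holds.

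The whole argument is essentially bookkeeping, and I expect the only slightly delicate point to be in the ($\Leftarrow$) direction: one must verify that "snapping" the endpoints of the (greedy-type) embedding $\hat e$ down to $i$ and up to $j$ preserves strict monotonicity. That is precisely where one needs $i<j$ — a consequence of $s<t$ combined with $bound_i(s,t)\le j$ — and the separate handling of the degenerate subcase $t=s+1$, in which there are no intermediate positions to copy from $\hat e$.
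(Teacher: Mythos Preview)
Your argument for $s<t$ is correct and matches the paper's exactly: read off the three conjuncts from a witnessing embedding in one direction, and in the other take an embedding realising $bound_i(s,t)\le j$ and snap its endpoints to $i$ and $j$. Your check of strict monotonicity is more explicit than the paper's (which simply asserts that $e'$ is strictly increasing), but the idea is identical.

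There is, however, a genuine gap in your handling of $s=t$, and it is one the paper shares. You correctly note that $B^{(s,s)}_{ij}$ holds iff $i=j$ and $w[i]=p[s]$, but this is \emph{not} equivalent to the right-hand side of the lemma: when $w[i]=p[s]$ one has $bound_i(s,s)=i$, so the right-hand side reduces to $w[i]=p[s]\;\land\;i\le j\;\land\;w[j]=p[s]$, which can hold with $i<j$ (take $w=\ta\ta$, $p=\ta$, $s=t=1$, $i=1$, $j=2$). So the stated equivalence is false for $s=t$, and ``nothing more is needed there'' does not dispose of the case. The paper's proof has the same blind spot --- its $e'$ is not well-defined when $s=t$ and $i\neq j$ --- so the lemma should really be read with the restriction $s<t$.
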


\begin{proof}
	If $B^{(s, t)}$ is true, then there exists an embedding $e: [s, t] \to [n]$ with $p[s..t] \preceq_e w[i..j]$, $i = e(s)$ and $j = e(t)$. By definition of $bound$ we have $bound_i(s, t) \leq j$. Also, $w[i] = w[e(s)] = p[s]$ and $w[j] = w[e(t)] = p[t]$.

	If $bound_i(s, t) \leq j$, then there exists an embedding $e: [s, t] \to [n]$ with $i \leq e(s)$ and $e(t) \leq j$ with $p[s..t] \preceq_e w[i..j]$. If also $w[i] = p[s]$ and $w[j] = p[t]$, then we can define the function $e': [s, t] \to [n]$ as
	\[
		e'(u) = \begin{cases} i & u = s \\ e(u) & s < u < t \\ j & u = t \end{cases}
	\]
	$e'$ is strictly increasing, $w[e'(s)] = w[i] = p[s]$, $w[e'(t)] = w[j] = p[t]$ and $w[e'(u)] = w[e(u)] = p[u]$ for all $u$ with $s < u < t$. Thus $p[s..t] \preceq_{e'} w[i..j]$ and $B^{(s, t)}_{ij}$ must be true.
\end{proof}

Therefore, computing $B^{(s, t)}$ can be done by calculating $bound_i(s, t)$ for all $i \in [n]$ and then checking the conditions of the previous lemma for each pair $(i, j) \in [n] \times [n]$, which takes $\mathcal{O}(n^2)$ time:

\begin{algorithm}[ht]
	\caption{Computing $B^{(s, t)}$}\label{algo_Bst}
	\begin{algorithmic}
		\Function{Bst}{$s$, $t$}
		\For{$i \gets 1, n$}
		\State $b \gets$ \Call{bound}{$i$, $s$, $t$}
		\For{$j \gets 1, n$}
        \State $B_{ij} \gets (w[i] = p[s] \land b \leq j \land w[j] = p[t])$
		\EndFor
		\EndFor
		\State \Return $B$
		\EndFunction
	\end{algorithmic}
\end{algorithm}

Now that we know how to compute unconstrained partial embeddings matrices, let us see how to add constraints. Since checking the constraint $C_k$ involves verifying that $e(b_k) - e(a_k) \in L_k$, we have to know the values of $e(b_k)$ and $e(a_k)$, which we only do for partial embeddings matrices of the form $Q = P(a_k, b_k, \mathcal{C}')$. Now, considering an entry $Q_{ij}$ of this matrix, every corresponding embedding $e$ has $e(b_k) - e(a_k) = j - i$. Thus, either all of these embeddings satisfy $C_k$ (if $j - i -1 \in L_k$) and the entry is not changed when adding $C_k$, or none of them do (if $j - i -1 \notin L_k$) and the entry must be false after adding $C_k$. We therefore obtain the following lemma:

\begin{lemma}[Constraining partial embeddings]
	\label{lem_constraining_embeddings}
	For $k \in [K]$ and $\mathcal{C}' \subset \mathcal{C}^{(k)}$ it is
	\[
		P(a_k, b_k, \mathcal{C}' \cup \{C_k\}) = P(a_k, b_k, \mathcal{C}') \land mask(C_k)
	\]
	with $\land$ being applied elementwise and
	\[
		mask(C_k) \colonequals (j - i -1 \in L_k)_{1 \leq i,j \leq n}
	\]
	as the \emph{mask of constraint $C_k$}.
\end{lemma}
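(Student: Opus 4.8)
The plan is to establish the claimed matrix identity entry by entry. Fix $i,j\in[n]$; I will show that the $(i,j)$-entry of $P(a_k,b_k,\mathcal{C}'\cup\{C_k\})$ is true precisely when the $(i,j)$-entry of $P(a_k,b_k,\mathcal{C}')$ is true and $(mask(C_k))_{ij}$ is true. Since $(mask(C_k))_{ij}$ is by definition the predicate ``$j-i-1\in L_k$'' and the conjunction of the two right-hand conditions is exactly the $(i,j)$-entry of the elementwise AND, this suffices.

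For the forward implication I would unfold the definition of a partial embeddings matrix: an embedding $e:[a_k,b_k]\to[n]$ witnessing the left-hand entry satisfies $e(a_k)=i$, $e(b_k)=j$, $p[a_k..b_k]\preceq_e w$, and fulfils every constraint of $\mathcal{C}'\cup\{C_k\}$. Discarding $C_k$ from the list of fulfilled constraints, $e$ still witnesses the $(i,j)$-entry of $P(a_k,b_k,\mathcal{C}')$. And ``$e$ fulfils $C_k=(a_k,b_k,L_k)$'' unpacks, by the definition of constraint satisfaction, to $\gap_{w,e}[a_k,b_k]=w[e(a_k)+1..e(b_k)-1]=w[i+1..j-1]\in L_k$, which for a semilinear length constraint is equivalent to $j-i-1\in L_k$. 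Hence both right-hand conditions hold.

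For the converse I would take an embedding $e$ witnessing the $(i,j)$-entry of $P(a_k,b_k,\mathcal{C}')$, and invoke the one genuinely substantive observation behind the lemma: since the chosen entry $(i,j)$ pins down $e(a_k)=i$ and $e(b_k)=j$, the induced gap $\gap_{w,e}[a_k,b_k]$ has length exactly $j-i-1$, no matter how $e$ behaves in between. Thus the hypothesis $j-i-1\in L_k$ forces $e$ to fulfil $C_k$ as well, so $e$ also witnesses the $(i,j)$-entry of $P(a_k,b_k,\mathcal{C}'\cup\{C_k\})$. The boundary case $j\le i$ is immediate, as both matrices are false there (no strictly increasing embedding has $e(a_k)=i<j=e(b_k)$, and $j-i-1<0$ lies in no semilinear set). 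Letting $(i,j)$ range over $[n]\times[n]$ gives the matrix identity.

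I do not expect a real obstacle here: the substance is simply that whether an embedding satisfies $C_k$ depends only on the difference $e(b_k)-e(a_k)$, which is fixed by the matrix coordinates, and the rest is routine unwinding of the definitions of $P(\cdot,\cdot,\cdot)$, of $\gap_{w,e}[\cdot,\cdot]$, and of a semilinear length constraint. The only care needed is the harmless notational identification of ``$j-i-1\in L_k$'' with ``$w[i+1..j-1]\in L_k$'', valid because $L_k$ restricts only the length of the gap.
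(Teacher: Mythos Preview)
Your proof is correct and follows essentially the same approach as the paper: the paper's argument (given informally in the paragraph preceding the lemma rather than as a separate proof) is precisely that every embedding corresponding to entry $(i,j)$ has $e(a_k)=i$ and $e(b_k)=j$, so the gap length is forced to be $j-i-1$, and hence either all such embeddings satisfy $C_k$ or none do. Your entry-by-entry forward/backward unpacking is a more detailed version of the same observation.
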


Lastly, we will see how to join embeddings:

\begin{lemma}[Joining partial embeddings]
	\label{lem_joining_embeddings}
	For $s, t, u \in [m]$ with $s < t < u$, $\mathcal{C}' \subset \restr{\mathcal{C}}{[s, t]}$ and $\mathcal{C}'' \subset \restr{\mathcal{C}}{[t, u]}$ it is
	\[
		P(s, u, \mathcal{C}' \cup \mathcal{C}'') = P(s, t, \mathcal{C}') \cdot P(t, u, \mathcal{C}'')
	\]
	with $(\cdot)$ being boolean matrix multiplication.
\end{lemma}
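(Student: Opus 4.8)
The statement asserts an entry-by-entry equality of two Boolean $n\times n$ matrices, so the plan is to fix indices $i,j\in[n]$ and show that $\bigl(P(s,u,\mathcal{C}'\cup\mathcal{C}'')\bigr)_{ij}$ is true if and only if $\bigl(P(s,t,\mathcal{C}')\cdot P(t,u,\mathcal{C}'')\bigr)_{ij}$ is true. By the definition of Boolean matrix multiplication the right-hand side is true exactly when there is some $\ell\in[n]$ with both $\bigl(P(s,t,\mathcal{C}')\bigr)_{i\ell}$ and $\bigl(P(t,u,\mathcal{C}'')\bigr)_{\ell j}$ true, so the whole lemma reduces to producing a partial embedding on $[s,u]$ from a matching pair of partial embeddings on $[s,t]$ and $[t,u]$, and conversely. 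Before the two directions I would record two easy sanity checks: that $\mathcal{C}'\cup\mathcal{C}''\subseteq\restr{\mathcal{C}}{[s,u]}$, so that the left-hand matrix is defined at all, and that $\restr{\mathcal{C}}{[s,t]}$ and $\restr{\mathcal{C}}{[t,u]}$ are disjoint (a constraint $(a,b,L)$ in both would force $t\le a<b\le t$) — though, as noted below, the argument does not actually rely on disjointness.

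For the forward direction I would take a witnessing partial embedding $e:[s,u]\to[n]$ with $e(s)=i$, $e(u)=j$, $p[s..u]\preceq_e w$, satisfying every constraint of $\mathcal{C}'\cup\mathcal{C}''$, set $\ell\colonequals e(t)$, and split $e$ into its restrictions $e_1=\restr{e}{[s,t]}$ and $e_2=\restr{e}{[t,u]}$. Each $e_i$ is still strictly increasing and still maps every position of $p$ to the matching symbol of $w$, so it witnesses $p[s..t]\preceq_{e_1}w$ and $p[t..u]\preceq_{e_2}w$ with the correct endpoints. The only substantive point is that $e_1$ satisfies $\mathcal{C}'$ and $e_2$ satisfies $\mathcal{C}''$: for $(a,b,L)\in\mathcal{C}'$ we have $s\le a<b\le t$, hence $e_1(a)=e(a)$ and $e_1(b)=e(b)$, so $\gap_{w,e_1}[a,b]=w[e(a)+1..e(b)-1]=\gap_{w,e}[a,b]\in L$, and symmetrically for $\mathcal{C}''$. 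Hence the pair of entries indexed by $(i,\ell)$ and $(\ell,j)$ is true.

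For the converse I would glue. Given $\ell$ together with witnessing embeddings $e_1:[s,t]\to[n]$ (with $e_1(s)=i$, $e_1(t)=\ell$) and $e_2:[t,u]\to[n]$ (with $e_2(t)=\ell$, $e_2(u)=j$), define $e:[s,u]\to[n]$ to be $e_1$ on $[s,t]$ and $e_2$ on $[t,u]$. I would check that $e$ is well defined at $t$ (both pieces give $\ell$), that it is strictly increasing across the join (since $e_1$ climbs to $\ell$ at $t$ and $e_2$ climbs away from $\ell$ immediately after $t$), and that $p[x]=w[e(x)]$ holds for every $x\in[s,u]$; thus $p[s..u]\preceq_e w$ with $e(s)=i$, $e(u)=j$. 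Finally, a constraint of $\mathcal{C}'$ has its whole interval inside $[s,t]$, so its satisfaction by $e$ coincides with its satisfaction by $e_1$ — again because it depends only on the two values $e(a)$, $e(b)$ — and likewise constraints of $\mathcal{C}''$ are handled by $e_2$; hence $e$ witnesses $\bigl(P(s,u,\mathcal{C}'\cup\mathcal{C}'')\bigr)_{ij}$.

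The genuinely load-bearing observation — the one I would make explicit rather than assume silently — is that whether an embedding $e$ satisfies a constraint $(a,b,L)$ depends \emph{only} on the induced factor $w[e(a)+1..e(b)-1]$, i.e.\ only on the numbers $e(a)$ and $e(b)$, even though this factor may physically contain the images of intermediate positions $a+1,\dots,b-1$. Once this is stated, both restriction and gluing trivially preserve constraint satisfaction within a block, and the remainder is the routine verification that the glued map is a strictly increasing embedding respecting $p$. I do not expect a real obstacle here: the lemma is essentially a bookkeeping statement, and the only place where care is needed is keeping track of domains and the shared endpoint value $\ell=e(t)$ at the index $t$ where the two pieces meet.
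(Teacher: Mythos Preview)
Your proposal is correct and follows essentially the same argument as the paper's own proof: fix $i,j$, set $\ell=e(t)$ for the forward direction and restrict, then glue $e_1$ and $e_2$ at the shared value $\ell$ for the converse, checking well-definedness at $t$ and that each constraint is handled by the piece whose domain contains its endpoints. The extra sanity checks you record (that $\mathcal{C}'\cup\mathcal{C}''\subseteq\restr{\mathcal{C}}{[s,u]}$ and the explicit remark that constraint satisfaction depends only on $e(a),e(b)$) are harmless additions the paper leaves implicit.
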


\begin{proof}
	Let $V = (V_{ij}) = P(s, t, \mathcal{C}')$, $W = (W_{ij}) = P(t, u, \mathcal{C}'')$ and $Q = (Q_{ij}) = P(s, u, \mathcal{C}' \cup \mathcal{C}'')$. Thus, we need to show that for any $i, j \in [n]$
	\[
		Q_{ij} \iff \bigvee_{k \in [n]} V_{ik} \land W_{kj}
	\]
	holds.

	If $Q_{ij}$ is true, then there exists some partial embedding $e: [s, u] \to [n]$ with $p[s..u] \preceq_e w$ that has $e(i) = s$, $e(j) = u$ and satisfies all constraints in $\mathcal{C}' \cup \mathcal{C}''$. But now $V_{ik}$ and $W_{kj}$ with $k \colonequals e(t)$ must also be true, because $\restr{e}{[s, t]}$ and $\restr{e}{[t, u]}$ inherently satisfy all required conditions and constraints.

	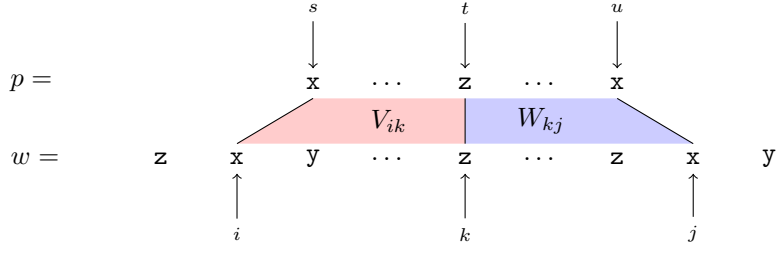
\begin{figure}[ht]
		\centering
		\begin{tikzpicture}
			\node (P) {$p = {}$};
			\node (W) [below of = P] {$w = $};
			\node (W1) [right= of W] {\texttt{z}};
			\node (W2) [right of = W1] {\texttt{x}};
			\node (W3) [right of = W2] {\texttt{y}};
			\node (W4) [right of = W3] {$\dots$};
			\node (W5) [right of = W4] {\texttt{z}};
			\node (W6) [right of = W5] {$\dots$};
			\node (W7) [right of = W6] {\texttt{z}};
			\node (W8) [right of = W7] {\texttt{x}};
			\node (W9) [right of = W8] {\texttt{y}};
			\node (P1) [above of = W3]{\texttt{x}};
			\node (P2) [right of = P1]{$\dots$};
			\node (P3) [right of = P2]{\texttt{z}};
			\node (P4) [right of = P3]{$\dots$};
			\node (P5) [right of = P4]{\texttt{x}};

			\node (S) [above of = P1] {\scriptsize{$s$}};
			\node (T) [above of = P3] {\scriptsize{$t$}};
			\node (U) [above of = P5] {\scriptsize{$u$}};
			\node (I) [below of = W2] {\scriptsize{$i$}};
			\node (K) [below of = W5] {\scriptsize{$k$}};
			\node (J) [below of = W8] {\scriptsize{$j$}};

			\draw[->] (S) -- (P1);
			\draw[->] (T) -- (P3);
			\draw[->] (U) -- (P5);
			\draw[->] (I) -- (W2);
			\draw[->] (K) -- (W5);
			\draw[->] (J) -- (W8);

			\fill[color=red!20] (W2.north) -- (W5.north) -- (P3.south) -- (P1.south);
			\fill[color=blue!20] (W5.north) -- (W8.north) -- (P5.south) -- (P3.south);
			\draw (P1.south) -- (W2.north);
			\draw (P3.south) -- (W5.north);
			\draw (P5.south) -- (W8.north);
			\path (P2) -- node[midway] {$V_{ik}$} (W4);
			\path (P4) -- node[midway] {$W_{kj}$} (W6);
		\end{tikzpicture}
		\caption{Joining partial embeddings}\label{fig_joining_embeddings}
	\end{figure}

	Conversely, if $V_{ik}$ and $W_{kj}$ are true for some $k \in [n]$, then there exist corresponding partial embeddings $e_1: [s, t] \to [n]$ and $e_2: [t, u] \to [n]$ which can be combined to some embedding $e: [s, u] \to [n]$ with
	\[
		e(t') \colonequals \begin{cases} e_1(t') & t' \in [s, t] \\ e_2(t') & t' \in [t, u] \end{cases} \qquad \text{for all $t' \in [s, u]$}
	\]
    Note that $e$ is well-defined, because $e_1(t) = k = e_2(t)$, and has $e(s) = e_1(s) = i$ and $e(u) = e_2(u) = j$. Furthermore, $e$ satisfies every constraint $C_k \in \mathcal{C}'$, since $a_k, b_k \in [s, t]$ and thus $e(b_k) - e(a_k) = e_1(b_k) - e_1(a_k) \in L_k$. By the same argument $e$ also satisfies all the constraints in $\mathcal{C}''$ and $Q_{ij}$ must be true.
\end{proof}

Being able to join partial embeddings matrices with a simple boolean matrix multiplication is not only elegant, but also allows to take advantage of fast algorithms for multiplying matrices \cite{FastMatrixMult}, \cite{FasterMatrixMult}. Therefore, joining embedded partial matrices can be done in $\mathcal{O}(n^\omega)$ time (with $\omega \approx 2.37$).

\subsection{Structuring Constraints}

Previously, we have seen how to compute unconstrained partial embeddings matrices, add constraints to them and join them together. It remains to be seen in which order these operations must be applied to obtain the solution to the problem.

We have seen that adding a constraint $C_k$ is only possible if the matrix is concerned with the substring $p[a_k..b_k]$. The other constraints in $\mathcal{C}^{(k)}$ should already be included, since there is no way of adding them later. Therefore, for any $k \in [K]$, at some point we have to calculate
\[
	A^{(k)} \colonequals P(a_k, b_k, \mathcal{C}^{(k)})
\]
Let us now recall the original problem. We need to decide whether it is possible to embed $p$ into $w$ while satisfying all constraints. In the context of our subproblems, this is solved by the matrix $P(1, m, \mathcal{C})$. More precisely, if and only if any of the entries of this matrix are true, then the answer to our problem is yes. Now since $\mathcal{C} = \restr{\mathcal{C}}{[1, m]}$, if we have a constraint $C_t = (1, m, L_t)$, then $A^{(t)}$ is exactly the matrix we are looking for. But even if this is not the case, we can just introduce a new constraint $(1, m, L(1;1))$. Since $L(1;1) = \mathbb{N}$, we are not disallowing any distances and the constraint does not change our solution. Let us generalize these two cases:

\begin{definition}
	The root index of $\mathcal{C}$ is defined as
	\[
		root(\mathcal{C}) \colonequals \begin{cases}
			k & \exists\, k \in [K]: a_k = 1 \land b_k = m \\
			0 & \text{otherwise}
		\end{cases}
	\]
\end{definition}

For the rest of the section we will use $r = root(\mathcal{C})$. In the case $r = 0$ we define an additional constraint $C_0 = (a_0, b_0, L_0) \colonequals (1, m, [n])$ and generalize all previous definitions to also include the case $k = 0$.

\begin{algorithm}[ht]
	\caption{Finding $root(\mathcal{C})$}
	\begin{algorithmic}
		\Function{root}{$\mathcal{C}$}
		\Comment{$\mathcal{C} = \{C_1, \dots, C_K\}$}
		\For{$k \gets 1,m$}
		\State $(a, b, L) \gets C_k$
		\If{$a = 1 \land b = m$}
		\State \Return $k$
		\EndIf
		\EndFor
		\State \Return 0
		\EndFunction
	\end{algorithmic}
\end{algorithm}

Before we can calculate $A^{(k)}$, we must first add all constraints in $\mathcal{C}^{(k)} \setminus \{C_k\}$, which means that we have to calculate $A^{(k')}$ for all $k'$ with $C_{k'} \in \mathcal{C}^{(k)} \setminus \{C_k\}$. Let us investigate this set more carefully. We have
\begin{align*}
	\mathcal{C}^{(k)} & = \{C_{k'} \mid a_k \leq a_{k'} < b_{k'} \leq b_k\}        \\
	                  & = \{C_{k'} \mid \interval(C_{k'}) \subseteq \interval(C_k)\} \\
	                  & = \{C_{k'} \mid C_{k'} \sqsubset C_k\} \cup \{C_k\}
\end{align*}

Therefore, we can say that $A^{(k)}$ (at least indirectly) depends on $A^{(k')}$, if $C_{k'} \sqsubset C_k$. For intuition let us consider the Hasse diagram of the poset $(\mathcal{C} \cup \{C_r\}, \sqsubset)$, which can be seen as a dependency graph. An example for this can be seen in \autoref{fig_constraint_graph}. The constraints are again drawn as arcs, with the dashed arc representing the newly introduced constraint $C_0$. The Hasse diagram is overlayed in red.

\begin{figure}[ht]
	\centering
	\begin{tikzpicture}[tree_node/.style={fill=red, midway, circle, scale=0.4},text_node/.style={text height=.6em}]
		\node[text_node] (P) {$p = {}$};
		\node[text_node] (P1) [right= of P]{\texttt{x}};
		\node[text_node] (P2) [right of = P1]{\texttt{y}};
		\node[text_node] (P3) [right of = P2]{\texttt{z}};
		\node[text_node] (P4) [right of = P3]{\texttt{y}};
		\node[text_node] (P5) [right of = P4]{\texttt{x}};
		\node[text_node] (P6) [right of = P5]{\texttt{x}};
		\node[text_node] (P7) [right of = P6]{\texttt{z}};
		\node[text_node] (P8) [right of = P7]{\texttt{y}};
		\node[text_node] (P9) [right of = P8]{\texttt{x}};

		\draw (P1.north) to[bend left = 60] node[tree_node] (C1) {} (P7.north);
		\draw (P7.north) to[bend left = 60] node[tree_node] (C2) {} (P9.north);
		\draw (P1.north) to[bend left = 60] node[tree_node] (C3) {} (P5.north);
		\draw (P5.north) to[bend left = 60] node[tree_node] (C4) {} (P7.north);
		\draw (P1.north) to[bend left = 60] node[tree_node] (C5) {} (P2.north);
		\draw (P2.north) to[bend left = 60] node[tree_node] (C6) {} (P3.north);
		\draw (P4.north) to[bend left = 60] node[tree_node] (C7) {} (P5.north);
		\draw (P5.north) to[bend left = 60] node[tree_node] (C8) {} (P6.north);
		\draw (P8.north) to[bend left = 60] node[tree_node] (C9) {} (P9.north);

		\draw[red, thick] (C1) -- (C3);
		\draw[red, thick] (C1) -- (C4);
		\draw[red, thick] (C2) -- (C9);
		\draw[red, thick] (C3) -- (C5);
		\draw[red, thick] (C3) -- (C6);
		\draw[red, thick] (C3) -- (C7);
		\draw[red, thick] (C4) -- (C8);

		\draw[dashed] (P1.north) to[bend left = 60] node[tree_node] (C0) {} (P9.north);
		\draw[red, thick] (C0) -- (C1);
		\draw[red, thick] (C0) -- (C2);
	\end{tikzpicture}
	\caption{Hasse diagram of constraints}\label{fig_constraint_graph}
\end{figure}
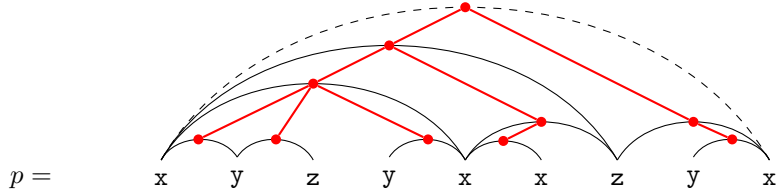

We immediately notice that the above diagram is an in-tree, i.e., a directed rooted tree with all edges pointing towards the root (in this case $C_r$). We can show that this is true in general by proving that there is exactly one path from each node to the root: We have $C_k \sqsubset C_r$ for all $k \in [K] \setminus \{r\}$, thus, by \autoref{lem_hasse}, there is a path from every node to the root. Also, assume that we had $C_k \sqsubsetdot C_p$ and $C_k \sqsubsetdot C_{p'}$ for pairwise distinct $k, p, p' \in [K] \cup \{r\}$, then $\interval(C_p) \cap \interval(C_{p'}) \supseteq \interval(C_k) \neq \emptyset$. Since $C_p$ and $C_{p'}$ cannot intersect, we have either $C_p \sqsubset C_{p'}$ or $C_{p'} \sqsubset C_p$ which contradicts $C_k \sqsubsetdot C_{p'}$ or $C_k \sqsubsetdot C_p$ respectively. From this, it follows that every constraint is covered by at most one other constraint and, thus, every node has at most one outgoing edge. Now, if we consider two (not necessarily distinct) paths from some node $C_k$ to $C_r$ and their longest common prefix ends on the node $C_{k'}$, then they cannot both continue in different directions because there is only one outgoing edge. Therefore, we must have $k' = r$ and since $C_r$ has no outgoing edge (no constraint contains $C_r$) both paths must end here and be identical.

If we consider two siblings $C_k$ and $C_{k'}$ with parent $C_p$, we see that both of them are covered by $C_p$. Thus, by an argument similar to the one above, they cannot be comparable w.r.t.\ $\sqsubset$. Therefore, their intervals do not intersect and must be comparable w.r.t.\ the interval order. Using this, the Hasse diagram can be seen as an ordered tree, where the children of any node are ordered. This ordering also automatically occurs in \autoref{fig_constraint_graph}.

Now let us consider pre-order Depth-first search on this tree starting from the root and the corresponding relation $<_T$, where $C_k <_T C_{k'}$ if $C_k$ is visited (not necessarily immediately) before $C_{k'}$. We now want to calculate this relation without relying on the covering relation: For two constraints $C_k \neq C_{k'}$ we have two different cases. If they are comparable w.r.t.\ $\sqsubset$, then $C_k <_T C_{k'}$ is true if and only if $C_{k'} \sqsubset C_k$, because we are traversing in pre-order. In this case, we obtain
\[
	(a_k < a_{k'} \land b_{k'} \leq b_k) \quad \lor \quad (a_k \leq a_{k'} \land b_{k'} < b_k)
\]
If they are not comparable, then we can consider the paths from $C_k$ and $C_{k'}$ to the root. Let $C_p$ and $C_{p'}$ be the elements of the paths preceeding the lowest common ancestor of $C_k$ and $C_{k'}$. Now $\interval(C_p) \supseteq \interval(C_k)$ (this even works if $k = p$) and $\interval(C_{p'}) \supseteq \interval(C_{k'})$. It is $C_k <_T C_{k'}$, if and only if $C_p <_T C_{p'}$ and therefore $\interval(C_p) < \interval(C_{p'})$ (note that $C_p$ and $C_{p'}$ are siblings). Then we have
\[
	b_k \leq b_p \leq a_{p'} \leq a_{k'}
\]
and, since $a_k < b_k$ and $a_{k'} < b_{k'}$, this implies
\[
	a_k < a_{k'} \land b_k < b_{k'}
\]
The two statements are indeed equivalent: In the second one, the constraints cannot be comparable w.r.t.\ $\sqsubset$ and therefore, $\interval(C_k) \cap \interval(C_{k'}) = [a_{k'}, b_k-1]$ must be empty, i.e., $b_k \leq a_{k'}$. Combining the two cases yields
\begin{align*}
    C_k <_T C_{k'} &\iff (a_k < a_{k'} \land b_{k'} > b_k) \lor (a_k < a_{k'} \land b_{k'} \leq b_k) \lor (a_k \leq a_{k'} \land b_{k'} < b_k) \\
    &\iff a_k < a_{k'} \lor (a_k \leq a_{k'} \land b_{k'} < b_k)
\end{align*}

With the preorder of the constraints calculated, constructing the tree can now be done as follows:
We visit all the nodes in the order given by $<_T$, starting with node $C_r$. If we are currently at some node $C_k$ and $C_{k'}$ is the next constraint in the preorder, then there are two possibilities: If $C_{k'} \sqsubset C_k$, then we can add $C_{k'}$ as the next child of $C_{k}$ and move to $C_{k'}$. Otherwise, $C_k$ does not have any more children and we can move to the parent of $C_k$ and try again. Note that $C_k <_T C_{k'}$ implies $a_k \leq a_{k'}$ and thus we only have to check $a_{k'} < b_k$ to distinguish between these two cases. \autoref{algo_children} implements this idea to calculate
\[
    children(k) \colonequals \{k' \mid C_{k'} \sqsubsetdot C_k\}
\]

\begin{algorithm}[ht]
	\caption{Determining $children(k)$ for all $k \in [K] \cup \{r\}$}\label{algo_children}
	\begin{algorithmic}
		\Function{children}{$\mathcal{C}$, $r$}
		\Comment{$\mathcal{C} = \{C_1, \dots, C_K\}$, $r = root(\mathcal{C})$}
		\State $C \gets [K] \setminus \{r\}$
		\State Sort $C$ with respect to $<_T$
		\For{$k \in [K] \cup \{r\}$}
		\State $c[k] \gets []$
		\Comment{$c$ is an array of lists}
		\EndFor
        \State $S \gets$ \Call{NewEmptyStack}{{}}
        \State \Call{push}{$S$, $r$}
		\For{$k' \in C$}
		\Comment{Iterate over $C$ in order}
		\Repeat
		\State $k \gets$ \Call{pop}{$S$}
		\Comment{Moving up the tree}
        \Until{$a_{k'} < b_k$}
		\Comment{Until $C_{k'} \sqsubset C_k$}
		\State \Call{push}{$S$, $k$}
		\Comment{Restore $k$}
		\State \Call{push}{$S$, $k'$}
		\Comment{Add $k'$ to the tree (as a child of $k$)}
		\State Append $k'$ to the end of $c[k]$
		\Comment{Put $k'$ as a child of $k$}
		\EndFor
		\State \Return $c$
		\EndFunction
	\end{algorithmic}
\end{algorithm}

Hereby, the stack maintains the path from the root to the current node and pushing and popping to and from the stack corresponds to moving down or up the tree.

With the tree now calculated, let us see how it is useful in computing $A^{(k)}$. As noted before, $A^{(k)}$ depends on $A^{(k')}$ if $C_{k'} \sqsubset C_k$. It therefore seems reasonable to view the tree as a dependency graph, where, to calculate $A^{(k)}$, we can use exactly the matrices $A^{(k')}$, where $C_{k'}$ is a child of $C_k$ in the tree. This can be done as follows:

\begin{lemma}
	Let $k \in [K]$, $d = |children(k)|$ . If $d = 0$, it is
	\[
		A^{(k)} = B^{(a_k, b_k)} \land mask(C_k)
	\]
	Otherwise we can order $children(k)$ as $\{k_1, \dots, k_d\}$ with $b_{k_v} \leq a_{k_{v+1}}$ for all $v \in [d-1]$ and
	\[
		A^{(k)} = \left(B^{(a_k, a_{k_1})} \cdot A^{(k_1)} \cdot B^{(b_{k_1}, a_{k_2})} \cdot A^{(k_2)} \cdots A^{(k_d)} \cdot B^{(b_{k_d}, b_k)}\right) \land mask(C_k)
	\]
\end{lemma}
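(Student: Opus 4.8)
The plan is to combine the three primitives just established --- the unconstrained partial-embeddings matrices $B^{(s,t)} = P(s,t,\emptyset)$, the constraining lemma (Lemma~\ref{lem_constraining_embeddings}), and the joining lemma (Lemma~\ref{lem_joining_embeddings}) --- with the tree structure of $(\mathcal{C}\cup\{C_r\},\sqsubset)$ analysed above. The structural ingredient I would isolate first is that $\mathcal{C}^{(k)}$ splits as the \emph{disjoint} union $\mathcal{C}^{(k)} = \{C_k\}\cup\mathcal{C}^{(k_1)}\cup\cdots\cup\mathcal{C}^{(k_d)}$, where $C_{k_1},\dots,C_{k_d}$ are the children of $C_k$ in the Hasse diagram. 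Inclusion $\supseteq$ is clear since $\interval(C_{k_v})\subseteq\interval(C_k)$; for $\subseteq$, any $C_{k'}\in\mathcal{C}^{(k)}\setminus\{C_k\}$ satisfies $\interval(C_{k'})\subsetneq\interval(C_k)$ (two distinct constraints cannot share an interval), hence $C_{k'}\sqsubset C_k$, and the unique upward path from $C_{k'}$ to $C_k$ in the Hasse tree passes through some child $C_{k_v}$ with $\interval(C_{k'})\subseteq\interval(C_{k_v})$, so $C_{k'}\in\mathcal{C}^{(k_v)}$; disjointness holds because a constraint lying in both $\mathcal{C}^{(k_v)}$ and $\mathcal{C}^{(k_w)}$ with $v\neq w$ would have its nonempty interval inside $\interval(C_{k_v})\cap\interval(C_{k_w})=\emptyset$ (siblings are $\sqsubset$-incomparable, hence have disjoint intervals), and $C_k\notin\mathcal{C}^{(k_v)}$ since $\interval(C_k)\not\subseteq\interval(C_{k_v})$. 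Ordering the children so that $\interval(C_{k_1})<\cdots<\interval(C_{k_d})$, these same facts give the chain $a_k\le a_{k_1}<b_{k_1}\le a_{k_2}<\cdots<b_{k_d}\le b_k$.

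The case $d=0$ is then immediate: if $C_k$ covers no constraint, then by finiteness of the poset and Lemma~\ref{lem_hasse} there is no $C_{k'}$ with $C_{k'}\sqsubset C_k$ at all, so $\mathcal{C}^{(k)}=\{C_k\}$, and Lemma~\ref{lem_constraining_embeddings} with $\mathcal{C}'=\emptyset$ yields $A^{(k)}=P(a_k,b_k,\{C_k\})=P(a_k,b_k,\emptyset)\land mask(C_k)=B^{(a_k,b_k)}\land mask(C_k)$.

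For $d\ge1$, I would first peel off $C_k$ by Lemma~\ref{lem_constraining_embeddings}, giving $A^{(k)}=P\!\left(a_k,b_k,\mathcal{C}^{(k_1)}\cup\cdots\cup\mathcal{C}^{(k_d)}\right)\land mask(C_k)$, and then dismantle the remaining matrix by applying Lemma~\ref{lem_joining_embeddings} successively at the split points $a_{k_1},b_{k_1},a_{k_2},b_{k_2},\dots,a_{k_d},b_{k_d}$. The hypotheses of the joining lemma are checked routinely at each step: the sibling intervals $\interval(C_{k_v})$ are pairwise disjoint and disjoint from every ``gap'' segment $[a_k,a_{k_1}],[b_{k_1},a_{k_2}],\dots,[b_{k_d},b_k]$, so each such gap carries the empty constraint set and contributes exactly a factor $B^{(\cdot,\cdot)}$, while the $v$-th ``block'' segment $[a_{k_v},b_{k_v}]$ receives precisely $\mathcal{C}^{(k_v)}$ (which is contained in $\restr{\mathcal{C}}{[a_{k_v},b_{k_v}]}$ by definition), and $P(a_{k_v},b_{k_v},\mathcal{C}^{(k_v)})=A^{(k_v)}$. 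Collecting the factors produces exactly the displayed product, and conjoining $mask(C_k)$ finishes the case.

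The one point requiring care --- rather than a genuine obstacle --- is that Lemma~\ref{lem_joining_embeddings} is stated with strict inequalities $s<t<u$, whereas consecutive split points may coincide (namely $a_k=a_{k_1}$, $b_{k_v}=a_{k_{v+1}}$, or $b_{k_d}=b_k$). I would handle this by noting that a degenerate factor $B^{(t,t)}=P(t,t,\emptyset)$ is the diagonal matrix whose $(i,i)$-entry is the truth value of $w[i]=p[t]$, and that for any $Q=P(s,t,\mathcal{C}')$ one has $Q\cdot B^{(t,t)}=Q$ (an entry of $Q$ can be true only if the image of $t$ already carries the symbol $p[t]$), and symmetrically $B^{(t,t)}\cdot Q'=Q'$ for $Q'=P(t,u,\mathcal{C}'')$; hence such factors may be inserted or deleted freely, so the displayed formula holds verbatim and the repeated application of the joining lemma need only be carried out across the genuinely interior split points. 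Equivalently, one could first prove a short ``generalised associativity'' statement --- that for any weakly increasing sequence of split points together with a compatible partition of the constraints, the product of the corresponding $P$-matrices equals the $P$-matrix of the whole interval --- by induction on the number of split points, using the strict joining lemma plus this identity, and then specialise it.
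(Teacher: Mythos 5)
Your proof is correct and follows essentially the same route as the paper: peel off $C_k$ via the constraining lemma, decompose $\mathcal{C}^{(k)}\setminus\{C_k\}$ along the children's subtrees in the Hasse tree, and assemble the displayed product with repeated applications of the joining lemma. The only real difference is that you explicitly treat the degenerate factors $B^{(t,t)}$ that arise when consecutive split points coincide (e.g.\ $a_k=a_{k_1}$ or $b_{k_v}=a_{k_{v+1}}$), a point the paper's proof passes over by invoking the joining lemma, which is stated only for strict inequalities; your observation that $B^{(t,t)}$ acts as an identity on adjacent partial-embeddings matrices is correct and closes that small gap.
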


\begin{proof}
	In the first case, the expression on the right is $P(a_k, b_k, \{C_k\})$ by \autoref{lem_constraining_embeddings}. Since $d = 0$, $C_k$ is a leaf of the Hasse diagram. Therefore $\{C_{k'} \mid C_{k'} \sqsubset C_k\}$ is empty and $A^{(k)} = P(a_k, b_k, \{C_k\})$.

	In the second case, the matrix product on the right calculates
	\[
		P\left(a_k, b_k, \bigcup_{v=1}^d \mathcal{C}^{(k_v)}\right)
	\]
	by \autoref{lem_joining_embeddings}. Now $\mathcal{C}^{(k_v)}$ contains exactly all the nodes from the subtree rooted in $C_{k_v}$. Therefore, after applying the mask (i.e., adding $C_k$) we have all constraints from the subtree rooted in $C_k$. This is exactly the constraints in $\mathcal{C}^{(k)}$ and therefore, the right side is equal to $A^{(k)}$.
\end{proof}

Using this lemma leads to the following algorithm (note that if $c$ is calculated via \textproc{children}, the constraints are already in the correct order):

\begin{algorithm}[ht]
	\caption{Calculating $A^{(k)}$}
	\begin{algorithmic}
		\Function{Ak}{$k$}
		\State $v \gets |c[k]|$
		\Comment{$c$ was calculated beforehand using \Call{children}{$\mathcal{C}$}}
		\If{$v = 0$}
		\State $A \gets$ \Call{Bst}{$a_k$, $b_k$}
		\Else
		\State $A \gets$ \Call{Bst}{$a_k$, $a_{c[k][1]}$}
		\For{$d \gets 1, v-1$}
		\State $A \gets A \cdot {}$\Call{Ak}{$c[k][d]$}
		\State $A \gets A \cdot {}$\Call{Bst}{$b_{c[k][d]}, a_{c[k][d+1]}$}
		\EndFor
		\State $A \gets A \cdot {}$\Call{Ak}{$c[k][v]$}
		\State $A \gets A \cdot {}$\Call{Bst}{$b_{c[k][v]}, b_k$}
		\EndIf
		\For{$i \gets 1, n$}
		\For{$j \gets 1, n$}
		\If{$j - i \notin L_k$}
		\State $A_{ij} \gets \texttt{false}$
		\EndIf
		\EndFor
		\EndFor
		\State \Return $A$
		\EndFunction
	\end{algorithmic}
\end{algorithm}

With this, \autoref{algo_problem} solves \textsc{Match} with pairwise non-intersecting length constraints (assuming preprocessed constraints).

\begin{algorithm}
    \caption{Solving \textsc{Match}}\label{algo_problem}
	\begin{algorithmic}
		\Function{Match}{$w$, $p$, $\mathcal{C}$}
		\State $r \gets$ \Call{root}{$\mathcal{C}$}
		\State $c \gets$ \Call{children}{$r$, $\mathcal{C}$}
		\State $A \gets$ \Call{Ak}{$r$}
		\Comment{Calculating $A^{(r)}$}
		\For{$i \gets 1, n$}
		\For{$j \gets 1, n$}
		\If{$A_{ij}$}
		\State \Return \texttt{yes}
		\EndIf
		\EndFor
		\EndFor
		\State \Return \texttt{no}
		\EndFunction
	\end{algorithmic}
\end{algorithm}

\subsection{Complexity Analysis}

The preprocessing phase, takes $O(n^2K)$ (for $\semi$-constrains). Then, in structuring the constraints, finding the root is done in $\mathcal{O}(K)$, the children can be calculated in $\mathcal{O}(K + n)$ time (the majority of time is needed for sorting the constraints w.r.t.\ the $<_T $ order, and this can be done by radix sort). When calculating the matrix $A^{(r)}$, most of the time is spent on matrix multiplications. For computing $A^{(k)}$ we need $2 \cdot |children(k)|$ of them (ignoring those in recursive calls). Since every $k' \in [K] \setminus \{r\}$ is contained in $children(k)$ for exactly one value of $k$ and calculating $A^{(r)}$ involves calling $A^{(k)}$ for all $k \in [K] \cup \{r\}$, the total number of matrix multiplications is $2 \cdot |[K] \setminus \{r\}| \in \mathcal{O}(K)$. Checking the entries of $A^{(r)}$ is then again easy and can be done in $\mathcal{O}(n^2)$ time. 
Therefore, we have shown the statement of Theorem \ref{thm:NonIntersectUB}, for $\semi$-constraints:
$\matchProb_{\semi, {\mathcal H}}$ can be solved in time $O(n^\omega K)$, where $K$ is the number of constraints in the input set of constraints ${\mathcal C}$.

For $\reg$-constraints, we have that the preoprocessing is done in $O(n^2K\log \log n)$. The rest of the algorithm is identical, and we obtain that $\matchProb_{\reg, {\mathcal H}}$ can be solved in time $O(n^\omega K + n^2\log \log n K)$, where $K$ is the number of constraints in the input set of constraints ${\mathcal C}$. Given the current state of the art regarding fast matrix multiplication, we conclude that  $\matchProb_{\reg, {\mathcal H}}$ can be solved in time $O(n^\omega K)$ 

The number of constraints $K$ is also upper bound by $O(n)$, as the graph representation of ${\mathcal C}$ is an outerplanar graph, with an edge for each constraint, and the number of edges in such a graph is linear in the number of vertices. Thus, we also get the following:

\begin{theorem}\label{thm:NonIntersectUBwithoutK}
For ${\mathcal L}\in\{\reg,\semi\}$, then $\matchProb_{{\mathcal L}, {\mathcal H}}$ can be solved in time $O(n^{\omega+1})$.
\end{theorem}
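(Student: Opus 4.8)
The plan is to obtain \cref{thm:NonIntersectUBwithoutK} as a direct corollary of \cref{thm:NonIntersectUB} by supplying a linear bound on the number of constraints. \cref{thm:NonIntersectUB} already gives running times $O(n^\omega K)$ for $\matchProb_{\semi,{\mathcal H}}$ and $O(n^\omega K + n^2 K\log\log n)$ for $\matchProb_{\reg,{\mathcal H}}$, where $K = |{\mathcal C}|$; so the only ingredient still needed is a bound of the form $K = O(n)$ valid for every instance of $\matchProb_{{\mathcal L},{\mathcal H}}$.

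First I would invoke the characterisation of non-intersecting constraint sets discussed in \cref{sec:vsn-and-intervals}: if ${\mathcal C}\in{\mathcal H}$ is a set of non-intersecting constraints on a pattern $p$ of length $m$, then the graph $G_p$ representing ${\mathcal C}$ is outerplanar and admits a Hamiltonian cycle. Each constraint $(a,b,L)\in{\mathcal C}$ contributes exactly one edge $(a,b)$ to $G_p$, and, since ${\mathcal C}$ contains at most one $(i,j)$-constraint for each pair, distinct constraints contribute distinct edges, so $|{\mathcal C}|\le|E_p|$. As an outerplanar graph on $m$ vertices has at most $2m-3$ edges, we conclude $K = |{\mathcal C}| \le 2m-3 = O(m) = O(n)$, using $m\le n$.

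Substituting $K = O(n)$ into the bounds of \cref{thm:NonIntersectUB} then yields $O(n^{\omega+1})$ for the semilinear case immediately, and $O(n^{\omega+1} + n^3\log\log n)$ for the regular case; since the current state of the art on the matrix-multiplication exponent gives $\omega+1 > 3$ (indeed $\omega < 2.373$), the additive preprocessing term $n^3\log\log n$ is asymptotically dominated by $n^{\omega+1}$, so the regular case also runs in $O(n^{\omega+1})$. I do not expect a genuine obstacle here: the argument is just a substitution together with the outerplanar edge count, and the only thing worth double-checking is precisely that the regular-constraint preprocessing term is absorbed, which holds under the same fast-matrix-multiplication assumption the rest of the paper relies on.
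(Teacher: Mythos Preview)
Your proposal is correct and follows essentially the same route as the paper: bound $K$ by $O(m)=O(n)$ using the outerplanarity of the constraint graph (at most $2m-3$ edges) and substitute into \cref{thm:NonIntersectUB}, absorbing the $n^{2}K\log\log n$ preprocessing term for regular constraints under the current-state-of-the-art assumption $\omega>2$. The only cosmetic point is that your parenthetical ``indeed $\omega<2.373$'' is the wrong inequality for the absorption step---what you actually need and use is $\omega>2$, which is exactly what the paper invokes as well.
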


\section{Non-Intersecting Constraints: Lower Bound (Proof of \cref{thm:NonIntersectLB}) } \label{sec:NonIntersectLB}

This proof is given for $\semi$-constraints only.

We will now use a fine-grained complexity reduction from $3$-$\OV$ to prove a conditional lower bound for the problem of finding subsequences with non-intersecting length constraints. In particular, we are given three sets $A = \{\vec{a}_1, \dots, \vec{a}_n\}$, $B = \{\vec{b}_1, \dots, \vec{b}_n\}$ and $C = \{\vec{c}_1, \dots, \vec{c}_n\}$ with elements from $\{0, 1\}^d$, i.e., $d$-dimensional boolean vectors, and want to determine whether there are $i^*, j^*, k^* \in [n]$, such that $\sum_{\ell=1}^d\vec{a}_{i^*}[\ell] \cdot \vec{b}_{j^*}[\ell] \cdot \vec{c}_{k^*}[\ell]=0$. This should be achieved by encoding our input into a pattern $p$, a text $w$ and a set of constraints $\mathcal{C}$, such that the answer to the $3$-$\OV$ problem is yes, if and only if $p$ is a $\mathcal{C}$-subsequence of $w$.

Reducing $k$-$\OV$ problems to variants of the subsequence problem is nothing new, as this has already been done in \cite{DayEtAl2022} and \cite{BoundedSubsequences}. In the previous cases though, the authors used $2$-$\OV$ instead, which only has two sets of vectors. Thus, it is possible to encode one of the sets in the pattern and the other one in the text such that finding a subsequence corresponds to finding a pair of orthogonal vectors. Here, due to the higher complexity of the problem, using $2$-$\OV$ is not an option. $3$-$\OV$, on the other hand, requires to encode \emph{three} sets of vectors in only the pattern and the text.
Therefore, we will have to divide our pattern and text into two parts each and then use one of the parts to compare $A$ and $B$ and the other to compare $A$ and $C$. Using some constraints, we can then connect both parts and check whether an orthogonal triple of vectors exists.

For $i \in [n], u \in [d]$ let $a_{iu}$, $b_{iu}$ and $c_{iu}$ be the $u$-th component of $\vec{a}_i$, $\vec{b}_i$ and $\vec{c}_i$ respectively. We will now use the characters $\{0, 1, \#, @, \$, \S\}$ to encode the problem:
First, we define bit encodings $\mathsf{C}_p$ and $\mathsf{C}_w$ with $\mathsf{C}_p(0) \colonequals \#0$, $\mathsf{C}_p(1) \colonequals \#1$, $\mathsf{C}_w(0) \colonequals \#011$ and $\mathsf{C}_w(1) \colonequals \#001$. Now let $\vec{v} = (v_1, \dots, v_d) \in \{0, 1\}^d$. Then we can extend this to vectors:
\[
	\mathsf{C}_p(\vec{v}) \colonequals @\, \mathsf{C}_p(v_1)\, \mathsf{C}_p(v_2)\, \dots\, \mathsf{C}_p(v_d)\, \$
\]
and
\[
	\mathsf{C}_w(\vec{v}) \colonequals @\, \mathsf{C}_w(v_1)\, \mathsf{C}_w(v_2)\, \dots\, \mathsf{C}_w(v_d)\, \$
\]

Note that $\mathsf{C}_p(\vec{v})$ is a subsequence of $\mathsf{C}_w(\vec{v}')$ for any $\vec{v}, \vec{v}' \in \{0, 1\}^d$. This is because any two vectors can still make an orthogonal set with a third vector, so we cannot exclude any vectors preemptively.

We also define $\overline{\mathsf{C}}_p$ and $\overline{\mathsf{C}}_w$ as mirrored versions of these encodings (both for bits and vectors), where the order of the characters in the output is inverted. We can then use the original encoding for one part of the pattern and the text and the mirrored encoding for the other part.

Lastly, we only really need one of the vectors in $A$, but since we cannot know which one ahead of time, we need to encode all of them. Therefore, we define $w_0 \colonequals (\mathcal{C}_w(\vec{0}))^{n-1}$ and the mirrored version $\overline{w}_0$. These will serve as sections in $w$ where all but one of the vectors of $A$ can be embedded. We now have the following pattern and text:
\[
	p \colonequals \overline{\mathsf{C}}_p(\vec{a}_n)\, \dots\, \overline{\mathsf{C}}_p(\vec{a}_1)\, \S\, \mathsf{C}_p(\vec{a}_1)\, \dots\, \mathsf{C}_p(\vec{a}_n)
\]
and
\[
	w \colonequals \overline{w}_0\, \#\, \overline{\mathsf{C}}_w(\vec{c}_n)\, \dots\, \overline{\mathsf{C}}_w(\vec{c}_1)\, \#\, \overline{w}_0\, \S\, w_0\, \#\, \mathsf{C}_w(\vec{b}_1)\, \dots\, \mathsf{C}_w(\vec{b}_n)\, \#\, w_0
\]

Note that the $\S$ character only occurs once in each of the text and pattern and divides them into two parts each. For any subsequence, the $\S$ character of $p$ must be matched to the $\S$ character in $w$. Therefore, the left and right sides of $p$ will be embedded in the left and right sides of $w$ respectively.

From here on, we will call the encoding of a vector (in either $p$ or $w$) a block. Hereby, the $\overline{\mathsf{C}}_w(\vec{0})$ and $\mathsf{C}_w(\vec{0})$ blocks will be referred to as the \emph{zero-blocks}, while we call $\mathsf{C}_w(\vec{b}_j)$ and $\overline{\mathsf{C}}_w(\vec{c}_k)$ for $j, k \in [n]$ \emph{non-zero-blocks} (even though $\vec{b}_j$ or $\vec{c}_k$ could be zero vectors). The encoding of the $u$-th bit of that vector (with $u \in [d]$) will be called the \emph{$u$-th component} of the block. The $@$ character will be referred to as the \emph{start} of the block, even for the mirrored encoding, while the $\$$ character represents the \emph{end} of the block.

As can easily be verified, $p$ is always a subsequence of $w$. We will now incrementally add different sets of constraints to our currently empty set of constraints $\mathcal{C}$ such that an embedding of $p$ into $w$ that satisfies these constraints must fulfill certain properties and, ultimately, the existence of an embedding satisfying all these constraints will show that there are three orthogonal vectors in the set. In order to define the constraints, we will use $s_i$ and $\overline{s}_i$ for denoting the starting positions of $\mathsf{C}_p(\vec{a}_i)$ and $\overline{\mathsf{C}}_p(\vec{a}_i)$ for $i \in [n]$ (i.e., $p[s_i] = @ = p[\overline{s}_i]$).

First, we want to make sure that any block in the pattern gets embedded into only one block of the text and that any components of that block get embedded into their respective counterpart in the other block. This can be achieved by adding gap constraints as in \autoref{fig_gap_constraints}: For some block of $p$, the initial $@$ character of this block is embedded into an $@$ character initiating some block in $w$. Then the $\#$ character at the start of the first component should be embedded right next to the $@$ character, to ensure that it is embedded into the start of the first component of the block in $w$. This is achieved by the constraints
\begin{equation}
    \label{eq_constraint_gap_start}
    (s_i, s_i + 1, L(0)), (\overline{s}_i - 1, \overline{s}_i, L(0)) \qquad \text{for $i \in [n]$}
\end{equation}
The rest of the characters in the block can have at most 2 positions between them (note that $\{0, 1, 2\} = L(0) \cup  L(1) \cup L(2) )$ is a semilinear set):
\begin{equation}
    \label{eq_constraint_gap_middle}
	(s_i + v, s_i + v + 1, \{0, 1, 2\}), (\overline{s}_i - v - 1, \overline{s}_i - v, \{0, 1, 2\}) \qquad \text{for $i \in [n], v \in [2d]$}
\end{equation}
This ensures that the $\#$ characters in neighboring components are at most 6 characters apart and therefore must be mapped to $\#$ characters of adjacent components in $w$. At the same time, this still allows the digit in some component in $p$ to be embedded into any of the digits in the respective component in $w$ (see \autoref{fig_gap_constraints}). Finally, the $\$$ at the end of the block gets mapped to its counterpart in the other block. Because we added all these constraints in a mirrored manner to the blocks on the left side of $p$ this behaviour is ensured on both sides.

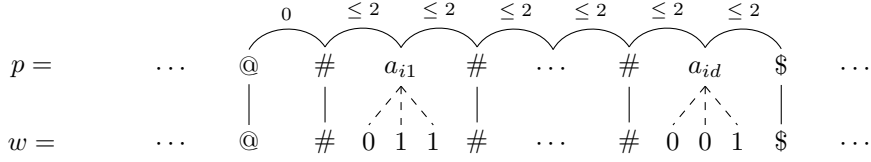
\begin{figure}[ht]
	\centering
	\begin{tikzpicture}[every node/.style={text height=.6em}]
		\node (P) {$p = {}$};
		\node (P1) [right= of P]{$\cdots$};
		\node (P2) [right of = P1]{$@$};
		\node (P3) [right of = P2]{$\#$};
		\node (P4) [right of = P3]{$a_{i1}$};
		\node (P5) [right of = P4]{$\#$};
		\node (P6) [right of = P5]{$\cdots$};
		\node (P7) [right of = P6]{$\#$};
		\node (P8) [right of = P7]{$a_{id}$};
		\node (P9) [right of = P8]{$\$$};
		\node (P10) [right of = P9]{$\cdots$};

		\node (W) [below of = P]{$w = {}$};
		\node (W1) [below of = P1]{$\cdots$};
		\node (W2) [below of = P2]{$@$};
		\node (W3) [below of = P3]{$\#$};
		\node (W4) [below of = P4]{$1$};
		\node (W4l) [left=0mm of W4]{$0$};
		\node (W4r) [right=0mm of W4]{$1$};
		\node (W5) [below of = P5]{$\#$};
		\node (W6) [below of = P6]{$\cdots$};
		\node (W7) [below of = P7]{$\#$};
		\node (W8) [below of = P8]{$0$};
		\node (W8l) [left=0mm of W8]{$0$};
		\node (W8r) [right=0mm of W8]{$1$};
		\node (W9) [below of = P9]{$\$$};
		\node (W10) [below of = P10]{$\cdots$};

		\draw (P2.north) to[bend left = 60] node[midway, above] {\scriptsize $0$} (P3.north);
		\draw (P3.north) to[bend left = 60] node[midway, above] {\scriptsize $\leq 2$} (P4.north);
		\draw (P4.north) to[bend left = 60] node[midway, above] {\scriptsize $\leq 2$} (P5.north);
		\draw (P5.north) to[bend left = 60] node[midway, above] {\scriptsize $\leq 2$} (P6.north);
		\draw (P6.north) to[bend left = 60] node[midway, above] {\scriptsize $\leq 2$} (P7.north);
		\draw (P7.north) to[bend left = 60] node[midway, above] {\scriptsize $\leq 2$} (P8.north);
		\draw (P8.north) to[bend left = 60] node[midway, above] {\scriptsize $\leq 2$} (P9.north);

		\draw (P2.south) to (W2.north);
		\draw (P3.south) to (W3.north);
		\draw[dashed] (P4.south) to (W4.north);
		\draw[dashed] (P4.south) to (W4l.north);
		\draw[dashed] (P4.south) to (W4r.north);
		\draw (P5.south) to (W5.north);
		\draw (P7.south) to (W7.north);
		\draw[dashed] (P8.south) to (W8.north);
		\draw[dashed] (P8.south) to (W8l.north);
		\draw[dashed] (P8.south) to (W8r.north);
		\draw (P9.south) to (W9.north);
	\end{tikzpicture}
	\caption{Gap constraints on $\mathsf{C}_p(\vec{a}_i)$. The constraint depicted on each edge concerns the string which can be found between the embeddings of the two positions that edge connects.}\label{fig_gap_constraints}
\end{figure}

Now, to satisfy our current set of constraints, blocks need to be embedded in other blocks, but while there are $n$ blocks on either side of $p$ (where the left side is to the left of $\S$, and the right side to the right of $\S$), each side of $w$ (defined similarly w.r.t. the position of $\S$) has $2n - 2$ zero-blocks. Therefore, right now all the vectors in $A$ can be matched to zero vectors and we cannot gain any information about the vectors in $B$ or $C$. To fix this, we connect adjacent blocks on the right side of $p$ with a new gap constraint: The embedded positions of the ending of one block and the start of the next block can then either be right next to each other, or they have to be an even distance apart (i.e., have a string of odd length between them): 
\begin{equation}
    \label{eq_constraint_gap_end}
    (s_i - 1, s_i, L(0) \cup L(1;2)\}) \qquad \text{for $i \in [2, n]$}
\end{equation}
To see why this works, let us consider starting and ending positions of blocks in $w$. Because the blocks are of even length, blocks directly beside each other either both have even starting positions or both have odd starting positions. Because we introduced extra $\#$ characters at some positions in $w$, this means that all the zero-blocks start at even positions (remember that we defined the start of the mirrored block $\overline{\mathsf{C}}_w(\vec{0})$ to be at the position of the $@$ character, therefore on the right end of the block), while all the non-zero-blocks start at odd positions. Also, if a block starts at an odd position, it ends on an even position and vice versa.
Because we can just barely not fit all the blocks on the right side of $p$ into the same $w_0$ section, since that only contains $n-1$ blocks, at least one of them must be in a different section. Thus, we get two blocks that are not embedded directly besides each other, which means that the ending of the first block and the start of the second block must be either both on even or both on odd positions. Therefore, not both these blocks can be embedded into a $w_0$ section. We thus get $i^*, j^* \in [n]$, such that $\mathsf{C}_p(\vec{a}_{i^*})$ gets embedded in $\mathsf{C}_w(\vec{b}_{j^*})$.

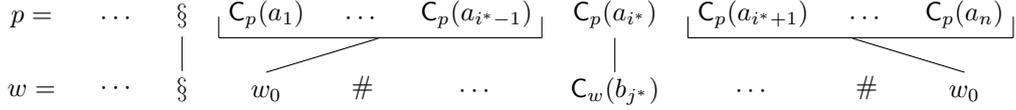
\begin{figure}[ht]
	\centering
	\begin{tikzpicture}[every node/.style={text height=.6em}]
		\node (P) {$p = {}$};
		\node (P1) [right=.8em of P]{$\cdots$};
		\node (P2) [right=.8em of P1]{$\S$};
		\node (P3) [right=.8em of P2]{$\mathsf{C}_p(a_1)$};
		\node (P4) [right=.8em of P3]{$\cdots$};
		\node (P5) [right=.8em of P4]{$\mathsf{C}_p(a_{i^*-1})$};
        \node (P6) [right=.8em of P5]{$\mathsf{C}_p(a_{i^*})$};
		\node (P7) [right=.8em of P6]{$\mathsf{C}_p(a_{i^*+1})$};
		\node (P8) [right=.8em of P7]{$\cdots$};
		\node (P9) [right=.8em of P8]{$\mathsf{C}_p(a_n)$};

		\node (W) [below=1.3em of P]{$w = {}$};
		\node (W1) [below=1.3em of P1]{$\cdots$};
		\node (W2) [below=1.3em of P2]{$\S$};
		\node (W3) [below=1.3em of P3]{$w_0$};
		\node (W4) [below=1.3em of P4]{$\#$};
		\node (W5) [below=1.3em of P5]{$\cdots$};
        \node (W6) [below=1.3em of P6]{$\mathsf{C}_w(b_{j^*})$};
		\node (W7) [below=1.3em of P7]{$\cdots$};
		\node (W8) [below=1.3em of P8]{$\#$};
		\node (W9) [below=1.3em of P9]{$w_0$};

		\draw (P2.south) to (W2.north);
		\draw (P3.west) to (P3.south west);
		\draw (P5.east) to (P5.south east);
		\draw (P3.south west) to node[midway] (Pl) [inner sep=0pt] {} (P5.south east);
		\draw (Pl) to (W3.north);
		\draw (P6.south) to (W6.north);
		\draw (P7.west) to (P7.south west);
		\draw (P9.east) to (P9.south east);
		\draw (P7.south west) to node[midway] (Pr) [inner sep=0pt] {} (P9.south east);
		\draw (Pr) to (W9.north);
	\end{tikzpicture}
	\caption{Embedding the blocks on the right side}\label{fig_gap_constraints2}
\end{figure}

Next we want to make sure that on the left side $\overline{\mathsf{C}}_p(\vec{a}_{i^*})$ is embedded into $\overline{\mathsf{C}}_w(\vec{c}_{k^*})$ for some $k^* \in [n]$. Since we saw earlier that both $\overline{\mathsf{C}}_w(\vec{c}_{k^*})$ and $\mathsf{C}_w(\vec{b}_{j^*})$ start on an odd position (while all zero-blocks start on even positions), we can require that for any $i \in [n]$, the embedded starting positions of $\overline{\mathsf{C}}_p(\vec{a}_i)$ and $\mathsf{C}_p(\vec{a}_i)$ have to be an even number of positions apart (i.e., have an odd number of symbols between them):
\begin{equation}
    \label{eq_constraint_bridge}
    (\overline{s}_i, s_i, L(1;2)) \qquad \text{for $i \in [n]$}
\end{equation}
This ensures that, for an arbitrary $i \in [n]$, $\overline{\mathsf{C}}_p(\vec{a}_i)$ and $\mathsf{C}_p(\vec{a}_i)$ either both get embedded in zero-blocks or both get embedded in non-zero-blocks and as such $\overline{\mathsf{C}}_p(\vec{a}_{i^*})$ is embedded into some block $\overline{\mathsf{C}}_w(\vec{c}_{k^*})$.

Having selected three vectors, $\vec{a}_{i^*}$, $\vec{b}_{j^*}$ and $\vec{c}_{k^*}$, it remains to be seen how to ensure that these three vectors are indeed orthogonal to each other. The three vectors are orthogonal, if and only if for all $u \in [d]$, at least one of $a_{i^*u}$ or $b_{j^*u}$ or $c_{k^*u}$ is zero. Conversely, the three vectors are not orthogonal, if and only if there exists $u \in [d]$, such that $a_{i^*u} = b_{j^*u} = c_{k^*u} = 1$. In this case, $\overline{\mathsf{C}}_w(c_{k^*u}) = 100\#$ and $\mathsf{C}_w(b_{j^*u}) = \#001$ only contain a single ``1'' character. The distance of these characters to the start of their respective blocks is the same and the length of the factor found between the starting positions of the two blocks is guaranteed to be odd due to constraint \cref{eq_constraint_bridge}. Therefore, the factor between these two characters is also odd and we can prevent such an embedding to occur by adding a constraint between the ``1'' in $\overline{\mathsf{C}}_p(a_{iu})$ and the ``1'' in $\mathsf{C}_p(a_{iu})$ that requires the distance between their embeddings to be even (see \autoref{fig_constraint_bridge}):
\begin{equation}
	\label{eq_constraint_ortho}
    (\overline{s}_i - 4u - 1, s_i + 4u + 1, L(0; 2)) \qquad \text{for $i \in [n], u \in [d]$ with $a_{iu} = 1$}
\end{equation}
Note that this constraint is only added if $a_{iu} = 1$, because otherwise the case $a_{iu} = b_{ju} = c_{ku} = 0$ would also be prevented.

\begin{figure}[ht]
	\centering
	\begin{tikzpicture}[every node/.style={text height=.6em}]
        \node (P) {$p = {}$};
		\node (lP7) [right=2mm of P]{$\cdots$};
		\node (lP6) [right=2mm of lP7]{$1$};
		\node (lP5) [right=2mm of lP6]{$\#$};
		\node (lP4) [right=2mm of lP5]{$0$};
		\node (lP3) [right=2mm of lP4]{$\#$};
		\node (lP2) [right=2mm of lP3]{$@$};
		\node (lP1) [right=2mm of lP2]{$\cdots$};
		\node (P0)  [right=2mm of lP1]{$\S$};
		\node (rP1) [right=2mm of P0]{$\cdots$};
		\node (rP2) [right=2mm of rP1]{$@$};
		\node (rP3) [right=2mm of rP2]{$\#$};
		\node (rP4) [right=2mm of rP3]{$0$};
		\node (rP5) [right=2mm of rP4]{$\#$};
		\node (rP6) [right=2mm of rP5]{$1$};
		\node (rP7) [right=2mm of rP6]{$\cdots$};

        \node (W) [below of = P]{$w = {}$};
		\node (lW7) [below of = lP7]{$\cdots$};
		\node (lW6) [below of = lP6, inner xsep=0pt]{$0$};
        \node (lW6l) [left=0mm of lW6, inner xsep=0pt] {$1$};
        \node (lW6r) [right=0mm of lW6, inner xsep=0pt] {$0$};
		\node (lW5) [below of = lP5]{$\#$};
		\node (lW4) [below of = lP4, inner xsep=0pt]{$1$};
        \node (lW4l) [left=0mm of lW4, inner xsep=0pt] {$1$};
        \node (lW4r) [right=0mm of lW4, inner xsep=0pt] {$0$};
		\node (lW3) [below of = lP3]{$\#$};
		\node (lW2) [below of = lP2]{$@$};
		\node (lW1) [below of = lP1]{$\cdots$};
		\node (W0)  [below of = P0]{$\S$};
        \node (rW1) [below of = rP1]{$\cdots$};
		\node (rW2) [below of = rP2]{$@$};
		\node (rW3) [below of = rP3]{$\#$};
		\node (rW4) [below of = rP4, inner xsep=0pt]{$0$};
        \node (rW4l) [left=0mm of rW4, inner xsep=0pt] {$0$};
        \node (rW4r) [right=0mm of rW4, inner xsep=0pt] {$1$};
		\node (rW5) [below of = rP5]{$\#$};
		\node (rW6) [below of = rP6, inner xsep=0pt]{$1$};
        \node (rW6l) [left=0mm of rW6, inner xsep=0pt] {$0$};
        \node (rW6r) [right=0mm of rW6, inner xsep=0pt] {$1$};
		\node (rW7) [below of = rP7]{$\cdots$};

		\draw (lP2.north) to[bend left = 60] node[midway, above] {odd} (rP2.north);
		\draw (lP6.north) to[bend left = 60] node[midway, above] {even} (rP6.north);

		\draw (P0.south) -- (W0.north);
		\draw (lP2.south) -- (lW2.north);
		\draw (rP2.south) -- (rW2.north);
        \draw (lP3.south) -- (lW3.north);
        \draw (rP3.south) -- (rW3.north);
        \draw (lP4.south) -- (lW4r.north);
        \draw[dashed] (rP4.south) -- (rW4.north);
        \draw[dashed] (rP4.south) -- (rW4l.north);
        \draw (lP5.south) -- (lW5.north);
        \draw (rP5.south) -- (rW5.north);
        \draw (lP6.south) -- (lW6l.north);
        \draw (rP6.south) -- (rW6.north);
	\end{tikzpicture}
    \caption{Constraints between $\overline{\mathsf{C}}_p(\vec{a}_i)$ and $\mathsf{C}_p(\vec{a}_i)$ (exemplified)}\label{fig_constraint_bridge}
\end{figure}
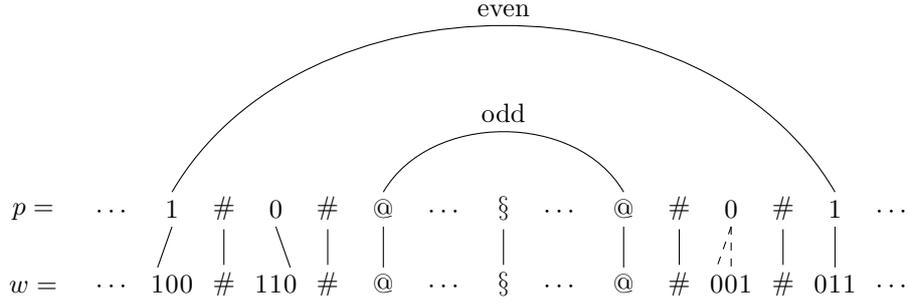

With $\mathcal{C}$ being the set of constraints \cref{eq_constraint_gap_start,eq_constraint_gap_middle,eq_constraint_gap_end,eq_constraint_bridge,eq_constraint_ortho}, we can say that if $p$ is a $\mathcal{C}$-subsequence of $w$, then the answer to the $3$-$\OV$ problem must be positive. Note that none of the constraints intersect: The gap constraints, i.e.\ \cref{eq_constraint_gap_start,eq_constraint_gap_middle,eq_constraint_gap_end}, cannot intersect with any other constraint, because their respective intervals are singletons and therefore any other interval either contains it or does not intersect with it. The other constraints are connecting the left and right part of $p$. Here, for any constraint, both positions of the constraint (in $p$) have the same distance to the center of $p$. Therefore, any two of these constraints have to be comparable w.r.t.\ $\sqsubset$ and thus, cannot intersect.

We still need to show that whenever there is a triple $\vec{a}_{i^*}$, $\vec{b}_{j^*}$ and $\vec{c}_{k^*}$ of orthogonal vectors, $p$ must be a $\mathcal{C}$-subsequence of $w$. Let us consider the right side of $p$: Since the $w_0$ sections each contain $n-1$ zero-blocks, we can embed the $i^*-1 \leq n-1$ blocks $\mathsf{C}_p(\vec{a_1}), \dots, \mathsf{C}_p(\vec{a}_{i^*-1})$ into the left $w_0$ section and the $n - i^* \leq n - 1$ blocks $\mathsf{C}_p(\vec{a}_{i^*+1}), \dots, \mathsf{C}_p(\vec{a}_n)$ into the right $w_0$ section. Obviously, we embed $\mathsf{C}_p(\vec{a}_{i^*})$ into $\mathsf{C}_w(\vec{b}_{j^*}$) (see \autoref{fig_gap_constraints2}). Then the constraints \cref{eq_constraint_gap_start,eq_constraint_gap_middle,eq_constraint_gap_end} are satisfied. Also, if we embed the left side of $p$ analogously with $\overline{\mathsf{C}}_p(\vec{a}_{i^*})$ being embedded into $\overline{\mathsf{C}}_w(\vec{c}_{k^*})$, then for all $i \in [n]$ the embedded starting positions of $\overline{\mathsf{C}}_p(\vec{a}_{i})$ and $\mathsf{C}_p(\vec{a}_{i})$ are either both even (for $i = i^*$) or both odd (for $i \neq i^*$) and therefore always an even distance apart, satisfying \eqref{eq_constraint_bridge}. The last set of constraints can also be satisfied: Because $\vec{a}_{i^*}$, $\vec{b}_{j^*}$ and $\vec{c}_{k^*}$ are orthogonal, for any $u \in [d]$, either one of $a_{i^*u}$, $b_{j^*u}$ or $c_{k^*u}$ must be zero. If $a_{i^*u} = 0$, then there is no constraint that needs to be satisfied. If $b_{j^*u}$ or $c_{k^*u}$ are zero, then $\overline{\mathsf{C}}_w(c_{k^*u})$ or $\mathsf{C}_w(b_{j^*u})$ respectively contain two adjacent ``1'' characters, which means that it is always possible to embed the ``1'' characters in $\overline{\mathsf{C}}_p(a_{i^*u})$ and $\mathsf{C}_p(a_{i^*u})$ at an even distance from each other (i.e., they have an odd factor in between them; for an example, see \autoref{fig_constraint_bridge}).
For all $i \neq i^*$, the triple $(\vec{a}_{i}, \vec{0}, \vec{0})$ is orthogonal and constraints \eqref{eq_constraint_ortho} can be satisfied as well.

Therefore, the answer to the $3$-$\OV$ problem instance is true, \emph{if and only if} $p$ is a $\mathcal{C}$-subsequence of $w$. With this, we can show the result stated in Theorem \ref{thm:NonIntersectLB}: determining whether $p$ is a $\mathcal{C}$-subsequence of $w$ cannot be done in $\mathcal{O}(|w|^g|\mathcal{C}|^h)$ time with $g + h < 3$, unless \textsf{SETH} (\autoref{hyp_SETH}) fails.

Indeed, assume that there was such an algorithm. Then we could solve $3$-$\OV$ with three sets of $n$ $d$-dimensional vectors by converting these into $p$, $w$ and $\mathcal{C}$ as shown above (in $\mathcal{O}(nd)$ time). We then have $|w|, |\mathcal{C}| \in \mathcal{O}(nd)$ and can figure out whether $p$ is a $\mathcal{C}$-subsequence of $w$ in $\mathcal{O}(n^{g+h}d^{g+h})$ time, which in turn gives the solution to the $3$-$\OV$ problem instance. In summary, we would be able to solve $3$-$\OV$ in $\mathcal{O}(n^{g + h}\operatorname{poly}(d))$ time (with $g + h < 3$) and, by \autoref{lem_k_OV}, the Strong Exponential Time Hypothesis would have to be false.

This concludes the proof of Theorem \ref{thm:NonIntersectLB} for the case of $\semi$-constraints. However, all constraints in the reduction can be also stated as DFAs with a constant number of states, so the result canonically holds for $\reg$-constraints.

\end{document}